\tikzset{negated/.style={
		decoration={markings,
			mark= at position 0.5 with {
				\node[transform shape] (tempnode) {$\slash$};
			}
		},
		postaction={decorate}
	}
}
 \setlist[itemize]{wide=0pt, nosep}
\newmathcommand\catSet{\mathsf{Set}}
\newmathcommand\catVec{\mathsf{Vec}}
\newmathcommand\K{\kl[\K]{\mathbb{K}}}
\knowledge\K{notion}
\newmathcommand\catKl{\mathsf{Kl}}
\newmathcommand\monT{\kl[\monT]{\mathcal{T}}}
\knowledge\monT{notion}
\newmathcommand\catC{\kl[\catC]{\mathcal{C}}}
\knowledge\catC{notion}
\newmathcommand\catCintro{\kl[\catCintro]{\mathcal{C}}}
\knowledge\catCintro{notion}
\newmathcommand\catK{\kl[\catK]{\mathcal{K}}}
\knowledge\catK{notion}
\newmathcommand\catTargC{\kl[\catTargC]{\mathcal{C}}}
\knowledge\catTargC{notion}
\newmathcommand\catI{\kl[\catI]{\mathcal{I}_{\alphA^*}}}
\knowledge\catI{notion} 
\newmathcommand\catIstart{\kl[\catIstart]{\mathcal{I}}}
\knowledge\catIstart{notion} 
\newmathcommand\catIintro{\kl[\catIintro]{\mathcal{I}}}
\knowledge\catIintro{notion} 
\newmathcommand\catO{\kl[\catO]{\mathcal{O}_{A^*}}}
\knowledge\catO{notion} 
\newmathcommand\catOstart{\kl[\catOstart]{\mathcal{O}}}
\knowledge\catOstart{notion} 
\newmathcommand\catIQT[1][\setQ,\setT]{\kl[\catIQT]{\mathcal{I}}_{#1}}
\knowledge\catIQT{notion}
\newmathcommand\catOQT[1][\setQ,\setT]{\kl[\catOQT]{\mathcal{O}}_{#1}}
\knowledge\catOQT{notion}
\newmathcommand\langL{\kl[\langL]{\mathcal{L}}}
\knowledge\langL{notion}
\newmathcommand\targL{\kl[\targL]{\mathcal{L}}}
\knowledge\targL{notion}
\newmathcommand\iStar{\kl[\iStar]{i^*}}
\knowledge\iStar{notion}
\newmathcommand\langLQT[1][\setQ,\setT]{\kl[\langLQT]{\mathcal{L}_{#1}}}
\knowledge\langLQT{notion}
\newmathcommand\catAuto{\kl[\catAuto]{\mathsf{Auto}}}
\knowledge\catAuto{notion}
\newmathcommand\setQintro{\kl[\setQintro]{Q}}
\knowledge\setQintro{notion}
\newmathcommand\setTintro{\kl[\setTintro]{T}}
\knowledge\setTintro{notion}
\newmathcommand\setQ{\kl[\setQ]{Q}}
\knowledge\setQ{notion}
\newmathcommand\setT{\kl[\setT]{T}}
\knowledge\setT{notion}
\newmathcommand\setQpr{\kl[\setQpr]{Q'}}
\knowledge\setQpr{notion}
\newmathcommand\setTpr{\kl[\setTpr]{T'}}
\knowledge\setTpr{notion}
\newmathcommand\setQafter{\kl[\setQafter]{Q'}}
\knowledge\setQafter{notion}
\newmathcommand\setTafter{\kl[\setTafter]{T'}}
\knowledge\setTafter{notion}
\newmathcommand\res[1][T',T]{\kl[\res]{\mathit{res_{#1}}}}
\knowledge\res{notion}
\newmathcommand\inc[1][Q, Q']{\kl[\inc]{\mathit{inc_{#1}}}}
\knowledge\inc{notion}
\newmathcommand\catAutoQT[1][\setQ,\setT]{\kl[\catAutoQT]{\mathsf{Auto}_{#1}}}
\knowledge\catAutoQT{notion}
\newmathcommand\factQT[1][\setQ,\setT]{\kl[\factQT]{\Im_{#1}}}
\knowledge\factQT{notion}
\newmathcommand\fact[1]{\kl[\factQT]{\Im_{#1}}}
\knowledge\fact{notion}
\newmathcommand\eWhile{\kl[\eWhile]{e}}
\knowledge\eWhile{notion}
\newmathcommand\mWhile{\kl[\mWhile]{m}}
\knowledge\mWhile{notion}
\newmathcommand\fLem{\kl[\fLem]{f}}
\knowledge\fLem{notion}
\newmathcommand\eLem{\kl[\eLem]{e}}
\knowledge\eLem{notion}
\newmathcommand\mLem{\kl[\mLem]{m}}
\knowledge\mLem{notion}
\newmathcommand\fLemOv{\kl[\fLemOv]{\overline{f}}}
\knowledge\fLemOv{notion}
\newmathcommand\eLemOv{\kl[\eLemOv]{\overline{e}}}
\knowledge\eLemOv{notion}
\newmathcommand\mLemOv{\kl[\mLemOv]{\overline{m}}}
\knowledge\mLemOv{notion}
\newmathcommand\fLemUn{\kl[\fLemUn]{\underline{f}}}
\knowledge\fLemUn{notion}
\newmathcommand\eLemUn{\kl[\eLemUn]{\underline{e}}}
\knowledge\eLemUn{notion}
\newmathcommand\mLemUn{\kl[\mLemUn]{\underline{m}}}
\knowledge\mLemUn{notion}
\newmathcommand\eWhileOv{\kl[\eWhileOv]{\overline{e}}}
\knowledge\eWhileOv{notion}
\newmathcommand\mWhileOv{\kl[\mWhileOv]{\overline{m}}}
\knowledge\mWhileOv{notion}
\newmathcommand\eWhileUn{\kl[\eWhileUn]{\underline{e}}}
\knowledge\eWhileUn{notion}
\newmathcommand\mWhileUn{\kl[\mWhileUn]{\underline{m}}}
\knowledge\mWhileUn{notion}
\newrobustcmd\FunL{\ensuremath{\kl[\FunL]{\mathtt{FunL\mkern-1.5mu}^*}\mkern-.5mu}}
\knowledge\FunL{notion}
\newrobustcmd\Lstar{\mathtt{L\mkern-1.5mu}^*\mkern-.5mu}
\newrobustcmd\symbLeft{\kl[\symbLeft]\triangleright}
\knowledge\symbLeft{notion}
\newrobustcmd\symbRight{\kl[\symbRight]\triangleleft}
\knowledge\symbRight{notion}
\newrobustcmd\symbA{\kl[\symbA]a}
\knowledge\symbA{notion}
\newrobustcmd\symbLeftQ[1][q]{\kl[\symbLeftQ]{\triangleright #1}}
\knowledge\symbLeftQ{notion}
\newrobustcmd\symbRightT[1][t]{\kl[\symbRightT]{#1 \triangleleft}}
\knowledge\symbRightT{notion}
\newrobustcmd\symbEpsilon{\kl[\symbEpsilon]\varepsilon}
\knowledge\symbEpsilon{notion}
\newrobustcmd\symbAQT[1][a]{\kl[\symbAQT]{#1}}
\knowledge\symbAQT{notion}
\newrobustcmd\symbLeftQInit{\kl[\symbLeftQInit]{\triangleright q_{\mathit{init}}}}
\knowledge\symbLeftQInit{notion}
\newrobustcmd\symbRightTInit{\kl[\symbRightTInit]{t\triangleleft_{\mathit{init}}}}
\knowledge\symbRightTInit{notion}
\newrobustcmd\symbEpsilonInit{\kl[\symbEpsilonInit]{\varepsilon_{\mathit{init}}}}
\knowledge\symbEpsilonInit{notion}
\newrobustcmd\symbAQTInit{\kl[\symbAQTInit]{a_{\mathit{init}}}}
\knowledge\symbAQTInit{notion}
\newrobustcmd\symbLeftQFinal{\kl[\symbLeftQFinal]{\triangleright q_{\mathit{final}}}}
\knowledge\symbLeftQFinal{notion}
\newrobustcmd\symbRightTFinal{\kl[\symbRightTFinal]{t\triangleleft_{\mathit{final}}}}
\knowledge\symbRightTFinal{notion}
\newrobustcmd\symbEpsilonFinal{\kl[\symbEpsilonFinal]{\varepsilon_{\mathit{final}}}}
\knowledge\symbEpsilonFinal{notion}
\newrobustcmd\symbAQTFinal{\kl[\symbAQTFinal]{a_{\mathit{final}}}}
\knowledge\symbAQTFinal{notion}
\newrobustcmd\symbLeftQMin[1][q]{\kl[\symbLeftQMin]{\triangleright #1_{\mathit{min}}}}
\knowledge\symbLeftQMin{notion}
\newrobustcmd\symbRightTMin[1][t]{\kl[\symbRightTMin]{#1\triangleleft_{\mathit{min}}}}
\knowledge\symbRightTMin{notion}
\newrobustcmd\symbEpsilonMin{\kl[\symbEpsilonMin]{\varepsilon_{\mathit{min}}}}
\knowledge\symbEpsilonMin{notion}
\newmathcommand\symbEpsilonMinQT[1][\setQ,\setT]{\kl[\symbEpsilonMinQT]{\varepsilon_{\mathit{min}}^{#1}}}
\knowledge\symbEpsilonMinQT{notion}
\newrobustcmd\symbAQTMin[1][a]{\kl[\symbAQTMin]{#1_{\mathit{min}}}}
\knowledge\symbAQTMin{notion}
\newrobustcmd\symbLeftQInitPr{\kl[\symbLeftQInitPr]{\triangleright q_{\mathit{init'}}}}
\knowledge\symbLeftQInitPr{notion}
\newrobustcmd\symbRightTInitPr{\kl[\symbRightTInitPr]{t\triangleleft_{\mathit{init'}}}}
\knowledge\symbRightTInitPr{notion}
\newrobustcmd\symbEpsilonInitPr{\kl[\symbEpsilonInitPr]{\varepsilon_{\mathit{init'}}}}
\knowledge\symbEpsilonInitPr{notion}
\newrobustcmd\symbAQTInitPr{\kl[\symbAQTInitPr]{a_{\mathit{init'}}}}
\knowledge\symbAQTInitPr{notion}
\newrobustcmd\symbLeftQInitDub{\kl[\symbLeftQInitDub]{\triangleright q_{\mathit{init'}}}}
\knowledge\symbLeftQInitDub{notion}
\newrobustcmd\symbRightTInitDub{\kl[\symbRightTInitDub]{t\triangleleft_{\mathit{init'}}}}
\knowledge\symbRightTInitDub{notion}
\newrobustcmd\symbEpsilonInitDub{\kl[\symbEpsilonInitDub]{\varepsilon_{\mathit{init'}}}}
\knowledge\symbEpsilonInitDub{notion}
\newrobustcmd\symbAQTInitDub{\kl[\symbAQTInitDub]{a_{\mathit{init'}}}}
\knowledge\symbAQTInitDub{notion}
\newrobustcmd\symbLeftQInitTr{\kl[\symbLeftQInitTr]{\triangleright q_{\mathit{init'}}}}
\knowledge\symbLeftQInitTr{notion}
\newrobustcmd\symbRightTInitTr{\kl[\symbRightTInitTr]{t\triangleleft_{\mathit{init'}}}}
\knowledge\symbRightTInitTr{notion}
\newrobustcmd\symbEpsilonInitTr{\kl[\symbEpsilonInitTr]{\varepsilon_{\mathit{init'}}}}
\knowledge\symbEpsilonInitTr{notion}
\newrobustcmd\symbAQTInitTr{\kl[\symbAQTInitTr]{a_{\mathit{init'}}}}
\knowledge\symbAQTInitTr{notion}
\newrobustcmd\symbLeftQFinalPr{\kl[\symbLeftQFinalPr]{\triangleright q_{\mathit{final'}}}}
\knowledge\symbLeftQFinalPr{notion}
\newrobustcmd\symbRightTFinalPr{\kl[\symbRightTFinalPr]{t\triangleleft_{\mathit{final'}}}}
\knowledge\symbRightTFinalPr{notion}
\newrobustcmd\symbEpsilonFinalPr{\kl[\symbEpsilonFinalPr]{\varepsilon_{\mathit{final'}}}}
\knowledge\symbEpsilonFinalPr{notion}
\newrobustcmd\symbAQTFinalPr{\kl[\symbAQTFinalPr]{a_{\mathit{final'}}}}
\knowledge\symbAQTFinalPr{notion}
\newrobustcmd\symbLeftQFinalDub{\kl[\symbLeftQFinalDub]{\triangleright q_{\mathit{final'}}}}
\knowledge\symbLeftQFinalDub{notion}
\newrobustcmd\symbRightTFinalDub{\kl[\symbRightTFinalDub]{t\triangleleft_{\mathit{final'}}}}
\knowledge\symbRightTFinalDub{notion}
\newrobustcmd\symbEpsilonFinalDub{\kl[\symbEpsilonFinalDub]{\varepsilon_{\mathit{final'}}}}
\knowledge\symbEpsilonFinalDub{notion}
\newrobustcmd\symbAQTFinalDub{\kl[\symbAQTFinalDub]{a_{\mathit{final'}}}}
\knowledge\symbAQTFinalDub{notion}
\newrobustcmd\symbLeftQFinalTr{\kl[\symbLeftQFinalTr]{\triangleright q_{\mathit{final'}}}}
\knowledge\symbLeftQFinalTr{notion}
\newrobustcmd\symbRightTFinalTr{\kl[\symbRightTFinalTr]{t\triangleleft_{\mathit{final'}}}}
\knowledge\symbRightTFinalTr{notion}
\newrobustcmd\symbEpsilonFinalTr{\kl[\symbEpsilonFinalTr]{\varepsilon_{\mathit{final'}}}}
\knowledge\symbEpsilonFinalTr{notion}
\newrobustcmd\symbAQTFinalTr{\kl[\symbAQTFinalTr]{a_{\mathit{final'}}}}
\knowledge\symbAQTFinalTr{notion}
\newrobustcmd\symbLeftQMinPr[1][q]{\kl[\symbLeftQMinPr]{\triangleright #1_{\mathit{min'}}}}
\knowledge\symbLeftQMinPr{notion}
\newrobustcmd\symbRightTMinPr[1][t]{\kl[\symbRightTMinPr]{#1\triangleleft_{\mathit{min'}}}}
\knowledge\symbRightTMinPr{notion}
\newrobustcmd\symbEpsilonMinPr{\kl[\symbEpsilonMinPr]{\varepsilon_{\mathit{min'}}}}
\knowledge\symbEpsilonMinPr{notion}
\newrobustcmd\symbAQTMinPr[1][a]{\kl[\symbAQTMinPr]{#1_{\mathit{min'}}}}
\knowledge\symbAQTMinPr{notion}
\newrobustcmd\symbLeftQMinDub[1][q]{\kl[\symbLeftQMinDub]{\triangleright #1_{\mathit{min'}}}}
\knowledge\symbLeftQMinDub{notion}
\newrobustcmd\symbRightTMinDub[1][t]{\kl[\symbRightTMinDub]{#1\triangleleft_{\mathit{min'}}}}
\knowledge\symbRightTMinDub{notion}
\newrobustcmd\symbEpsilonMinDub{\kl[\symbEpsilonMinDub]{\varepsilon_{\mathit{min'}}}}
\knowledge\symbEpsilonMinDub{notion}
\newrobustcmd\symbAQTMinDub[1][a]{\kl[\symbAQTMinDub]{#1_{\mathit{min'}}}}
\knowledge\symbAQTMinDub{notion}
\newrobustcmd\symbLeftQMinTr[1][q]{\kl[\symbLeftQMinTr]{\triangleright #1_{\mathit{min'}}}}
\knowledge\symbLeftQMinTr{notion}
\newrobustcmd\symbRightTMinTr[1][t]{\kl[\symbRightTMinTr]{#1\triangleleft_{\mathit{min'}}}}
\knowledge\symbRightTMinTr{notion}
\newrobustcmd\symbEpsilonMinTr{\kl[\symbEpsilonMinTr]{\varepsilon_{\mathit{min'}}}}
\knowledge\symbEpsilonMinTr{notion}
\newrobustcmd\symbAQTMinTr[1][a]{\kl[\symbAQTMinTr]{#1_{\mathit{min'}}}}
\knowledge\symbAQTMinTr{notion}
\newrobustcmd\objIn{\kl[\objIn]{\mathsf{in}}}
\knowledge\objIn{notion}
\newrobustcmd\objOut{\kl[\objOut]{\mathsf{out}}}
\knowledge\objOut{notion}
\newrobustcmd\objMin{\kl[\objMin]{\mathsf{Min}}}
\knowledge\objMin{notion}
\newrobustcmd\objMinStart{\kl[\objMinStart]{\mathsf{Min}}} 
\knowledge\objMinStart{notion}
\newrobustcmd\objMinim{\kl[\objMinim]{\mathsf{Min}^{\mathit{St}}}}
\knowledge\objMinim{notion}
\newrobustcmd\objMinimQT[1][\setQ,\setT]{\kl[\objMinimQT]{\mathsf{Min}_{#1}}}
\knowledge\objMinimQT{notion}
\newrobustcmd\objMinQT[2][\setQ,\setT]{\kl[\objMinQT#2]{\mathsf{Min}^{#2}_{#1}}}
\newrobustcmd\objStates{\kl[\objStates]{\mathsf{st}}}
\knowledge\objStates{notion}
\newmathcommand\epiOne{\kl[\epiOne]{e^1_{\mathit{min}}}}  
\knowledge\epiOne{notion}
\newmathcommand\epiTwo{\kl[\epiTwo]{e^2_{\mathit{min}}}}
\knowledge\epiTwo{notion}
\newmathcommand\monoOne{\kl[\monoOne]{m^1_{\mathit{min}}}} 
\knowledge\monoOne{notion}
\newmathcommand\monoTwo{\kl[\monoTwo]{m^2_{\mathit{min}}}}
\knowledge\monoTwo{notion}
\newmathcommand\epiOnePr{\kl[\epiOnePr]{e^{1}_{\mathit{min'}}}}
\knowledge\epiOnePr{notion}
\newmathcommand\epiTwoPr{\kl[\epiTwoPr]{e^{2}_{\mathit{min'}}}}
\knowledge\epiTwoPr{notion}
\newmathcommand\monoOnePr{\kl[\monoOnePr]{m^{1}_{\mathit{min'}}}}
\knowledge\monoOnePr{notion}
\newmathcommand\monoTwoPr{\kl[\monoTwoPr]{m^{2}_{\mathit{min'}}}}
\knowledge\monoTwoPr{notion}
\newmathcommand\epiOneDub{\kl[\epiOneDub]{e^{1}_{\mathit{min'}}}}
\knowledge\epiOneDub{notion}
\newmathcommand\epiTwoDub{\kl[\epiTwoDub]{e^{2}_{\mathit{min'}}}}
\knowledge\epiTwoDub{notion}
\newmathcommand\monoOneDub{\kl[\monoOneDub]{m^{2}_{\mathit{min'}}}}
\knowledge\monoOneDub{notion}
\newmathcommand\monoTwoDub{\kl[\monoTwoDub]{m^{2}_{\mathit{min'}}}}
\knowledge\monoTwoDub{notion}
\newmathcommand\epiOneTr{\kl[\epiOneTr]{e^{1}_{\mathit{min'}}}}
\knowledge\epiOneTr{notion}
\newmathcommand\epiTwoTr{\kl[\epiTwoTr]{e^{2}_{\mathit{min'}}}}
\knowledge\epiTwoTr{notion}
\newmathcommand\monoOneTr{\kl[\monoOneTr]{m^{1}_{\mathit{min'}}}}
\knowledge\monoOneTr{notion}
\newmathcommand\monoTwoTr{\kl[\monoTwoTr]{m^{2}_{\mathit{min'}}}}
\knowledge\monoTwoTr{notion}
\newmathcommand\epiGen{\kl[\epiGen]{\widetilde{e}}}
\knowledge\epiGen{notion}
\newmathcommand\monoGen{\kl[\monoGen]{\widetilde{m}}}
\knowledge\monoGen{notion}
\newmathcommand\epiMin{\kl[\epiMin]{e_{\mathit{min}}}}
\knowledge\epiMin{notion}
\newmathcommand\monoMin{\kl[\monoMin]{m_{\mathit{min}}}}
\knowledge\monoMin{notion}
\newmathcommand\epiGenBis{\kl[\epiGenBis]{\widetilde{e}}}
\knowledge\epiGenBis{notion}
\newmathcommand\monoGenBis{\kl[\monoGenBis]{\widetilde{m}}}
\knowledge\monoGenBis{notion}
\newmathcommand\epiGenPr{\kl[\epiGenPr]{e'}}
\knowledge\epiGenPr{notion}
\newmathcommand\monoGenPr{\kl[\monoGenPr]{m'}}
\knowledge\monoGenPr{notion}
\newmathcommand\epiGenDub{\kl[\epiGenDub]{e'}}
\knowledge\epiGenDub{notion}
\newmathcommand\monoGenDub{\kl[\monoGenDub]{m'}}
\knowledge\monoGenDub{notion}
\newrobustcmd\objBiIn{\kl[\objBiIn]{\mathsf{in}}}
\knowledge\objBiIn{notion}
\newrobustcmd\objBiOut{\kl[\objBiOut]{\mathsf{out}}}
\knowledge\objBiOut{notion}
\newrobustcmd\objBiStates[1]{\kl[\objBiStates#1]{\mathsf{st}_{#1}}}
\newrobustcmd\hypQT[1][\setQ,\setT]{\kl[\hypQT]{\mathcal{H}(#1)}}
\knowledge\hypQT{notion}
\newrobustcmd\Epi{\ensuremath{\mathcal{E}}}
\newrobustcmd\Mono{\ensuremath{\mathcal{M}}}
\newrobustcmd\autoInit{\kl[\autoInit]{\mathcal{A}_{\mathit{init}}}}
\knowledge\autoInit{notion}
\newmathcommand\langLacc{\kl[\langLacc]{\mathcal{L}?}}
\knowledge\langLacc{notion}
\newmathcommand\varEpsAcc{\kl[\varEpsAcc]{\varepsilon?}}
\knowledge\varEpsAcc{notion}
\newrobustcmd\autoFinal{\kl[\autoFinal]{\mathcal{A}_{\mathit{final}}}}
\knowledge\autoFinal{notion}
\newrobustcmd\autA{\kl[\autA]{\mathcal{A}}}
\knowledge\autA{notion}
\newrobustcmd\autB{\kl[\autB]{\mathcal{B}}}
\knowledge\autB{notion}
\newrobustcmd\catAutoLKlT{\texorpdfstring{\Auto(\langL_{\KlTrans})}{Auto(LKl(T))}}
\newrobustcmd\catAutoLEMT{\Auto(\langLEMTTrans)}
\newrobustcmd\monadTrans{\kl[\monadTrans]{\mathcal{T}}}
\newrobustcmd\KlTrans{\texorpdfstring{\kl[\KlTrans]{\mathsf{Kl}(\mathcal{T})}}{Kl(T)}}
\knowledge\KlTrans{notion}
\newrobustcmd\KlT{\KlTrans}
\newrobustcmd\EMT{\kl[\EMT]{\mathsf{EM}(\mathcal{T})}}
\newrobustcmd\langLKlT{\texorpdfstring{\lang L_{\KlTrans}}{LKl(T)}}
\newrobustcmd\EpiKlT{\kl[\EpiKlT]{\mathcal{E}_{\mathsf{Kl}(\mathcal{T})}}}
\knowledge\EpiKlT{notion}
\newrobustcmd\MonoKlT{\kl[\MonoKlT]{\mathcal{M}_{\mathsf{Kl}(\mathcal{T})}}}
\knowledge\MonoKlT{notion}
\newrobustcmd\semTrans[1]{\kl[\semTrans]{[\![}#1\kl[\semTrans]{]\!]}}
\newrobustcmd\FreeTrans{\kl[\FreeTrans]{F_{\mathcal{T}}}}
\newrobustcmd\FreeTransEM{\kl[\FreeTransEM]{F^{\mathcal{T}}}}
\newrobustcmd\UTrans{\kl[\UTrans]{U_{\mathcal{T}}}}
\newrobustcmd\UTransEM{\kl[\UTransEM]{U^{\mathcal{T}}}}
\newrobustcmd\langLSetTrans{\kl[\langLSetTrans]{\lang L_{\Set}}}
\newrobustcmd\langLEMTTrans{\kl[\langLEMTTrans]{\lang L_{\EMT}}}
\newrobustcmd\pstar{\mathrel{\kl[\pstar]{\star}}}
\newrobustcmd\alphA{\kl[\alphA]{A}}
\knowledge\alphA{notion}
\newrobustcmd\alphB{\kl[\alphB]{B}}
\knowledge\alphB{notion}
\newrobustcmd\nto{\nrightarrow}
\newrobustcmd{\can}{\kl[\can]{\mathit{can}}}
\knowledge\can{notion}
\newrobustcmd{\id}{\kl[\id]{\mathit{id}}}
\knowledge\id{notion}
\newrobustcmd\Irr{\kl[\Irr]{\mathsf{Irr}}}
\newrobustcmd\IrrAB{\kl[\Irr]{\mathsf{Irr}}(A^*,B^*)}
\knowledge\Irr{notion}
\newrobustcmd\red{\kl[\red]{\mathsf{red}}}
\knowledge\red{notion}
\newrobustcmd\lcp{\kl[\lcp]{\mathsf{lcp}}}
\newrobustcmd\redL{\red(L)}
\knowledge\lcp{notion}
\newrobustcmd\lcpbis{\kl[\lcpbis]{\mathsf{lcp}}}
\knowledge\lcpbis{notion}
\newrobustcmd\redbis{\kl[\redbis]{\mathsf{red}}}
\knowledge\redbis{notion}
\newrobustcmd\epirightarrow{\twoheadrightarrow}%
\newrobustcmd\monorightarrow{\rightarrowtail}
\newrobustcmd\proj[1]{\kl[\proj#1]{\pi_{#1}}}
\newrobustcmd\wmin{w_{\mathit{min}}}
 \title{Learning automata and transducers:\\a categorical approach}
 \titlerunning{Learning automata and transducers: a categorical
   approach} \author{Thomas Colcombet}{\textsc{Irif} /
   \textsc{Cnrs}}{thomas.colcombet@irif.fr}{https://orcid.org/0000-0001-6529-6963}
 {Supported by the European Research Council (ERC) under the European
   Union’s Horizon 2020 research and innovation programme (grant
   agreement No.670624) and the DeLTA ANR project (ANR-16-CE40-0007)}
 \author{Daniela Petrişan}{\textsc{Irif} / Université de
   Paris}{daniela.petrisan@irif.fr}{https://orcid.org/0000-0001-9712-930X}{}
\author{Riccardo Stabile}{Università degli Studi di Milano, Dipartimento di Matematica}{riccardo.stabile@yahoo.com}{}{Supported by the European Commission under the Erasmus+ programme for a five-month study period at Université de Paris}
\authorrunning{T. Colcombet, D. Petrişan and R. Stabile}
\keywords{Automata, transducer, learning, category}
\begin{document}

\maketitle

\begin{abstract} 
  In this paper, we present a categorical approach to learning
  automata over words, in the sense of the $\Lstar$-algorithm of
  Angluin. This yields a new generic $\Lstar$-like algorithm which can be
  instantiated for learning deterministic automata, automata weighted
  over fields, as well as subsequential transducers. The generic
  nature of our algorithm is obtained by adopting an approach in which
  automata are simply functors from a particular category representing
  words to a ``computation category''. We establish that the sufficient
  properties for yielding the existence of minimal automata (that were
  disclosed in a previous paper), in combination with some additional
  hypotheses relative to termination, ensure the correctness of our
  generic algorithm.
\end{abstract}

\section{Introduction}
\label{sec:intro}

Learning automata is a classical subject at the intersection of
machine learning and automata theory. It has found a wide range of
applications spanning from adaptive model checking, compositional
verification to learning network invariants or interface
specifications for Java classes. We refer the reader
to~\cite{LearningMeetsVerification} and the references therein for a
survey of such applications.

\AP The most famous learning algorithm for automata is certainly the
""$\Lstar$-algorithm"" of Angluin \cite{angluin87:learning}. Its goal is
to learn a regular language of words~$L$. For this, the algorithm
interacts with a ""teacher@@intro"" (an oracle) who knows~$L$ by
asking two kinds of queries: 
 
\vspace{-0.7em}
\begin{description}
  \itemAP[""Membership query@@intro""] it can ask whether a given word
  belongs to~$L$, or \itemAP[""Equivalence query@@intro""] it can
  provide a ""hypothesis automaton@@intro"" and ask the
  "teacher@@intro" whether this automaton recognizes~$L$ or not. If
  the answer is no, the "teacher@@intro" is bound to provide a
  ""counter-example word@@intro"", witnessing the non-equivalence.
\end{description}
The algorithm stops when the "teacher@@intro" agrees that the
"hypothesis automaton@@intro" recognizes the language~$L$. A key
property of the "$\Lstar$-algorithm" is that it terminates in time
polynomial in the size of the alphabet, of the minimal deterministic
automaton for~$L$, and of the longest "counter-example@@intro". Furthermore,
all candidate automata appearing during its execution (and hence in
particular the final one) are deterministic, complete, and minimal.
\vspace{-1em}
\paragraph*{The "$\Lstar$-algorithm"}
Let us illustrate the behaviour of this algorithm when it tries to
learn the language~$\{a\}$ over the alphabet~$\Sigma=\{a\}$.
\AP At each
step, the algorithm maintains two sets of words~$\intro*\setQintro,\intro*\setTintro$, starting
with~$\setQintro=\{\varepsilon\}$, $\setTintro=\{\varepsilon\}$. One can understand~$\setQintro$
as a set of words which identify states of the "hypothesis
automaton@@intro" under construction. The set~$\setTintro$ is used in order to
discover if words need to be distinguished by the automaton: two
words~$u,v\in \Sigma^*$ are ""$T$-equivalent@@intro"" if for
all~$t\in \setTintro$, $ut\in L$ if and only if~$vt\in L$.

\AP At the beginning, the algorithm attempts to construct an automaton
with as sole state $\varepsilon\in \setQintro$, which has to be initial.  In
particular, the target of the transition labelled~$a$ issued
from~$\varepsilon$ has to be determined. Such a transition should go
to a state in~$\setQintro$ which has to be "$T$-equivalent@@intro"
to~$\varepsilon a=a$. \AP It fails since there are no such states
in~$\setQintro$ (we say that the pair $\setQintro,\setTintro$ fails to have the ""closedness
property@@intro""). The algorithm corrects it by adding the word~$a$
to~$\setQintro$.  We reach~$\setQintro=\{\varepsilon,a\}$, $\setTintro=\{\varepsilon\}$. The
algorithm now tries to construct an automaton with 
states~$\setQintro=\{\varepsilon,a\}$: this time, it is possible to construct
an $a$-labelled transition from~$\varepsilon\in \setQintro$ to~$a\in \setQintro$. What
should now be the $a$-labelled transition issued from~$a$? It should
be some state~$q\in \setQintro$ which is "$T$-equivalent@@intro"
to~$aa$. Luckily, there is one, namely~$\varepsilon$. Hence, we
succeed in constructing the left "hypothesis automaton@@intro" in
\cref{figure:candidates}.
\begin{wrapfigure}{r}{8.5cm}
	\vspace{-0.35cm}
	\begin{tikzpicture}[node distance=2cm,on grid,auto,baseline,initial text={}]
	  \node[state,initial above]  (0)                 {$\varepsilon$};
	  \node[state,accepting by arrow,accepting above]                    (1) [right of=0]  {$a$};
	  \path[->]
	  (0) edge     [bend left]           node {$a$}  (1)
	  (1) edge  	[bend left]		     node {$a$}  (0);
	\end{tikzpicture}\qquad%
		\begin{tikzpicture}[node distance=2cm,on grid,auto,baseline,initial text={}]
	  \node[state,initial above]  (0)                 {$\varepsilon$};
	  \node[state,accepting by arrow,accepting above]                    (1) [right of=0]  {$a$};
	  \node[state]                    (2) [right of=1]  {$aa$};
	  \path[->]
	  (0) edge                node {$a$}  (1)
	  (1) edge  			 node {$a$}  (2)
	  (2) edge [loop above] node {$a$} ();
	\end{tikzpicture}
	\caption{Two successive "hypothesis automata@@intro"}
	\label{figure:candidates}
\end{wrapfigure}
The algorithm now "queries for equivalence@equivalence query" of the language of this automaton with the language.
This fails since~$a(aa)^*\neq L=\{a\}$, and
hence the "teacher@@intro" answers in return a
"counter-example word@@intro", say~$aaa$. The algorithm then adds (for
reasons that are not detailed here) the prefix~$aa$ of~$aaa$ to~$\setQintro$,
yielding $\setQintro=\{\varepsilon,a,aa\}$, $\setTintro=\{\varepsilon\}$. Here,
$\varepsilon$ and~$aa$ are "$T$-equivalent@@intro", but constructing
an $a$-labeled transition from~$\varepsilon$ would yield~$a$, while
constructing one from~$aa$ would yield~$aaa$, which is
"$T$-equivalent@@intro" to~$\varepsilon$.  \AP Hence, $\varepsilon$
and $aa$ cannot be merged as a same state (we say that $\setQintro,\setTintro$ fails to
have the ""consistency property@@intro""). The algorithm compensates
it by adding~$a$ to~$\setTintro$, thus yielding~$\setQintro=\{\varepsilon,a,aa\}$
and~$\setTintro=\{\varepsilon,a\}$. Now, $\setQintro,\setTintro$ are both "closed@@intro" and
"consistent@@intro", and the right "hypothesis automaton@@intro" in
\cref{figure:candidates} is constructed. It recognizes~$\{a\}$, and
thus the "teacher@@intro" agrees and the algorithm terminates. It has
constructed the minimal deterministic and complete automaton for the
language~$L=\{a\}$.

This example witnesses the different steps involved in the algorithm:
(a) if $(\setQintro,\setTintro)$ is not "closed@@intro", a word is added to~$\setQintro$, (b)
if~$(\setQintro,\setTintro)$ is not "consistent@@intro", a word is added to~$\setTintro$, and (c)
when $(\setQintro,\setTintro)$ is both "closed@@intro" and "consistent@@intro", it is
possible to construct a "hypothesis automaton@@intro" and perform an
"equivalence query@@intro": if this automaton happens to not accept
the expected language, the "teacher@@intro" provides a
"counter-example word@@intro" from which words to add to~$\setQintro$ are
constructed. The algorithm functions by performing the operation until
the "teacher@@intro" agrees.

The correctness of the algorithm bears many resemblances with the
question of minimizing deterministic automata. This can be witnessed
in the fact that the "$\Lstar$-algorithm" automatically constructs
minimal deterministic and complete automata. It can also be witnessed
in the fact that the "$T$-equivalences@@intro" induce along the run finer and
finer partitions of the words that converge eventually to the
Myhill-Nerode equivalence, another notion highly connected to
minimization questions.

The "$\Lstar$-algorithm" turns out to be extremely robust, and has been
extended to various other forms of automata (weighted automata over
fields~\cite{DBLP:conf/ciac/BergadanoV94,DBLP:journals/siamcomp/BergadanoV96},
nominal automata \cite{nominal-learning17}, omega automata
\cite{omega-learning16}, non-deterministic automata
\cite{NFA-learning09}, alternating automata
\cite{alternating-learning}, symbolic automata
\cite{symbolic-learning17}, subsequential transducers
\cite{Vilar96,Vilar00}, transducers of trees
\cite{rational-learning12,DTtree-transducers-learning16}).
Although with a focus on concrete implementations, Bollig et
al.~\cite{libalf} emphasize that ``the need for a unifying framework
collecting various types of learning techniques is, thus, beyond all
questions.''
\vspace{-1em}
\paragraph*{Contributions}

The aim of this paper is to present such a unifying framework for
learning word automata using the toolkit of category theory.
Concretely, we provide an abstract categorical version of Angluin's
"$\Lstar$-algorithm", called $\FunL$ (\cref{algorithm:main}), we prove
its correctness and termination (\cref{thm:algo-corr-term}), and we
give three "running instantiations" for it, namely in the case of
deterministic automata, "field weighted automata" and "subsequential
transducers".

\AP To this end, we reuse the framework developed
in~\cite{colcopetr20:automin} which models automata as functors from
an "input category" $\intro*\catIintro$ (describing the structure of the computation) to an
"output category" $\intro*\catCintro$. 
\AP For example, to model word automata, the
"input category" is a fixed three-object category $\catI$, that we will
recall in Section~\ref{sec:min}. By varying the category $\catCintro$, this definition captures several forms of automata, and in particular
the ones mentioned above. In \cite{colcopetr20:automin}, we present
sufficient conditions on $\catCintro$ that guarantee the existence of
minimal automata. These conditions are quite mild: $\catCintro$ should
have certain products and coproducts, on one hand, and a factorization
system, on the other. Apart from these three conditions on the "output category", \cref{thm:algo-corr-term} -- which states that our new
algorithm computes the minimal automaton for the language to be
learned -- requires only one additional assumption which ensures termination, namely a
`finiteness' hypothesis (using the notion of "noetherianity").

In order to describe our generic $\FunL$ algorithm we provide abstract
versions of the steps of the "$\Lstar$-algorithm" described above. These
are obtained as follows:
\begin{itemize}
\item We describe the pair of sets of words $(\setQintro,\setTintro)$ using a
  four-object category $\catIQT$, introduced in
  Definition~\ref{def:catIQT}. This category is a modification of
  $\catI$, which allows us to obtain a partial view of the
  language, namely only its values on words of the form $qt$ and $qat$ with
  $q\in \setQintro, t\in \setTintro$ and $a$ a letter in the alphabet.
\item Computing the approximations of the Myhill-Nerode equivalence
  (that is, the "$T$-equivalence relations@@intro") roughly corresponds in
  our generic setting to performing a minimization-like
  computation. This is achieved using off-the-shelf results
  from~\cite{colcopetr20:automin} by changing the input category from
  $\catI$ to $\catIQT$. We obtain a form of minimal `biautomaton'
  featuring an $\varepsilon$-transition between its two state objects.
\item The pair $(\setQintro,\setTintro)$ being "closed@@intro" and "consistent" amounts to the
  $\varepsilon$-transition of the above biautomaton being an
  isomorphism between the two state objects. We then say that the pair
  $(\setQintro,\setTintro)$ is "$\langL$-automatable". Under this assumption, it is
  meaningful to collapse the two state objects, defining in this way
  the "hypothesis automaton" (represented now as a functor
  $\catI\to \catCintro$).
\end{itemize}
What is interesting about our $\FunL$-algorithm -- compared to
previous approaches -- is that it highlights the strong relationship
between learning and minimizing automata. Each elementary step of the
algorithm involves performing a minimization-like computation and
leverages the modularity of our previous
work~\cite{colcopetr20:automin}, this time by varying the input
category. In contrast to other category theoretic approaches to
learning, \FunL\  does not rely neither on algebras nor on
coalgebras. Instead, we exploit the symmetry of the word automata
model. This is reflected by the self-duality of the "input category" $\catI$, which is underpinning the well known duality between observability and reachability.

Finally, a prominent instantiation of the \FunL-algorithm is Vilar's
learning algorithm of subsequential transducers~\cite{Vilar96}. Our
notion of "$\langL$-automatable" pairs $(Q,T)$ perfectly instantiates
to the conditions considered by Vilar to construct a hypothesis
transducer. A coalgebraic modelisation of subsequential transducers
was provided in~\cite{DBLP:journals/iandc/Hansen10}, but, to the best
of our knowledge, this example is not featured in the category
theoretic learning literature.
\vspace{-1em}
\paragraph*{Related works}

We briefly review the (co)algebraic approaches to automata learning
that have been proposed in recent years. The
paper~\cite{DBLP:conf/birthday/JacobsS14a} was the first to recast key
ingredients of Angluin's algorithm in a coalgebraic setting.
This line of work was continued with the CALF framework of van Heerdt
et.al~\cite{CALF17}, which models automata  as triples
consisting of an algebra for a functor, an initial map and an output
map. In~\cite[Section~5]{CALF17} a connection between minimization and
learning is mentioned and formalized for DFAs. More precisely, the
main theorem proving the correctness of the learning algorithm
~\cite[Theorem~16]{CALF17} can be used to show the correctness of the
minimization algorithm for DFA, with reachability and observability
playing a crucial role. The same authors proposed in~\cite{CALF20} a
learning algorithm for automata with side-effects. These are
extensions of DFAs to automata interpreted in a category of
Eilenberg-Moore algebras for a $\catSet$ monad $T$ -- used to represent a
certain side effect. For example, the finite powerset monad
corresponds to non-determinism and the ensuing automata model serves
to represent non-deterministic automata. In order to prove the
termination of the learning algorithm, the monad $T$ above is assumed
to preserve finite sets. Hence the monad that we use in the present
work to model subsequential transducers does not fit in the scope
of~\cite{CALF20}. Another small difference is that we work within the
Kleisli category.

Another category theoretic learning algorithm was proposed
in~\cite{DBLP:conf/fossacs/BarloccoKR19} and provides a coalgebraic
and duality theoretic foundation for learning bisimilarity quotients
of state-based transition systems. The core idea is to use logical
formulas as tests, taking stock of dual adjunctions between states and
logical theories, formalized as algebra-coalgebra dualities.

The recent paper~\cite{urbat2019automata} gives a learning algorithm
for automata whose transitions can be encoded both as algebras for a
functor $F$ on a category $\mathcal{C}$, and as coalgebras for the
right adjoint of $F$ (assumed to exist).  The approximations of the
Myhill-Nerode equivalence present in the learning algorithm are
computed using factorizations of morphisms from approximations of an
initial algebra (obtained using an initial chain) to approximations of
a final coalgebra (obtained using a final co-chain). Some of the
ingredients of this category theoretic algorithm are similar to ours,
e.g. the heavy use of factorization systems or the notion of
``finite'' object in a category, however, there are more assumptions on
the underlying category and on the preservation properties of the
adjoint functors considered in the (co)-algebraic definition of the
automata, see~\cite[Assumption~3.5]{urbat2019automata}
and~\cite[Assumption~4.1]{urbat2019automata}.
\vspace{-2em}
\paragraph*{Structure of the paper}
In~\cref{sec:min}, we present necessary material from
\cite{colcopetr20:automin}, which includes in particular the
categorical modeling of automata, its instantiation for deterministic
finite automata, for "field weighted automata" and for "subsequential
transducers", and how to minimize them.  This material is key in our
description of the algorithm in \cref{sec:core-alg}. For simplicity,
we describe first a slightly simplified version of the algorithm. The
optimized version is the subject of
\cref{sec:opt-algorithm}. \Cref{sec:conclusion} concludes.

\section{Minimization}
\label{sec:min}

In this section, we recall the categorical approach to automata
minimization from~\cite{colcopetr17:automin,colcopetr20:automin}.

\subsection{Languages and automata as functors}
We first recall the notion of automata as functors.  \AP We consider
an arbitrary small category $\intro*\catIstart$, called the ""input
category"", and one of its full subcategories $\intro*\catOstart$,
denoting by $i$ the inclusion functor:
\begin{tikzcd}
  \catOstart \arrow[hookrightarrow]{r}{i} & \catIstart.
\end{tikzcd}
Intuitively, $\catIstart$ represents the the inner computations
performed by an "automaton@@1", and in particular its internal
behaviour, while $\catOstart$ represents the observable behaviour of
the "automaton@@1" and is used to define the "language@@1" it
"accepts".

\AP We consider another category $\intro*\catC$, called the ""output
category"", which models the output computed by the automaton (e.g., a
boolean value, probabilities, words over an alphabet).
\begin{definition}
  \label{def:cat-auto}
  \AP A ""$\catC$-automaton@@1"" (or simply an "automaton@@1")
  $\intro*\autA$ is a functor from $\catIstart$ to $\catC$.  A
  ""$\catC$-language@@1"" (or simply a "language@@1") $\intro*\langL$
  is a functor from $\catOstart$ to $\catC$. A "$\catC$-automaton@@1"
  $\autA$ ""accepts"" a "$\catC$-language@@1" $\langL$ if
  $\autA \circ i=\langL$.

  We denote by $\intro*\catAuto(\langL)$ the subcategory of the
  functor category $[\catIstart,\catC]$:
  \begin{itemize}
  \item whose objects are all "$\catC$-automata@@1" $\autA$
    "accepting" $\langL$; \itemAP whose arrows are ""$\catC$-automata
    morphisms"", meaning natural transformations
    $\alpha \colon \autA_1 \Rightarrow \autA_2$ such that
    $\alpha \circ \textit{id}_i = \textit{id}_{\langL}$.
  \end{itemize}
\end{definition}

\AP In this paper, we will instantiate the "input category"
$\catIstart$ in two ways. The first one, $\catI$, is used
in~\cite{colcopetr17:automin} to model different forms of ""word
automata""; we describe it in this section and use it for modeling the
three "running instantiations".  In Section~\ref{sec:core-alg}, we
will consider another "input category", $\catIQT$, which we use in the
process of constructing our "hypothesis automata".

\begin{wrapfigure}[4]{r}{0.3\linewidth}
  \centering
  \vspace{-1em}
  \begin{tikzcd}[baseline=1cm]
    \intro*\objIn \arrow[rightarrow]{r}{\intro*\symbLeft} &
    \intro*\objStates \arrow[loop]{u}[swap]{\intro*\symbA}
    \arrow[rightarrow]{r}{\intro*\symbRight} & \intro*\objOut.
  \end{tikzcd}
\end{wrapfigure}
\AP We define now the "input category" $\intro*\catI$ used for
describing "word automata". Here $\intro*\alphA$ is a finite alphabet, 
fixed for the rest of the paper, and $\alphA^*$ the set of words over
it. The "input category" $\catI$ is the category freely generated by
the graph on the right, where $\symbA$ ranges over $\alphA$:
That is, $\catI$ is the three-object category with arrows spanned by
$\symbLeft$, $\symbRight$ and $\symbA$ for all $a \in \alphA$, so that
the composition \begin{tikzcd} \objStates \arrow[rightarrow]{r}{w} &
  \objStates \arrow[rightarrow]{r}{w'} & \objStates \end{tikzcd} is
given by the concatenation $ww'$.  So, for example, the morphisms in
$\catI$ from $\objIn$ to $\objStates$ are of the form $\symbLeft w$
with $w\in\alphA^*$, while the morphisms on the object $\objStates$
are of the form $w$ with $w\in\alphA^*$.

\AP Let $\intro*\catO$ denote the full subcategory of $\catI$ on the
objects $\objIn$ and $\objOut$. Its morphisms are of the
form \begin{tikzcd} \objIn \arrow[rightarrow]{r}{\symbLeft
    w\symbRight} & \objOut
\end{tikzcd} for~$w\in\alphA^*$.

\AP Hereafter, by a "language@@1" we mean a functor from $\catO$ to
$\catC$ and by an "automaton@@1" we mean a functor from $\catI$ to
$\catC$. If $\langL(\objIn)=X$ and $\langL(\objOut)=Y$, a "language"
$\langL$ will be referred to as a ""$(\catC,X,Y)$-language""; if
$\autA(\objIn)=X$ and $\autA(\objOut)=Y$, an "automaton" $\autA$ will
be called a ""$(\catC,X,Y)$-automaton"".
\AP We provide three ""running instantiations"" of the "output
category" $\catC$ and of the objects $X$ and $Y$, in order to model
deterministic automata, "field weighted automata" and "subsequential
transducers".

\begin{example}[Deterministic automata]
  A deterministic and complete automaton is a
  "$(\catSet,1,2)$-automaton". Indeed, we can see a functor
  $\autA\colon\catI\to\catSet$ with $\autA(\objIn)=1$ and
  $\autA(\objOut)=2$ as a deterministic automaton by interpreting
  \begin{itemize}
  \item $\autA(\objStates)$ as its set of states,
  \item $\autA(\symbLeft)\colon 1\to \autA(\objStates)$ as choosing
    the initial state,
  \item $\autA(\symbA)\colon \autA(\objStates)\to \autA(\objStates)$
    as the transition map for the letter $\symbA\in\alphA$,
  \item $\autA(\symbRight)\colon \autA(\objStates)\to 2$ as the
    characteristic map of the subset of accepting states.
  \end{itemize}
\end{example}

\begin{example}[Weighted automata over a field] Let $\intro*\K$ be a
  field and let $\catVec$ denote the corresponding category of
  $\K$-vector spaces and linear transformations.  \AP A ""weighted
  automaton over the field $\K$"" (in the sense of
  \cite{schutzenberger61}) is a "$(\catVec,\K,\K)$-automaton". Indeed,
  a functor $\autA\colon\catI\to\catVec$ with $\autA(\objIn)=\K$ and
  $\autA(\objOut)=\K$ is seen as a "weighted automaton over $\K$" by
  interpreting
  \begin{itemize}
  \item $\autA(\objStates)$ as the vector space spanned by its states,
  \item $\autA(\symbLeft)\colon \K\to \autA(\objStates)$ as the linear
    transformation mapping the unit of $\K$ to the initial vector,
  \item $\autA(\symbA)\colon \autA(\objStates)\to \autA(\objStates)$
    as the linear transformation transition for the letter
    $\symbA\in\alphA$,
  \item $\autA(\symbRight)\colon \autA(\objStates)\to \K$ as the
    output linear transformation.
  \end{itemize}
\end{example}

\begin{example}["Subsequential transducers"] \AP The aim is to
  represent what we call ""transductions"" in this paper, which are
  partial maps from~$\alphA^*$ to~$\alphB^*$, where~$\intro*\alphB$ is
  some fixed output alphabet.  \AP Roughly, a "subsequential
  transducer" \cite{Choffrut79} is a deterministic automaton which, at
  each step, while reading an input letter from~$\alphA$, either has
  no transition or has a unique transition which changes
  deterministically the state and outputs a word from~$\alphB^*$. In
  this paper, we define ""subsequential transducers"" as
  "$(\KlT,1,1)$-automata" \cite{colcopetr20:automin}, for a definition
  of~$\KlT$ that we give now.
  
  The \textbf{"output category" $\KlT$.}  Let $\intro*\monT$ be the
  monad defined by $\monT X=\alphB^*\times X+1$ and let $\intro*\KlT$
  denote the Kleisli category for $\monT$.  Concretely, the objects of
  $\KlT$ are sets, while its morphisms, denoted by negated arrows, are
  of the form $f\colon X\nrightarrow Y$ for a function
  $f\colon X\to\monT Y$, that is, a partial function from $X$ to
  $\alphB^*\times Y$.  \AP We write $\bot$ for the element of the
  singleton $1$ and think of it as the ""undefined"" element. Given
  $f\colon X\nrightarrow Y$ and $g\colon Y\nrightarrow Z$, their
  composite $g\circ f\colon X\nrightarrow Z$ is defined on $x\in X$ by
  $(uv,z)$, when $f(x)=(u,y)\in \alphB^*\times Y$ and
  $g(y)=(v,z)\in \alphB^*\times Z$ (with $uv$ denoting the
  concatenation of $u$ and $v$ in $\alphB^*$) and $f(x)=\bot$ in all
  other cases. Note that with this definition, a "transduction" can be
  identified in an obvious manner with a map from~$\alphA^*$ to arrows
  of the form $1\nrightarrow 1$.

  \AP We now recall that "$(\KlT,1,1)$-automata" are equivalent to
  \reintro{subsequential transducers} 
  in the sense of Choffrut's definition \cite{Choffrut79}.  Indeed, we
  can see a functor $\autA\colon\catI\to\KlT$ with $\autA(\objIn)=1$
  and $\autA(\objOut)=1$ as a subsequential transducer by interpreting
  \begin{itemize}
  \item $\autA(\objStates)$ as the set of states,
  \item $\autA(\symbLeft)\colon 1\nto \autA(\objStates)$ as either
    choosing an initial state together with an initial output in
    $\alphB^*$ or having an undefined initial state,
  \item $\autA(\symbA)\colon \autA(\objStates)\nto \autA(\objStates)$
    as the transition map for the letter $\symbA$ which associates to
    a given state either "undefined" or a pair consisting of an output
    word in $\alphB^*$ and a successor state,
  \item $\autA(\symbRight)\colon \autA(\objStates)\nto 1$ as the final
    map which associates to a state either its output in $\alphB^*$ or
    "undefined" when it is non-accepting.
  \end{itemize}
\end{example}

\subsection{Minimization of automata}

Now we describe what it means to be "minimal" in a category
(\cref{definition:division}) together with an abstract result of
existence of such an object (\cref{lemma:minimality}).  We then
provide sufficient material for our three "running instantiations" to
be covered.

\AP Let $\intro*\catK$ be a category endowed with a factorization
system $(\Epi,\Mono)$. We write
\begin{tikzcd}
  {}\ar[r,two heads]& {}
\end{tikzcd}
for arrows belonging to $\Epi$ and we will call them
""$\Epi$-quotients""; we write
\begin{tikzcd}
  {}\ar[r,rightarrowtail]& {}
\end{tikzcd}
for arrows belonging to $\Mono$ and we will call them
""$\Mono$-subobjects"".

\begin{definition}\label{definition:division}
  Consider two objects $X,Y$ of $\mathcal{K}$. We say that $X$
  ""$(\Epi,\Mono)$-divides"" $Y$ whenever $X$ is an "$\Epi$-quotient"
  of an "$\Mono$-subobject" of $Y$, that is, we have a span of the
  form:
  \begin{center}
    \begin{tikzcd}
      X & \cdot \arrow[l, two heads] \arrow[r, tail] & Y\,.

    \end{tikzcd}

  \end{center}

  \AP An object $Z$ in $\mathcal{K}$ is ""$(\Epi,\Mono)$-minimal"" if
  it "$(\Epi,\Mono)$-divides" all the objects in $\mathcal{K}$.
  
\end{definition}

As shown in the following lemma, having an initial and a final object
turns out to be a sufficient condition for the minimal object to exist
and be unique up to
isomorphism. 

\begin{lemma}\label{lemma:minimality}
  \label{minobj} \AP Let $\catK$ be a category endowed with an initial
  object $I$, a final object $F$ and a factorization system
  $(\Epi,\Mono)$. Let $\intro*\objMinStart$ be the factorization of
  the unique arrow from $I$ to $F$:
  \begin{center}
    \begin{tikzcd}
      I \arrow[twoheadrightarrow]{r} & \objMinStart
      \arrow[rightarrowtail]{r} & F.
    \end{tikzcd}
  \end{center}
  Then $\objMinStart$ is "$(\Epi,\Mono)$-minimal".
\end{lemma}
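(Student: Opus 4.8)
The plan is to verify that $\objMinStart$, the middle object of the $(\Epi,\Mono)$-factorization of the unique arrow $I \to F$, $(\Epi,\Mono)$-divides every object $Z$ of $\catK$. Fix such a $Z$. Since $I$ is initial there is a unique arrow $\iota_Z \colon I \to Z$, and since $F$ is final there is a unique arrow $\tau_Z \colon Z \to F$. By uniqueness of arrows out of the initial object (or into the final object), the composite $\tau_Z \circ \iota_Z \colon I \to F$ equals the canonical arrow $I \to F$ whose factorization defines $\objMinStart$.

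First I would factor $\iota_Z \colon I \to Z$ through the factorization system, obtaining $I \twoheadrightarrow Z_0 \rightarrowtail Z$ for some object $Z_0$, with the first arrow in $\Epi$ and the second in $\Mono$. This already exhibits a span $Z \xhookleftarrow{} Z_0 \twoheadleftarrow I$; what remains is to connect $\objMinStart$ to $Z_0$. For this I would look at the commuting square built from $I \twoheadrightarrow Z_0$ and $Z_0 \rightarrowtail Z \xrightarrow{\tau_Z} F$ on one side, and $I \twoheadrightarrow \objMinStart$ and $\objMinStart \rightarrowtail F$ on the other: both composites $I \to F$ agree (they equal the canonical arrow), so we have a commuting square with an $\Epi$ on the left edge and an $\Mono$ on the right edge. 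The diagonal-fill-in (orthogonality) axiom of the factorization system then produces a unique arrow $d \colon Z_0 \to \objMinStart$ making both triangles commute.

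Next I would argue that $d$ itself lies in $\Epi$. Indeed, $d$ is a retraction-like factor: from $I \twoheadrightarrow Z_0 \xrightarrow{d} \objMinStart$ being equal to the $\Epi$-arrow $I \twoheadrightarrow \objMinStart$, and using that $\Epi$ is closed under the standard cancellation property (if $g \circ f \in \Epi$ then $g \in \Epi$, which holds for the left class of any factorization system), we conclude $d \in \Epi$. Composing, $\objMinStart \xtwoheadleftarrow{d} Z_0 \rightarrowtail Z$ is the required span showing $\objMinStart$ $(\Epi,\Mono)$-divides $Z$. Since $Z$ was arbitrary, $\objMinStart$ is $(\Epi,\Mono)$-minimal.

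\textbf{Main obstacle.} The only delicate point is justifying that the connecting arrow $d$ belongs to $\Epi$ rather than merely being some arrow $Z_0 \to \objMinStart$; this rests on the right-cancellation property for the left class of a factorization system, which I would either cite as a standard fact or prove in one line using orthogonality. Everything else — existence and uniqueness of arrows to/from the universal objects, and the diagonal fill-in — is immediate from the hypotheses. If one does not want to invoke cancellation, an alternative is to factor $d$ as $Z_0 \twoheadrightarrow M \rightarrowtail \objMinStart$ and use uniqueness of factorizations of $I \to \objMinStart$ to identify $M$ with $\objMinStart$, forcing the mono part to be an iso; I would present whichever is shorter.
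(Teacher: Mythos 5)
Your proof is correct and is essentially the standard argument (the one in the cited source \cite{colcopetr20:automin} from which the paper recalls this lemma): factor the unique arrow $I\to Z$ as $I\twoheadrightarrow Z_0\rightarrowtail Z$, obtain the diagonal $d\colon Z_0\to\objMinStart$ by orthogonality against $\objMinStart\rightarrowtail F$ (the square commutes by initiality of $I$), and conclude $d\in\Epi$ from $d\circ(I\twoheadrightarrow Z_0)\in\Epi$ via the cancellation property of the left class, i.e.\ the dual of \cite[Proposition~14.9(2)]{joyofcats}, which the paper itself invokes elsewhere. No gaps.
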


We apply this lemma when $\catK$ is instantiated with a category of
automata $\catAuto(\langL)$.
 
\begin{corollary}
  \AP
  \label{cor:min}
  If the category $\catAuto(\langL)$ has an initial "automaton@@1"
  $\intro*\autoInit(\langL)$, a final "automaton@@1"
  $\intro*\autoFinal(\langL)$ and a factorization system, then the
  "minimal automaton" $\intro*\objMin(\langL)$ for the "language@@1"
  $\langL$ is obtained via the following factorization:
  \begin{tikzcd}
    \autoInit(\langL) \arrow[twoheadrightarrow]{r} & \objMin(\langL)
    \arrow[rightarrowtail]{r} & \autoFinal(\langL).
  \end{tikzcd}
\end{corollary}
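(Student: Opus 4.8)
The plan is to deduce this corollary as a direct instantiation of Lemma~\ref{lemma:minimality}, taking for the category $\catK$ of that lemma the category $\catAuto(\langL)$ of "automata" accepting $\langL$. This is legitimate: $\catAuto(\langL)$ is a genuine category (a subcategory of the functor category $[\catIstart,\catC]$), and the three hypotheses of Lemma~\ref{lemma:minimality} --- an initial object, a final object, and a factorization system $(\Epi,\Mono)$ --- are exactly the three hypotheses assumed in the corollary, with $I:=\autoInit(\langL)$ and $F:=\autoFinal(\langL)$.

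Concretely, I would first note that there is a unique arrow $\autoInit(\langL)\to\autoFinal(\langL)$ in $\catAuto(\langL)$, since $\autoInit(\langL)$ is initial. Factoring this arrow through $(\Epi,\Mono)$ yields an object $Z$ together with a span $\autoInit(\langL)\twoheadrightarrow Z\rightarrowtail\autoFinal(\langL)$, and Lemma~\ref{lemma:minimality} asserts that $Z$ is "$(\Epi,\Mono)$-minimal" in $\catAuto(\langL)$, i.e. it "$(\Epi,\Mono)$-divides" every "automaton" accepting $\langL$. By definition $\objMin(\langL)$ is \emph{the} "minimal automaton" for $\langL$; since, as remarked just before Lemma~\ref{lemma:minimality}, the minimal object is unique up to isomorphism, we may identify $\objMin(\langL)$ with $Z$, which is precisely the middle term of the claimed factorization.

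There is no real obstacle here: the mathematical content is entirely contained in Lemma~\ref{lemma:minimality}, and the corollary is essentially bookkeeping. The only point that deserves a word of justification is the uniqueness up to isomorphism of the minimal object --- so that the notation $\objMin(\langL)$ is unambiguous --- which follows in the standard way from the diagonal fill-in property of the factorization system: if $Z$ and $Z'$ each "$(\Epi,\Mono)$-divide" one another, the fill-in produces mutually inverse isomorphisms between them.
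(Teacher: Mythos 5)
Your proposal matches the paper exactly: the corollary is obtained by instantiating Lemma~\ref{lemma:minimality} with $\catK=\catAuto(\langL)$, $I=\autoInit(\langL)$ and $F=\autoFinal(\langL)$, which is precisely what the paper does (no separate proof is given). The only loose end is your parenthetical claim that mutual $(\Epi,\Mono)$-division yields an isomorphism directly by diagonal fill-in --- that implication is not immediate in an arbitrary category --- but the uniqueness up to isomorphism of the minimal object is already part of the statement imported from \cite{colcopetr20:automin} alongside Lemma~\ref{lemma:minimality}, so nothing in the corollary hinges on it.
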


Notice that this notion of minimization is parametric in the
factorization system. 
In all our examples, we obtain a suitable factorization system on
$\catAuto(\langL)$ from one on $\catC$, as
follows.

\begin{lemma}
  \label{lem:fact-cat-auto}
  If a category $\catC$ has a factorization system $(\Epi, \Mono)$,
  then the category $\catAuto(\langL)$ has a factorization system
  $(\Epi_{\catAuto(\langL)},\Mono_{\catAuto(\langL)})$, where
  $\Epi_{\catAuto(\langL)}$ consists of all natural transformations
  with components in $\Epi$ and $\Mono_{\catAuto(\langL)}$ consists of
  all natural transformations with components in $\Mono$.
\end{lemma}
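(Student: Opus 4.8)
The plan is the standard one: lift the factorization system of $\catC$ pointwise to the functor category $[\catIstart,\catC]$, and then check that it restricts along the inclusion $\catAuto(\langL)\hookrightarrow[\catIstart,\catC]$.

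First I would fix, for every morphism of $\catC$, a chosen $(\Epi,\Mono)$-factorization, adopting the convention that the chosen factorization of an identity is the trivial one $\textit{id}=\textit{id}\circ\textit{id}$. This is harmless, since isomorphisms lie in both $\Epi$ and $\Mono$, and by uniqueness of factorizations up to unique isomorphism any $(\Epi,\Mono)$-factorization of an identity consists of a pair of mutually inverse isomorphisms anyway. Then, given a morphism $\alpha\colon\autA_1\Rightarrow\autA_2$ of $\catAuto(\langL)$, I would factor each component as $\alpha_X=m_X\circ e_X$ with $e_X\in\Epi$, $m_X\in\Mono$, through an object $Z(X)$ of $\catC$. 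For an arrow $f\colon X\to Y$ of $\catIstart$, the naturality square of $\alpha$ rewrites as a commutative square whose left edge $e_X$ lies in $\Epi$ and whose right edge $m_Y$ lies in $\Mono$, and I would take $Z(f)\colon Z(X)\to Z(Y)$ to be its unique diagonal fill-in, characterised by $Z(f)\circ e_X=e_Y\circ\autA_1(f)$ and $m_Y\circ Z(f)=\autA_2(f)\circ m_X$. Uniqueness of fill-ins then makes $Z$ a functor and turns $e\colon\autA_1\Rightarrow Z$ and $m\colon Z\Rightarrow\autA_2$ into natural transformations with $\alpha=m\circ e$. Because $\alpha$ restricts to the identity on $\langL$, for each object $c$ of $\catOstart$ the component $\alpha_{i(c)}$ equals $\textit{id}_{\langL(c)}$, whose chosen factorization is trivial; hence $Z\circ i=\langL$ and $e,m$ restrict to the identity on $\langL$ as well. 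So $Z$ is an object of $\catAuto(\langL)$, and $e\in\Epi_{\catAuto(\langL)}$, $m\in\Mono_{\catAuto(\langL)}$, giving the required factorisation.

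Next I would verify the remaining axioms of a factorization system. Closure under composition and containment of the isomorphisms are immediate: a natural transformation of $\catAuto(\langL)$ is an isomorphism precisely when all of its components are (its componentwise inverse again restricts to the identity on $\langL$), the components of a composite are the composites of the components, and $\Epi,\Mono$ enjoy these properties in $\catC$. For the orthogonality (unique diagonal fill-in) property, I would start from a commutative square in $\catAuto(\langL)$ with left edge $e\colon\autA_1\Rightarrow\autA_2$ in $\Epi_{\catAuto(\langL)}$ and right edge $m\colon\autB_1\Rightarrow\autB_2$ in $\Mono_{\catAuto(\langL)}$; componentwise orthogonality in $\catC$ produces a unique fill-in $d_X\colon\autA_2(X)\to\autB_1(X)$ at each object $X$. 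To see that $d=(d_X)_X$ is natural, for an arrow $f\colon X\to Y$ I would exhibit both $d_Y\circ\autA_2(f)$ and $\autB_1(f)\circ d_X$ as diagonals of the square obtained by splicing the naturality squares of the four edges, so that they coincide by componentwise uniqueness. As before, $d$ restricts to the identity on $\langL$, hence is a morphism of $\catAuto(\langL)$; and it is the unique fill-in there, since a fill-in is already unique among all natural transformations.

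I expect essentially no genuine obstacle: this is the familiar fact that orthogonal factorization systems lift pointwise to functor categories. The only point requiring a modicum of care is checking that the pointwise construction stays inside $\catAuto(\langL)$ rather than merely inside $[\catIstart,\catC]$ — that is, that $Z$ accepts $\langL$ and that $e$, $m$, and the diagonal fill-in are morphisms of $\catAuto(\langL)$ — which is precisely what the convention on factorizing identities secures.
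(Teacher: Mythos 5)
Your proof is correct and follows the standard pointwise-lifting argument that the paper itself relies on (it recalls this lemma from \cite{colcopetr20:automin} rather than reproving it): factor componentwise, use unique diagonal fill-ins to get functoriality of the middle object and naturality of the two legs and of the lifted fill-ins, and observe that everything restricts to the identity over $\catOstart$ so the construction stays inside $\catAuto(\langL)$. The one point genuinely needing care — that the factorization and the fill-in land in the non-full subcategory $\catAuto(\langL)$ rather than merely in $[\catIstart,\catC]$ — is exactly the point you isolate and handle via the trivial factorization of identities.
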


\begin{example}[factorization systems]%
  \label{example:KlT-factorization}
  There exists a factorization system in our three "running
  examples". For $\catSet$, this is the well known factorization
  system $(\textrm{Surjections},\textrm{Injections})$.  Similarly in
  $\catVec$,
  $(\mathrm{Surjective\ linear\ maps},\mathrm{Injective\ linear\
    maps})$ is a factorization
  system.
  
  In the case of $\KlT$, the factorization system does not follow from
  general arguments.  We define now the factorization system
  $(\EpiKlT,\MonoKlT)$ for~$\KlT$.  \AP Given a morphism
  $f\colon X\nrightarrow Y$ in $\KlTrans$, we write
  $\intro*\proj1(f)\colon X\to B^*+\{\bot\}$ and
  $\intro*\proj2(f)\colon X\to Y+\{\bot\}$ for the projections:
  if~$f(x)=\bot$ then $\proj1(x)=\proj2(x)=\bot$, otherwise
  $f(x)=(\proj1(f)(x),\proj2(f)(x))$.

  \AP The class $\intro*\EpiKlT$ consists of all the morphisms of the
  form $e\colon X\nrightarrow Y$ such that $\proj2(e)$ is surjective
  (i.e. for every $y \in Y$ there exists $x \in X$ so that
  $\proj2(e)(x)=y$) and the class $\intro*\MonoKlT$ consists of all
  the morphisms of the form $m\colon X\nrightarrow Y$ such that
  $\proj2(m)$ is injective and $\proj1(m)$ is the constant function
  mapping every $x\in X$ to $\varepsilon$.

  By~\cite[Lemma~4.8]{colcopetr20:automin}, $(\EpiKlT,\MonoKlT)$ is a
  factorization
  system.
\end{example}

We specialize the result of Corollary~\ref{cor:min} to the case of
word automata $\catI\to\catC$. Due to the special shape of the
category $\catI$, we can compute the initial and the final automata,
provided the "output category" satisfies some mild assumptions,
recalled in Lemmas~\ref{lem:init-aut} and~\ref{lem:final-aut}.

\begin{lemma}
  \label{lem:init-aut}
  \AP Fix a "language" $\langL\colon\catO\to\catC$.  If the category
  $\catC$ has countable copowers of $\langL(\objIn)$, the ""initial
  automaton"" $\autoInit(\langL)$ exists and is given by the following
  data:
  \begin{itemize}
  \item
    $\displaystyle\autoInit(\langL)(\objStates)=\coprod_{\mathclap{\alphA^*}}\langL(\objIn)$;
  \item
    $\displaystyle\autoInit(\langL)(\symbLeft)\colon\langL(\objIn)\to\coprod_{\mathclap{\alphA^*}}\langL(\objIn)$
    is given by the coproduct injection corresponding to
    $\varepsilon$, for this reason we will denote this map by
    $\varepsilon$;
  \item
    $\displaystyle\autoInit(\langL)(\symbA)\colon\coprod_{\mathclap{\alphA^*}}\langL(\objIn)\to\coprod_{\mathclap{\alphA^*}}\langL(\objIn)$
    is given on the $w$-component $\langL(\objIn)$ by the coproduct
    injection corresponding to $w\symbA$;
  \item
    $\displaystyle\autoInit(\langL)(\symbRight)\colon\coprod_{\mathclap{\alphA^*}}\langL(\objIn)\to\langL(\objOut)$
    is the coproduct of the morphisms
    $\langL(\symbLeft w\symbRight)\colon
    \langL(\objIn)\to\langL(\objOut)$ with $w\in A^*$, that is, it
    computes the value of the language on a given word, for this
    reason we will also denote this map by $\intro*\langLacc$.
  \end{itemize}
\end{lemma}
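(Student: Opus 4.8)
The plan is to exhibit $\autoInit(\langL)$ as an object of $\catAuto(\langL)$ and then show it admits a unique $\catC$-automata morphism into every object of $\catAuto(\langL)$. First I would check that the listed data really defines a functor $\catI\to\catC$ accepting $\langL$. Since $\catI$ is the category freely generated by the graph with vertices $\objIn,\objStates,\objOut$ and generating edges $\symbLeft$, $\symbA$ (for $a\in\alphA$) and $\symbRight$, a functor out of $\catI$ is specified with no relations to verify: it suffices to pick images of the three objects — here $\langL(\objIn)$, $\coprod_{\alphA^*}\langL(\objIn)$ and $\langL(\objOut)$, the middle one existing precisely because $\catC$ has the required countable copower, $\alphA^*$ being countable as $\alphA$ is finite — together with images of the generating edges, which the statement does. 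To see that the resulting automaton accepts $\langL$, I would prove by induction on $|w|$ that $\autoInit(\langL)(w)\circ\autoInit(\langL)(\symbLeft)=\iota_w$, the $w$-th coproduct injection (using that $\autoInit(\langL)(\symbA)$ sends the $w'$-component injection to the $w'\symbA$-component injection), whence $\autoInit(\langL)(\symbLeft w\symbRight)=\langLacc\circ\iota_w=\langL(\symbLeft w\symbRight)$ by the very definition of $\langLacc$; so $\autoInit(\langL)\circ i=\langL$.

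Next, given an arbitrary $\autB\in\catAuto(\langL)$, a morphism $\alpha\colon\autoInit(\langL)\Rightarrow\autB$ in $\catAuto(\langL)$ is a natural transformation with $\alpha_{\objIn}=\id_{\langL(\objIn)}$ and $\alpha_{\objOut}=\id_{\langL(\objOut)}$, so only $\alpha_{\objStates}\colon\coprod_{\alphA^*}\langL(\objIn)\to\autB(\objStates)$ is a priori free. Naturality at $\symbLeft$ forces $\alpha_{\objStates}\circ\iota_\varepsilon=\autB(\symbLeft)$, and naturality at each $\symbA$ forces $\alpha_{\objStates}\circ\iota_{w\symbA}=\autB(\symbA)\circ(\alpha_{\objStates}\circ\iota_w)$; by induction this pins down $\alpha_{\objStates}\circ\iota_w=\autB(\symbLeft w)$ for every $w\in\alphA^*$, so by the universal property of the coproduct there is at most one such $\alpha_{\objStates}$. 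For existence, I would define $\alpha_{\objStates}$ to be the cotuple of the family $\bigl(\autB(\symbLeft w)\bigr)_{w\in\alphA^*}$ — which exists since $\coprod_{\alphA^*}\langL(\objIn)$ is a coproduct — and check the three naturality squares by precomposing with each $\iota_w$: the square for $\symbLeft$ reduces to $\autB(\symbLeft)=\autB(\symbLeft)$, the one for $\symbA$ to $\autB(\symbLeft w\symbA)=\autB(\symbA)\circ\autB(\symbLeft w)$ (functoriality of $\autB$), and the one for $\symbRight$ to $\langL(\symbLeft w\symbRight)=\autB(\symbRight)\circ\autB(\symbLeft w)=\autB(\symbLeft w\symbRight)$, which holds because $\autB$ accepts $\langL$. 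Together these two paragraphs give existence and uniqueness of $\alpha$, so $\autoInit(\langL)$ is initial in $\catAuto(\langL)$.

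The argument is essentially bookkeeping; the only points needing care are the correct reading of composition in the free category $\catI$ (so that the inductive identifications $\autoInit(\langL)(w)\circ\iota_\varepsilon=\iota_w$ and $\alpha_{\objStates}\circ\iota_w=\autB(\symbLeft w)$ are oriented properly) and the remark that the countable-copower hypothesis is used exactly twice: to form $\autoInit(\langL)(\objStates)$ and to form the comparison map $\alpha_{\objStates}$ as a cotuple. I do not anticipate any genuine obstacle.
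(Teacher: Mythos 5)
Your proof is correct and follows essentially the same strategy the paper uses: the paper recalls Lemma~\ref{lem:init-aut} from \cite{colcopetr20:automin} without reproving it, but its appendix proof of the analogous Lemma~\ref{lem:initial-QT-biautomaton} does exactly what you do — define the state-object component of the comparison morphism as the cotuple of the maps $\autB(\symbLeft q)$ via the universal property of the coproduct, verify naturality by precomposing each square with the coproduct injections, and deduce uniqueness from the same universal property. No gaps.
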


\begin{example}
  \label{example:initial-automata}
  Since the categories $\catSet$, $\catVec$ and $\KlT$ have all
  copowers, the "initial automaton" for a given "language" can be
  easily computed in these cases as an instance of the above lemma. We
  recall the details for $\catSet$ and $\KlT$.

  \centering{ \begin{tikzcd}[baseline=(current bounding
      box.north),outer sep=0pt,inner sep=0pt,column sep=5em] 1 \arrow[r,
      "\varepsilon"] & \alphA^* \arrow["w\mapsto wa"', loop,
      distance=2em, in=125, out=55] \arrow[r, "\langLacc"] & 2
    \end{tikzcd} \qquad\qquad\qquad
    \begin{tikzcd}[column sep=5em]
      1 \arrow[r,negated, "({\varepsilon,\varepsilon})"] & \alphA^*
      \arrow["w\mapsto ({\varepsilon,wa})"', loop, negated,
      distance=2em, in=125, out=55] \arrow[r, negated, "\langLacc"] &
      1
    \end{tikzcd}
  }

  \begin{itemize}
  \item Given a language $\langL\colon\catO\to\catSet$, the initial
    deterministic automaton accepting $\langL$ is described above in
    the left diagram.  Its state space is the set of all words, with
    $\varepsilon$ being the initial one. A word is accepted if and
    only if it belongs to the language.
  \item For a language $\langL\colon\catO\to\KlT$, the initial
    "subsequential transducer" accepting $\langL$ is as depicted in
    the right diagram.  Its state space is $\alphA^*$, the initial
    state is $\varepsilon\in\alphA^*$ with initial output
    $\varepsilon\in\alphB^*$. For an input letter $\symbA\in\alphA$,
    the corresponding transition maps $w$ to $wa$ and produces output
    $\varepsilon\in\alphB^*$.  Finally, the map $\langLacc$, which is
    in fact a function from $\alphA^*$ to $\alphB^*+1$, associates to
    a word $w$ the value of the language at $w$, that is, the value
    computed by $\langL(\symbLeft w\symbRight)$.
  \end{itemize}
\end{example}

\begin{lemma}
  \label{lem:final-aut}
  \AP Fix a "language" $\langL\colon\catO\to\catC$.  If the category
  $\catC$ has countable powers of $\langL(\objOut)$, the ""final
  automaton"" $\autoFinal(\langL)$ exists and is given by the
  following data:
  \begin{itemize}
  \item
    $\displaystyle\autoFinal(\langL)(\objStates)=\prod_{\mathclap{\alphA^*}}\langL(\objOut)$;
  \item
    $\displaystyle\autoFinal(\langL)(\symbLeft)\colon\langL(\objIn)\to\prod_{\mathclap{\alphA^*}}\langL(\objOut)$
    is the product of the morphisms
    $\langL(\symbLeft w\symbRight)\colon
    \langL(\objIn)\to\langL(\objOut)$ with $w\in A^*$, for this reason
    we will also denote this map by $\langL$;
  \item
    $\displaystyle\autoFinal(\langL)(\symbA)\colon\prod_{\mathclap{\alphA^*}}\langL(\objOut)\to\prod_{\mathclap{\alphA^*}}\langL(\objOut)$
    is the product over $w\in\alphA^*$ of the $\symbA w$-projections
    $\displaystyle\prod_{\mathclap{\alphA^*}}\langL(\objOut)\to\langL(\objOut)$;
  \item
    $\displaystyle\autoFinal(\langL)(\symbRight)\colon\coprod_{\mathclap{\alphA^*}}\langL(\objOut)\to\langL(\objOut)$
    is given by the $\varepsilon$-projection, for this reason we will
    also denote this map by $\intro*\varEpsAcc$.
  \end{itemize}
\end{lemma}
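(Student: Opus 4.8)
The plan is to derive \cref{lem:final-aut} from \cref{lem:init-aut} by dualising, exploiting the self-duality of the "input category" $\catI$ --- the very symmetry that, as recalled in the introduction, underpins the duality between reachability and observability. Concretely, $\catI$ carries an isomorphism $D\colon\catI\xrightarrow{\sim}\catI^{\mathrm{op}}$ which fixes $\objStates$, swaps $\objIn$ with $\objOut$, swaps $\symbLeft$ with $\symbRight$, and fixes every loop $\symbA$; on a word-loop $w=a_1\cdots a_n$ it then necessarily acts by reversal, since $w$ is the categorical composite $a_n\circ\cdots\circ a_1$ and $D$ reverses composition, and it restricts to an isomorphism $\catO\xrightarrow{\sim}\catO^{\mathrm{op}}$ sending the morphism $\symbLeft w\symbRight$ to (the reverse of) $\symbLeft w^{\mathrm{rev}}\symbRight$.

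Precomposition with $D$ carries a "language" $\langL\colon\catO\to\catC$ to a "language" $\widehat{\langL}\colon\catO\to\catC^{\mathrm{op}}$ with $\widehat{\langL}(\objIn)=\langL(\objOut)$, $\widehat{\langL}(\objOut)=\langL(\objIn)$ and $\widehat{\langL}(\symbLeft w\symbRight)=\langL(\symbLeft w^{\mathrm{rev}}\symbRight)$, and it upgrades to an isomorphism of categories $\catAuto(\langL)\cong\catAuto(\widehat{\langL})^{\mathrm{op}}$ exchanging initial and final objects. Since $\catC$ has countable powers of $\langL(\objOut)$ precisely when $\catC^{\mathrm{op}}$ has countable copowers of $\widehat{\langL}(\objIn)$, \cref{lem:init-aut} applies to $\widehat{\langL}$ inside $\catC^{\mathrm{op}}$, producing the "initial automaton" $\autoInit(\widehat{\langL})$; transporting it back along $D$ yields a final object of $\catAuto(\langL)$, which will be the claimed $\autoFinal(\langL)$. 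It then remains to unwind the explicit description of $\autoInit(\widehat{\langL})$ through $D$: the copower $\coprod_{\alphA^*}\widehat{\langL}(\objIn)$ turns into $\prod_{\alphA^*}\langL(\objOut)$ (after re-indexing $\alphA^*$ by $w\mapsto w^{\mathrm{rev}}$), the injection ``$\varepsilon$'' turns into the $\varepsilon$-indexed projection $\autoFinal(\langL)(\symbRight)$, the injection into the $w\symbA$-component turns into the projection onto the $\symbA w$-component $\autoFinal(\langL)(\symbA)$, and $\langLacc$ --- the copairing of all values of the language --- turns into their pairing, i.e.\ $\autoFinal(\langL)(\symbLeft)$.

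I expect the one genuinely error-prone point to be exactly this last translation: three reversals interact (the word reversal built into $D$, the $\symbLeft\leftrightarrow\symbRight$ swap, and the reversal of composition order imposed by passing to $\catC^{\mathrm{op}}$), and one must check they combine so that the projections end up indexed by $\symbA w$ rather than $w\symbA$. If one prefers to avoid this bookkeeping, the statement can be proved head-on with no subtlety at all: because $\catI$ is the free category on a graph, the listed data defines a functor $\catI\to\catC$ with no further coherence to verify, provided only that the countable power of $\langL(\objOut)$ exists; it "accepts" $\langL$ by the one-line computation $\autoFinal(\langL)(\symbLeft w\symbRight)=\pi_\varepsilon\circ\autoFinal(\langL)(w)\circ\autoFinal(\langL)(\symbLeft)=\pi_w\circ\autoFinal(\langL)(\symbLeft)=\langL(\symbLeft w\symbRight)$, using $\pi_v\circ\autoFinal(\langL)(w)=\pi_{wv}$; and it is final because, for any automaton $\autB$ accepting $\langL$, naturality of a "$\catC$-automata morphism" $\alpha\colon\autB\Rightarrow\autoFinal(\langL)$ at the loops $\symbA$ and at $\symbRight$ forces $\pi_w\circ\alpha_{\objStates}=\autB(w\symbRight)$ for all $w\in\alphA^*$, so $\alpha_{\objStates}$ is uniquely determined by the universal property of the product, while conversely this formula does define a natural transformation, the square at $\symbLeft$ reducing to the hypothesis that $\autB$ accepts $\langL$.
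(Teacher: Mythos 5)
Your proposal is correct, and it matches the approach the paper (implicitly) takes: the lemma is recalled from the prior minimization paper without a printed proof here, and for the analogous biautomaton statement the authors prove the initial case by direct verification of the universal property and then declare the final case ``perfectly dual'' --- which is exactly the pair of arguments you give. Both your duality route (with the correct word-reversal bookkeeping) and your direct verification via $\pi_v\circ\autoFinal(\langL)(w)=\pi_{wv}$ are sound.
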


\begin{example}[the "final automata" in $\catSet$, $\catVec$ and
  $\KlT$]
  \label{example:final-automata}
  Since the categories $\catSet$ and $\catVec$ have all products, the
  "final automaton" for a given "language" can be computed using
  \cref{lem:final-aut}. We illustrate this for $\catSet$ and $\KlT$.

  \centering{\begin{tikzcd}[column sep=5em] 1 \arrow[r, "\langL"] & 2^{\alphA^*}
      \arrow["K\mapsto a^{-1}K"', loop, distance=2em, in=125, out=55]
      \arrow[r, "K\mapsto K(\varepsilon)"] & 2
    \end{tikzcd} \qquad\qquad\qquad
    \begin{tikzcd}[column sep=5em]
      1 \arrow[r,negated, "({\lcp(\langL),\red(\langL)})"] & \IrrAB
      \arrow["{K\mapsto(\lcp(K),\red(K))}"', loop, negated,
      distance=2em, in=125, out=55] \arrow[r, negated, "K\mapsto
      K(\varepsilon)"] & 1
    \end{tikzcd}
  }
      
  \begin{itemize}
  \item Given a language $\langL\colon\catO\to\catSet$, the final
    deterministic automaton accepting $\langL$ is described above in
    the left diagram.  Its state space is the set of all languages
    over the alphabet~$\alphA$. The initial state is the
    language~$\langL$ itself. A language is an accepting state if and
    only if it contains~$\varepsilon$.  Given a language~$K$, while
    reading letter~$a$, the automaton goes to the residual
    language~$a^{-1}K = \{u\in K\mid au\in K\}$.
  \item Somewhat suprisingly, $\KlT$-automata also fit in the scope of
    \cref{lem:final-aut}, as we can prove that the object $\IrrAB$
    (which we will define next) is the power of $\alphA^*$-many copies
    of $1$ in $\KlT$.
    \AP Define first, given a "transduction"~$K$, $\intro*\lcp(K)$ to
    be undefined if~$K$ is nowhere defined, and the longest common
    prefix of the words in $\{K(u)\mid u\in\alphA^*\}$ otherwise.  \AP
    A "transduction"~$K$ is ""irreducible"" if~$\lcp(K)=\varepsilon$.
    We denote by ~$\intro*\IrrAB$ the set of "irreducible
    transductions".  \AP For all~$K$ not nowhere defined, we put
    $\intro*\red(K)$ to be the only "irreducible transduction" such
    that~$K(u)=\lcp(K)\red(K)(u)$, i.e. the "transduction" in which
    the prefix~$\lcp(K)$ has been stripped away from all
    outputs. For~$K$ nowhere defined, let $\reintro*\red(K)$ be also
    nowhere defined.

    We can describe now the "final automaton" for a
    "transduction"~$\langL$ as an automaton that has "irreducible
    transductions" as states.  The initial map is the constant map
    equal to~$({\lcp(\langL),\red(\langL)})$ (or undefined if~$\langL$
    is nowhere defined).  
    When reading the letter~$a$ from state~$K$, the
    automaton jumps to~$\red(K(a-))$ in which~$K(a-)$ is such
    that~$K(a-)(u)=K(au)$ (or undefined if~$K(a-)$ is nowhere
    defined). The final map sends an "irreducible
    transduction" to $K(\varepsilon)$.
  \end{itemize}
\end{example}

\begin{wrapfigure}{r}{5.3cm}\vspace{-.5cm}
  \begin{tikzcd}[row sep = 2.8em, column sep=1.4em]
    &\displaystyle\coprod_{\mathclap{\alphA^*}}\langL(\objIn) \arrow[twoheadrightarrow]{d}{\intro*\epiMin} \arrow[rightarrow,bend left]{rd}{\langLacc}& \\
    \langL(\objIn) \arrow[rightarrow, bend right]{rd}[swap]{\langL}
    \arrow[rightarrow, bend left]{ru}{\varepsilon}
    \arrow[rightarrow]{r} &
    \objMin(\langL)(\objStates)
    \arrow[rightarrow]{r} \arrow[rightarrowtail]{d}{\intro*\monoMin} & \langL(\objOut) \\
    &\displaystyle\prod_{\mathclap{\alphA^*}}\langL(\objOut)
    \arrow[rightarrow, bend right]{ru}[swap]{\varEpsAcc}&
  \end{tikzcd}\vspace{-1.6cm}
\end{wrapfigure}
Combining Lemmas~\ref{lem:fact-cat-auto},~\ref{lem:init-aut}
and~\ref{lem:final-aut} with Corollary~\ref{cor:min}, we obtain the
following result.
\begin{theorem}
  \label{minimalwordauto}
  Let $\catC$ be a category with a factorization system
  $(\Epi, \Mono)$ and let $\langL\colon\catO\to\catC$ be a
  "language". Suppose $\catC$ has all countable copowers of
  $\langL(\objIn)$ and all countable powers of $\langL(\objOut)$.  The
  "minimal" "$\catC$-automaton@@1" $\objMin(\langL)$ "accepting"
  $\langL$ is obtained via the factorization in the commuting diagram
  to the right.
\end{theorem}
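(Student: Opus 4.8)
The plan is to derive this theorem purely by assembling the four results already stated, instantiating $\catK$ in \cref{lemma:minimality} (equivalently, applying \cref{cor:min}) to the category $\catAuto(\langL)$.

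First I would check that $\catAuto(\langL)$ meets the hypotheses of \cref{cor:min}. By \cref{lem:fact-cat-auto}, the factorization system $(\Epi,\Mono)$ on $\catC$ induces one on $\catAuto(\langL)$, computed componentwise. Since $\catC$ has countable copowers of $\langL(\objIn)$, \cref{lem:init-aut} furnishes the initial automaton $\autoInit(\langL)$, whose state object is $\coprod_{\alphA^*}\langL(\objIn)$ and whose structure maps $\varepsilon$ and $\langLacc$ are as recalled there; dually, since $\catC$ has countable powers of $\langL(\objOut)$, \cref{lem:final-aut} furnishes $\autoFinal(\langL)$, with state object $\prod_{\alphA^*}\langL(\objOut)$ and structure maps $\langL$ and $\varEpsAcc$. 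Hence \cref{cor:min} applies and identifies $\objMin(\langL)$ as the middle object of the factorization of the unique $\catAuto(\langL)$-morphism $\autoInit(\langL)\Rightarrow\autoFinal(\langL)$.

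Next I would spell out that factorization. Because every arrow of $\catAuto(\langL)$ is the identity on the $\objIn$- and $\objOut$-components, the unique morphism $\autoInit(\langL)\Rightarrow\autoFinal(\langL)$ is determined by its $\objStates$-component, which, by naturality along $\symbLeft$ and $\symbRight$, is forced to be the map $\coprod_{\alphA^*}\langL(\objIn)\to\prod_{\alphA^*}\langL(\objOut)$ whose composite with the $u$-injection and the $v$-projection equals $\langL(\symbLeft uv\symbRight)$; equivalently, it is $\langLacc$ viewed into the product, or $\langL$ viewed out of the coproduct. As the factorization system on $\catAuto(\langL)$ is pointwise, factorizing this morphism amounts to factorizing that $\objStates$-component in $\catC$ as $\epiMin$ followed by $\monoMin$, with $\objMin(\langL)(\objStates)$ the object in between; the transitions $\objMin(\langL)(\symbA)$ are recovered by the diagonal fill-in, and the four outer triangles of the displayed diagram (those carrying $\varepsilon$, $\langLacc$, $\langL$, $\varEpsAcc$) are precisely the naturality squares of the unique morphism at $\symbLeft$ and $\symbRight$, transported through the factorization.

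The one genuinely load-bearing point — and the place I would be most careful — is that the factorization in $\catAuto(\langL)$ is computed componentwise and that the components at $\objIn$ and $\objOut$, being identities, factor trivially, so the middle functor still accepts $\langL$ and is a legitimate object of $\catAuto(\langL)$; all of this is exactly the content of \cref{lem:fact-cat-auto}. Granting that, the theorem is a direct consequence of \cref{cor:min} together with the explicit descriptions of the initial and final automata in \cref{lem:init-aut,lem:final-aut}, so beyond the careful bookkeeping of which structure map is which in the diagram there is no real obstacle.
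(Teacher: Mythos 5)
Your proposal is correct and follows exactly the route the paper takes: the theorem is stated there as an immediate consequence of combining \cref{lem:fact-cat-auto}, \cref{lem:init-aut} and \cref{lem:final-aut} with \cref{cor:min}, which is precisely your assembly. Your additional care about the componentwise factorization and the identification of the $\objStates$-component of the unique morphism is a sound (and welcome) elaboration of details the paper leaves implicit.
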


\section{The basic \FunL{} algorithm}
\label{sec:core-alg}

In this section, we provide our generic \FunL{-}algorithm for learning
"word automata". Just as in "Angluin's algorithm@$\Lstar$-algorithm",
there are a teacher and
a learner. Throughout this section we fix the alphabet $\alphA$, the
output category $\catC$ and its factorization systems $(\Epi,\Mono)$,
all known to both teacher and learner.
\AP The ""teacher"" knows a "language" $\langL\colon\catO\to\catC$,
hereafter called the ""target language"". The learner wants to find
this language, the output of the algorithm being the "minimal
automaton" $\objMin(\langL)$ "accepting"
$\langL$. The learner can ask two kinds of queries, which can be thought as
high-level generalizations of Angluin's original ones in the special
case of deterministic automata (see \cite{angluin87:learning}).
\begin{itemize}
\item \AP ""Evaluation queries"": given a certain word $w$, what is
  $\langL(\symbLeft w \symbRight)$?
\item \AP ""Equivalence queries"": does a certain "automaton" "accept"
  the "target language"? If it does not, what is a "counterexample"
  for it not doing that?
\end{itemize}

\AP Let $\mathcal{A}$ be an "automaton" which is incorrect, that is,
such that $\mathcal{A}\circ i \neq \langL$, $\langL$ being the "target
language"; a word $w$ is said to be a ""counterexample"" if
$\mathcal{A}\circ i (\symbLeft w \symbRight) \neq \langL(\symbLeft w
\symbRight)$.
In other words, a "counterexample" witnesses the incorrectness of a
certain "automaton" proposed by the learner.

In order to formulate the generic algorithm, we still need to
generalize the notions of table and "hypothesis automaton" from
Angluin's original algorithm. We do this in
Section~\ref{sec:hypothesis-automata}. We provide the generic
algorithm and prove its correctness and termination in
Section~\ref{subsection:learning-algorithm}.
\subsection{Hypothesis automata}
\label{sec:hypothesis-automata}

Just as in Angluin's "$\Lstar$-algorithm", the learner keeps in memory
a pair $(\intro*\setQ,\intro*\setT)$ of subsets of $\alphA^*$ such
that $\setQ$ is prefix-closed, i.e. it contains the prefixes of all
its elements, while $\setT$ is suffix-closed, the same for the
suffixes; in particular, $\varepsilon \in \setQ \cap \setT$.
Using the "evaluation queries", the learner produces an approximation
of $\objMin(\langL)$, explicitly a "hypothesis automaton", to be
introduced in
Definition~\ref{def:hypothesis-automaton}.

It turns out that the category $\catAuto(\langL)$ does not suffice to
capture the whole learning process. At a given stage of the algorithm,
the learner has access, via "evaluation queries", only to a part
of $\langL$: specifically, he knows the values of
$\langL(\symbLeft qt\symbRight)$ and
$\langL(\symbLeft qat \symbRight)$, where $q\in Q$, $t\in T$ and
$a\in\alphA$. This leads us to consider a restriction of the language
$\langL$ to a subcategory of $\catO$ whose arrows are of the form
$\symbLeft qt\symbRight$ or $\symbLeft qat\symbRight$ as above. To
produce a "hypothesis automaton" "consistent" with this partial view of
$\langL$, we would also need to adapt the "input category". A first
attempt would be to discard some of the arrows of $\catI$ from
$\objIn$ to $\objStates$, respectively from $\objStates$ to
$\objOut$. Explicitly, we would like to keep only the arrows of the
form $\symbLeft q\colon\objIn\to\objStates$ for the state words
$q\in Q$ and, respectively, $t \symbRight\colon\objStates\to\objOut$ for
the test words $t\in T$. However, this is not feasible: we would also
need the transition maps $\symbA\colon\objStates\to\objStates$, and via composition we would generate, for example, all arrows
$\symbLeft w\colon\objIn\to\objStates$. The solution is to
``dissociate'' the state object $\objStates$ in $\catI$ and
consider a four-state input category.

\begin{definition}
  \label{def:catIQT}
  \AP The "input category" $\intro*\catIQT$ is the free category
  generated by the graph
  \begin{center}
    \begin{tikzcd}
      \objIn \arrow[rightarrow]{r}{\intro*\symbLeftQ} &
      \intro*\objBiStates1 \arrow[rightarrow, shift
      left]{r}{\intro*\symbAQT} \arrow[rightarrow, shift
      right]{r}[swap]{\intro*\symbEpsilon} & \intro*\objBiStates2
      \arrow[rightarrow]{r}{\intro*\symbRightT} & \objOut
    \end{tikzcd}
  \end{center}
  for all $q \in \setQ$, $a \in \alphA$, $t \in \setT$ and with
  $\varepsilon$ a fixed symbol (informally representing the empty
  word) such that the following \AP ""coherence diagrams"" commute for
  all $a\in\alphA$, for all $q\in Q$ such that $qa\in Q$, and for all
  $t\in T$ such that $at\in T$:
  \begin{center}
    \begin{tikzcd}[row sep=0em]
      & \objBiStates1 \arrow[rightarrow]{rd}{\symbAQT} & &     & \objBiStates2 \arrow[rightarrow]{rd}{\symbRightT} & \\
      \objIn \arrow[rightarrow]{ru}{\symbLeftQ} \arrow[rightarrow]{rd}[swap]{\symbLeftQ\symbAQT} & & \makebox[1.5em][r]{$\objBiStates2$};  &     \makebox[1.5em][l]{$\objBiStates1$} \arrow[rightarrow]{ru}{\symbAQT} \arrow[rightarrow]{rd}[swap]{\symbEpsilon} & & \objOut.  \\
      & \objBiStates1 \arrow[rightarrow]{ru}[swap]{\symbEpsilon} & & &
      \objBiStates2 \arrow[rightarrow]{ru}[swap]{\symbAQT\symbRightT}
      &
    \end{tikzcd}
  \end{center}
  \AP Furthermore, let $\intro*\catOQT$ denote the full subcategory of
  $\catIQT$ on the objects $\objIn$ and $\objOut$.
\end{definition}

\AP The two "coherence diagrams" in the definition of $\catIQT$, as well
as the prefix-closure of $Q$ and the suffix-closure of $T$, ensure
that we have a functor \[\intro*\iStar\colon\catIQT\to\catI\] which merges
$\objBiStates1$ and $\objBiStates2$ sending both of them to
$\objStates$, maps $\varepsilon\colon \objBiStates1\to \objBiStates2$
to the identity on $\objStates$ and maps all the other morphisms of
$\catIQT$ to the homonymous ones in $\catI$.

\begin{lemma}
  \label{lem:input-cats}
  The functor $\iStar\colon\catIQT\to\catI$ is well defined and,
  furthermore, $\catOQT$ is a subcategory of $\catO$. That is, we have
  the following commuting diagram:
  \begin{center}
    \begin{tikzcd}
      \catOQT \arrow[d, hook] \arrow[r, hook] & \catO \arrow[d, hook] \\
      \catIQT \arrow[r, "\iStar"] & \catI.
    \end{tikzcd}
  \end{center}
\end{lemma}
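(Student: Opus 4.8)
The plan is to obtain $\iStar$ from the universal property of $\catIQT$ as a category presented by generators and relations, and then to derive the second half essentially for free from the fact that $\iStar$ fixes the two objects $\objIn$ and $\objOut$. Concretely, to define the functor it suffices to give a morphism of graphs from the generating graph of $\catIQT$ into $\catI$ that respects the two families of coherence diagrams. On objects I send $\objIn\mapsto\objIn$, $\objOut\mapsto\objOut$ and $\objBiStates1,\objBiStates2\mapsto\objStates$; on generating edges I send $\symbLeftQ\mapsto\symbLeft q$, $\symbAQT\mapsto a$, $\symbEpsilon\mapsto\id_{\objStates}$ and $\symbRightT\mapsto t\symbRight$, each of which visibly has the prescribed source and target. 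It then remains to check that the images of the two coherence diagrams commute in $\catI$: in the left diagram the top path $\symbLeftQ[q]\,\symbAQT$ is sent to $\symbLeft q$ followed by the loop $a$, that is to the morphism $\symbLeft(qa)$ of $\catI$, while the bottom path $\symbLeftQ[qa]\,\symbEpsilon$ is sent to $\symbLeft(qa)$ followed by $\id_{\objStates}$, again $\symbLeft(qa)$; symmetrically, in the right diagram both paths are sent to the morphism $\objStates\to\objOut$ of $\catI$ corresponding to the word $at$. One should observe that the statements of these relations already mention the edges $\symbLeftQ[qa]$ and $\symbRightT[at]$, which belong to the generating graph precisely because $qa\in\setQ$ (prefix-closure of $\setQ$) and $at\in\setT$ (suffix-closure of $\setT$); this is where those closure hypotheses enter. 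The universal property then yields a well-defined functor $\iStar$ with the action announced just before the lemma.

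For the commuting square, since $\iStar$ fixes $\objIn$ and $\objOut$ and since $\catOQT$, $\catO$ are by definition the full subcategories of $\catIQT$, $\catI$ on these two objects, the restriction of $\iStar$ along the full-subcategory inclusion $\catOQT\hookrightarrow\catIQT$ automatically factors through $\catO\hookrightarrow\catI$; this provides the top functor and makes the square commute. What still requires an argument, and is the delicate part, is that this restriction $\catOQT\to\catO$ is faithful, so that $\catOQT$ is genuinely (isomorphic to) a subcategory of $\catO$.

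To prove faithfulness, note that in $\catIQT$ the endomorphism sets of $\objIn$ and of $\objOut$ are singletons and there is no morphism $\objOut\to\objIn$, so the claim reduces to injectivity of $\iStar$ on the morphisms $\objIn\to\objOut$. The generating graph of $\catIQT$ is finite, acyclic and oriented from $\objIn$ towards $\objOut$, so every such morphism is, before quotienting, a composite along a path, hence of one of the two forms $\symbLeftQ[q]\,\symbAQT[a]\,\symbRightT[t]$ or $\symbLeftQ[q]\,\symbEpsilon\,\symbRightT[t]$ with $q\in\setQ$, $t\in\setT$, $a\in\alphA$; call these of \emph{letter-type} and \emph{$\varepsilon$-type} respectively. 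Under $\iStar$ they map to the morphisms $\symbLeft qat\symbRight$ and $\symbLeft qt\symbRight$ of $\catO$, so faithfulness amounts to showing that any two such expressions denoting the same word $w$ are identified by the congruence generated by the two coherence relations. Reading the left relation as rewriting $\symbLeftQ[q]\,\symbAQT[a]$ into $\symbLeftQ[qa]\,\symbEpsilon$ whenever $qa\in\setQ$, and the right one as rewriting $\symbAQT[a]\,\symbRightT[t]$ into $\symbEpsilon\,\symbRightT[at]$ whenever $at\in\setT$, one checks that the factorisations $w=q\cdot t$ with $q\in\setQ,\,t\in\setT$ realised by an $\varepsilon$-type morphism form a contiguous interval of cut points (prefix-closure of $\setQ$ and suffix-closure of $\setT$ exactly forbid gaps), that all $\varepsilon$-type morphisms over $w$ are linked through this interval by those rewrites, and that every letter-type morphism over $w$ is linked to some $\varepsilon$-type one, the only potentially obstructing configuration being once more excluded by prefix- and suffix-closure. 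This last combinatorial bookkeeping is the main obstacle; everything else is routine.
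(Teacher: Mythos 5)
Your proposal is correct and follows essentially the same route as the paper's own proof: the heart of the matter in both is that the two coherence relations let one slide the cut point of a decomposition $\symbLeft q a t\symbRight$ (or $\symbLeft q t\symbRight$) of a fixed word $w$, with prefix-closure of $\setQ$ and suffix-closure of $\setT$ guaranteeing that all intermediate decompositions are available, so that all parallel arrows $\objIn\to\objOut$ over $w$ are identified; your ``contiguous interval of cut points'' formulation is just a reorganisation of the paper's inductive chain of rewrites. Your treatment of well-definedness of $\iStar$ via the universal property of the presented category, and the reduction of the subcategory claim to injectivity on $\hom(\objIn,\objOut)$, are both sound, and the final combinatorial step, though only sketched, is stated correctly and at the same level of detail as the paper's own argument.
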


\AP The partial knowledge of the language $\langL$
 the learner
has access to at this given stage of the algorithm is captured by the
restriction $\intro*\langLQT$ of $\langL$ to $\catOQT$:
\begin{center}
  \begin{tikzcd}
    \langLQT\colon\catOQT\ar[r,hook]&\catO \ar[r,"\langL"]&\catC.
  \end{tikzcd}
\end{center}

Hence, to a pair $(Q,T)$ we can associate the category
$\catAuto(\langLQT)$ obtained by instantiating in
Definition~\ref{def:cat-auto} the "input category" $\catIstart$ with
$\catIQT$ and its observable behaviour subcategory with
\begin{tikzcd}
  \catOQT\ar[r,hook]&\catIQT\,.
\end{tikzcd}

\begin{definition}
  \AP We call a functor $\intro*\autB$ in $\catAuto(\langLQT)$ a
  ""${(\setQ,\setT)}$-biautomaton"" $\autB$ or a
  ""$\catC_{\setQ,\setT}$-biautomaton"", if we want to underline the
  dependence on $\catC$. We say that $\autB$ is ""consistent""
  with the "$\catC$-language" $\langL$.
\end{definition}

In the "$\Lstar$-algorithm", the learner constructs a table associated to
each pair of subsets $(Q,T)$. This is done essentially by computing
the quotient of the state words in $Q$ by an approximation
$\sim_{\setT}$ of the Myhill-Nerode equivalence for a language $L$
given by: $w\sim_{\setT} v$ iff for all $t\in T$ we have
$wt\in L\Leftrightarrow vt\in L$. \AP This leads us to consider as a
generalization of the notion of \emph{table}
 the
""minimal biautomaton"" $\objMin(\langLQT)$ in the category
$\catAuto(\langLQT)$. In order to compute it, we use Corollary~\ref{cor:min}. To this end, we first exhibit
explicitly the initial and the final objects of $\catAuto(\langLQT)$,
assuming that the "output category" $\catC$ has got certain products
and coproducts.

We will use the following notation.  Given two subsets $R$ and $S$ of
$\alphA^*$, let $RS$ denote the set
$\left \{ xy \colon x \in R, y \in S \right \}$.

\begin{lemma}
  \label{lem:initial-QT-biautomaton}
  \AP Assume $\catC$ has all countable copowers of $\langL(\objIn)$. The initial
  "$\catC_{\setQ,\setT}$-biautomaton" is the functor
  $\autoInit(\langLQT)\colon \catIQT\to\catC$ described in the next
  diagram
  \begin{center}
    \begin{tikzcd}[column sep=3em]
      \langL(\objIn) \arrow[rightarrow]{r}{\intro*\symbLeftQInit} &
      \displaystyle\coprod_{\mathclap{\setQ}}\langL(\objIn)
      \arrow[rightarrow, shift left]{r}{\intro*\symbAQTInit}
      \arrow[rightarrow, shift
      right]{r}[swap]{\intro*\symbEpsilonInit} &
      \displaystyle\coprod_{\mathclap{\setQ\cup
          \setQ\alphA}}\langL(\objIn)
      \arrow[rightarrow]{r}{\intro*\symbRightTInit} & \langL(\objOut),
    \end{tikzcd}
  \end{center}
  where, explicitly:
  \begin{itemize}
  \item
    $\displaystyle\autoInit(\langLQT)(\objBiStates1)=\coprod_{\setQ}\langL(\objIn)$
    and
    $\displaystyle\autoInit(\langLQT)(\objBiStates2)=\coprod_{\setQ\cup
      \setQ\alphA}\langL(\objIn)$;
  \item $\symbLeftQInit:=\autoInit(\langLQT)(\symbLeftQ)$ is the
    coproduct injection $j_q$ of $\langL(\objIn)$ into
    $\displaystyle\coprod_{\mathclap{\setQ}}\langL(\objIn)$;
  \item $\symbEpsilonInit:=\autoInit(\langLQT)(\symbEpsilon)$ is the
    canonical inclusion between the two coproducts;
  \item $\symbAQTInit:=\autoInit(\langLQT)(\symbA)$ is obtained via
    the universal property as the coproduct over $q\in \setQ$ of the
    canonical injections
    $\displaystyle
    j_{qa}\colon\langL(\objIn)\to\coprod_{\mathclap{\setQ\cup
        \setQ\alphA}}\langL(\objIn)$;
  \item $\symbRightTInit:=\autoInit(\langLQT)(\symbRightT)$ is
    obtained via the universal property as the coproduct over
    $w\in \setQ\cup \setQ\alphA$ of the morphims
    $\langL(\symbLeft w t \symbRight)\colon\langL(\objIn)\to
    \langL(\objOut)$.
  \end{itemize}
\end{lemma}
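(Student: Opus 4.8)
The plan is to apply Lemma~\ref{lemma:minimality} (or rather its specialization Corollary~\ref{cor:min}) in the category $\catAuto(\langLQT)$, which reduces the task to exhibiting an initial object of this functor category and checking it has the claimed form. First I would recall that an object of $\catAuto(\langLQT)$ is a functor $\autB\colon\catIQT\to\catC$ with $\autB\circ(\catOQT\hookrightarrow\catIQT)=\langLQT$; by Lemma~\ref{lem:input-cats} such a functor is pinned down on the objects $\objIn,\objOut$ and on the composite arrows $\objIn\to\objOut$ (these must agree with $\langLQT$), and is otherwise free to choose the two state objects $\autB(\objBiStates1),\autB(\objBiStates2)$ and the structure maps $\symbLeftQ,\symbAQT,\symbEpsilon,\symbRightT$, subject only to the commutativity of the two "coherence diagrams". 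So a morphism in $\catAuto(\langLQT)$ from the candidate automaton $\autoInit(\langLQT)$ to an arbitrary $\autB$ is a pair of maps $\alpha_1\colon\coprod_\setQ\langL(\objIn)\to\autB(\objBiStates1)$ and $\alpha_2\colon\coprod_{\setQ\cup\setQ\alphA}\langL(\objIn)\to\autB(\objBiStates2)$ commuting with all four structure maps and with identity components at $\objIn$ and $\objOut$.

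Next I would verify existence and uniqueness of such a pair directly, using the universal property of the coproducts. Commutation with $\symbLeftQ$ forces, for each $q\in\setQ$, the composite $\alpha_1\circ j_q$ to equal $\autB(\symbLeftQ)$; since $\coprod_\setQ\langL(\objIn)$ is a coproduct indexed by $\setQ$, this determines $\alpha_1$ uniquely. Then commutation with $\symbAQT$ and with $\symbEpsilon$ forces $\alpha_2\circ j_{qa}=\autB(\symbAQT)\circ\autB(\symbLeftQ)$ for the $\setQ\alphA$-components and $\alpha_2\circ j_q=\autB(\symbEpsilon)\circ\autB(\symbLeftQ)$ for the $\setQ$-components; since $\coprod_{\setQ\cup\setQ\alphA}\langL(\objIn)$ is the coproduct indexed by $\setQ\cup\setQ\alphA$, this determines $\alpha_2$ uniquely as well. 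Conversely, defining $\alpha_1,\alpha_2$ by these formulas, one checks they assemble into a well-defined natural transformation: commutation with $\symbLeftQ,\symbAQT,\symbEpsilon$ holds by construction, and commutation with $\symbRightT$ follows because $\alpha_2\circ\symbRightTInit$ and $\autB(\symbRightT)\circ\alpha_2$ both compute, componentwise over $w\in\setQ\cup\setQ\alphA$, the arrow $\langL(\symbLeft wt\symbRight)$ dictated by accepting $\langLQT$ — here one uses the coherence diagrams to know $\autB$ agrees with $\langLQT$ on all the relevant composites. The identity-component condition at $\objIn,\objOut$ is immediate since both functors restrict to $\langLQT$ there.

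Finally I would note that the index set $\setQ\cup\setQ\alphA$ is the right one precisely because the free category $\catIQT$ has, from $\objBiStates1$ to $\objBiStates2$, exactly the arrow $\symbEpsilon$ and the arrows $\symbAQT$ (one per $a\in\alphA$), modulo the coherence relations; so reachable states of the initial biautomaton are indexed by words $q$ (via $\symbEpsilon\symbLeftQ$) and $qa$ (via $\symbAQT\symbLeftQ$), i.e. by $\setQ\cup\setQ\alphA$, and the copowers required exist by the hypothesis that $\catC$ has all countable copowers of $\langL(\objIn)$ ($\setQ$ and $\setQ\alphA$ being at most countable since $\alphA$ is finite and $\setQ\subseteq\alphA^*$). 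I expect the main obstacle to be the bookkeeping in the $\symbRightT$ check: one must be careful that for a word of the form $qa\in\setQ\alphA$ the relevant language value $\langL(\symbLeft qat\symbRight)$ is the one forced by the coherence diagram routing $\objIn\xrightarrow{\symbLeftQ}\objBiStates1\xrightarrow{\symbAQT}\objBiStates2\xrightarrow{\symbRightT}\objOut$ through $\symbLeftQ[qa]$ (using $qa\in Q$ when that holds) or through $\symbAQT$ applied to $\symbLeftQ$, and that these two routes agree in $\langLQT$ — this is exactly what the prefix-closure of $\setQ$ and the coherence diagrams guarantee, but it is the one place where the argument is not purely formal coproduct manipulation.
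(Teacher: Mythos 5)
Your proposal is correct and follows essentially the same route as the paper: both define the unique morphism to an arbitrary biautomaton $\autB$ componentwise via the universal properties of the two coproducts (the $q$-component of the second map being $\autB(\symbEpsilon)\circ\autB(\symbLeftQ)$ and the $qa$-component being $\autB(\symbAQT)\circ\autB(\symbLeftQ)$), then verify naturality on the four generator arrows, with the $\symbRightT$ triangle handled by the fact that $\autB$ restricts to $\langLQT$. Your closing remark about the coherence diagrams reconciling the two routes for words in $\setQ\cap\setQ\alphA$ is exactly the right point of care, and is implicit in the paper's proof as well.
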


Dually, we can describe the final "$\catC_{\setQ,\setT}$-biautomaton" as follows.
\begin{lemma}
  \label{lem:final-QT-biautomaton}
  \AP Assume $\catC$ has all countable powers of
  $\langL(\objOut)$. The final
  "$\catC_{\setQ,\setT}$-biautomaton" is the functor
  $\autoFinal(\langLQT)\colon
  \catIQT\to\catC$ described in the next diagram
  \begin{center}
    \begin{tikzcd}[column sep=3em]
      \langL(\objIn) \arrow[rightarrow]{r}{\intro*\symbLeftQFinal} &
      \displaystyle\prod_{\mathclap{\setT \cup \alphA
          \setT}}\langL(\objOut) \arrow[rightarrow, shift
      left]{r}{\intro*\symbAQTFinal} \arrow[rightarrow, shift
      right]{r}[swap]{\intro*\symbEpsilonFinal} &
      \displaystyle\prod_{\mathclap{\setT}}\langL(\objOut)
      \arrow[rightarrow]{r}{\intro*\symbRightTFinal} &
      \langL(\objOut),
    \end{tikzcd}
  \end{center}
  where, explicitly:
  \begin{itemize}
  \item
    $\displaystyle\autoFinal(\langLQT)(\objBiStates1)=\prod_{\setT\cup
      \alphA\setT}\langL(\objOut)$ and
    $\displaystyle\autoInit(\langLQT)(\objBiStates2)=\coprod_{\setT}\langL(\objOut)$;
  \item
    $\symbLeftQFinal:=\autoFinal(\langLQT)(\symbLeftQ)$ is obtained
    via the universal property as the product over $w\in \setT\cup
    \alphA \setT$ of the morphisms $\langL(\symbLeft
    qw\symbRight)\colon\langL(\objIn)\to\langL(\objOut)$;
  \item
    $\symbEpsilonFinal:=\autoFinal(\langLQT)(\symbEpsilon)$ is the
    canonical restriction  
    between the two products;
  \item
    $\symbAQTFinal:=\autoFinal(\langLQT)(\symbA)$ is obtained via the
    universal property of
    $\displaystyle\prod_{\mathclap{\setT}}\langL(\objOut)$ as the
    product over $t\in
    \setT$ of the canonical projections
    $\displaystyle\pi_{at}\colon\prod_{\mathclap{\setT\cup \alphA
        \setT}}\langL(\objOut)\to\langL(\objOut)$;
  \item
    $\symbRightTFinal:=\autoFinal(\langLQT)(\symbRightT)$ is the
    projection
    $\displaystyle\pi_t\colon\prod_{\mathclap{\setT}}\langL(\objOut)\to\langL(\objOut)$.
  \end{itemize}
\end{lemma}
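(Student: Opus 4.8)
The plan is to invoke \cref{cor:min}-style reasoning combined with the self-duality of the situation: \cref{lem:initial-QT-biautomaton} describes the initial object of $\catAuto(\langLQT)$, and the final object should be obtained by a ``reversed'' construction. Concretely, I would first note that $\catIQT$ has an evident self-duality: the opposite category $\catIQT^{\mathrm{op}}$ is isomorphic to the category $\mathcal{I}_{\overline{T},\overline{Q}}$ built from the reversed-word sets (reversing $\symbLeftQ$ and $\symbRightT$, swapping $\objIn\leftrightarrow\objOut$ and $\objBiStates1\leftrightarrow\objBiStates2$, and keeping $\symbEpsilon$ as the ``empty word'' symbol), precisely because the prefix-closure of $Q$ dualizes to the suffix-closure of the reversed set and vice versa, and the two "coherence diagrams" are swapped by this duality. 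Hence a final object in $[\catIQT,\catC]$ accepting $\langLQT$ is the same as an initial object in $[\catIQT^{\mathrm{op}},\catC^{\mathrm{op}}]$ accepting the dual language, and \cref{lem:initial-QT-biautomaton} applied in $\catC^{\mathrm{op}}$ (where countable powers of $\langL(\objOut)$ in $\catC$ are countable copowers in $\catC^{\mathrm{op}}$) yields exactly the stated formula.

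Alternatively, and perhaps more cleanly for a self-contained write-up, I would verify the claim directly. First I would check that the data in the diagram does define a functor: this amounts to checking that the images of the two "coherence diagrams" commute in $\catC$. For the left diagram one must show $\symbEpsilonFinal \circ \symbAQTFinal \circ \symbLeftQFinal = \symbLeftQFinal'$ where the primed map corresponds to $qa$; unfolding the universal-property definitions, both sides are the product over $t\in T$ of $\langL(\symbLeft qat\symbRight)$, using that $qa\in Q$ and $at\in T$ are exactly the words indexing the relevant components. The right "coherence diagram" is handled symmetrically. Then I would prove finality: given any $(Q,T)$-biautomaton $\autB$ "consistent" with $\langL$, I must exhibit a unique natural transformation $\alpha\colon\autB\Rightarrow\autoFinal(\langLQT)$ restricting to the identity on $\langLQT$. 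The component $\alpha_{\objBiStates1}\colon\autB(\objBiStates1)\to\prod_{T\cup \alphA T}\langL(\objOut)$ is forced by the universal property of the product: its $w$-component (for $w\in T\cup\alphA T$) must be $\autB$ applied to the unique morphism $\objBiStates1\to\objOut$ of the appropriate shape, namely $\autB(t\symbRight)\circ\autB(\symbEpsilon)$ when $w=t\in T$ and $\autB(at\symbRight)\circ\autB(\symbAQT)$ when $w=at$; naturality with respect to $\symbLeftQ$ and consistency with $\langLQT$ pin this down and force uniqueness. Likewise $\alpha_{\objBiStates2}$ is the tuple of the $\autB(t\symbRight)$ over $t\in T$. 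Finally one checks that these components are natural with respect to $\symbAQT$, $\symbEpsilon$, $\symbLeftQ$ and $\symbRightT$ — again pure unwinding of the universal properties.

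The main obstacle, and the step deserving the most care, is the bookkeeping of the \emph{index sets}: one must get right that $\autoFinal(\langLQT)(\objBiStates1)$ is indexed by $T\cup\alphA T$ while $\autoFinal(\langLQT)(\objBiStates2)$ is indexed by $T$, mirroring the way $\autoInit(\langLQT)(\objBiStates2)$ was indexed by $Q\cup Q\alphA$ and $\autoInit(\langLQT)(\objBiStates1)$ by $Q$. In particular the asymmetry between the two state objects — one carrying the ``$a$-prefixed'' tests and the other only the bare tests — is exactly what makes the $\symbEpsilon$-transition a restriction (a projection of the product) rather than an isomorphism in general, and this is the feature that later detects the "closed@@intro"/"consistent" condition. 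I would also flag the (minor) typo in the statement as written: the bullet for $\objBiStates2$ should read $\prod_{T}\langL(\objOut)$ rather than a coproduct, and the last map is the projection $\pi_t$ onto the $\varepsilon$-component after the earlier displayed arrow is re-read with $t$ replaced by $\varepsilon\in T$; I would restate it as $\symbRightTFinal = \varEpsAcc$-style $\varepsilon$-projection for consistency with \cref{lem:final-aut}.
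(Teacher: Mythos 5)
Your second, direct route is exactly what the paper does: it proves the initial-biautomaton case by constructing the mediating natural transformation componentwise from the coproduct's universal property and then declares the final case ``perfectly dual,'' which is precisely your product-indexed verification (and you correctly spot the typo $\coprod_{\setT}$ for $\prod_{\setT}$ in the second bullet). One small caveat: your proposed restatement of the last bullet is unnecessary — unlike in \cref{lem:final-aut}, here $\symbRightT$ is a family of arrows indexed by $t\in\setT$, so $\symbRightTFinal$ really is the $t$-th projection $\pi_t$ for each $t\in\setT$, not merely the $\varepsilon$-projection.
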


Combining Corollary~\ref{cor:min} with
Lemmas~\ref{lem:fact-cat-auto},~\ref{lem:initial-QT-biautomaton}
and~\ref{lem:final-QT-biautomaton}, we obtain the "minimal biautomaton"
$\objMin(\langLQT)$ in $\catAuto(\langLQT)$.

\begin{theorem}
  \label{thm:minimal-QT-biautomaton} Assume that $\catC$ is equipped
  with a factorization system $(\Epi,\Mono)$ and has countable copowers of
  $\langL(\objIn)$ and countable powers of $\langL(\objOut)$. Then the minimal
  "$\catC_{\setQ,\setT}$-biautomaton" $\objMin(\langLQT)$ is obtained
  as the unique up to isomorphism factorization of the unique morphism
  from $\autoInit(\langLQT)$ to $\autoFinal(\langLQT)$.
  \begin{center}
    \begin{tikzcd}[column sep=3em]
      & \displaystyle\coprod_{\mathclap{\setQ}}\langL(\objIn)
      \arrow[twoheadrightarrow]{d}{\intro*\epiOne} \arrow[rightarrow,
      shift left]{r}[yshift=0.2ex]{\symbAQTInit} \arrow[rightarrow,
      shift right]{r}[swap, yshift=-0.4ex]{\symbEpsilonInit} &
      \displaystyle\coprod_{\mathclap{\setQ\cup \setQ\alphA}}\langL(\objIn) \arrow[twoheadrightarrow]{d}{\intro*\epiTwo} \arrow[rightarrow, bend left]{rd}{\symbRightTInit} & \\
      \langL(\objIn) \arrow[rightarrow]{r}{\intro*\symbLeftQMin} \arrow[rightarrow, bend left]{ru}{\symbLeftQInit} \arrow[rightarrow, bend right]{rd}[swap]{\symbLeftQFinal}& \objMin(\langLQT)(\objBiStates1) \arrow[rightarrowtail]{d}{\intro*\monoOne} \arrow[rightarrow, shift left]{r}[yshift=0.2ex]{\intro*\symbAQTMin} \arrow[rightarrow, shift right]{r}[swap,yshift=-0.4ex]{\intro*\symbEpsilonMin} &\objMin(\langLQT)(\objBiStates2)\arrow[rightarrow]{r}{\intro*\symbRightTMin} \arrow[rightarrowtail]{d}{\intro*\monoTwo} &\langL(\objOut)\\
      & \displaystyle\prod_{\mathclap{\setT \cup \alphA
          \setT}}\langL(\objOut) \arrow[rightarrow, shift
      left]{r}[yshift=0.2ex]{\symbAQTFinal} \arrow[rightarrow, shift
      right]{r}[swap, yshift=-0.4ex]{\symbEpsilonFinal} &
      \displaystyle\prod_{\mathclap{\setT}}\langL(\objOut)
      \arrow[rightarrow, bend right]{ru}[swap]{\symbRightTFinal} &
    \end{tikzcd}
  \end{center}
\end{theorem}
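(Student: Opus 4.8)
The plan is to apply \cref{cor:min} with $\catK$ instantiated as $\catAuto(\langLQT)$, exactly as announced in the paragraph preceding the statement. I would need three ingredients: a factorization system on $\catAuto(\langLQT)$, an initial object, and a final object. For the first, I would invoke \cref{lem:fact-cat-auto}: since $\catAuto(\langLQT)$ is the category obtained from \cref{def:cat-auto} by taking $\catIQT$ as input category and the inclusion $\catOQT\hookrightarrow\catIQT$ as observable-behaviour subcategory, that lemma applies and equips $\catAuto(\langLQT)$ with the factorization system whose quotients (resp.\ subobjects) are the natural transformations all of whose components lie in $\Epi$ (resp.\ $\Mono$). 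For the initial and final objects I would simply cite \cref{lem:initial-QT-biautomaton} and \cref{lem:final-QT-biautomaton}, whose hypotheses (countable copowers of $\langL(\objIn)$ and countable powers of $\langL(\objOut)$) are exactly those assumed here; they deliver $\autoInit(\langLQT)$ and $\autoFinal(\langLQT)$ with their explicit descriptions. \cref{cor:min} then gives that $\objMin(\langLQT)$ is the middle object of the essentially unique factorization $\autoInit(\langLQT)\twoheadrightarrow\objMin(\langLQT)\rightarrowtail\autoFinal(\langLQT)$ of the unique morphism $\autoInit(\langLQT)\to\autoFinal(\langLQT)$, this morphism being unique because $\autoInit(\langLQT)$ is initial.

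The second half of the argument is to check that this abstract factorization unfolds into precisely the commuting diagram in the statement. Here I would use that the factorization in $\catAuto(\langLQT)$ is computed componentwise and that every biautomaton restricts to $\langLQT$ along $\catOQT\hookrightarrow\catIQT$, so that the components of the factorization at $\objIn$ and $\objOut$ are the identities on $\langL(\objIn)$ and $\langL(\objOut)$, while at $\objBiStates1$ and $\objBiStates2$ they are the $\catC$-factorizations of the two non-trivial components of the canonical morphism $\autoInit(\langLQT)\to\autoFinal(\langLQT)$, namely the ``table'' maps $\coprod_{\setQ}\langL(\objIn)\to\prod_{\setT\cup\alphA\setT}\langL(\objOut)$ and $\coprod_{\setQ\cup\setQ\alphA}\langL(\objIn)\to\prod_{\setT}\langL(\objOut)$ assembled from the language values $\langL(\symbLeft q w\symbRight)$. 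This produces the objects $\objMin(\langLQT)(\objBiStates1)$, $\objMin(\langLQT)(\objBiStates2)$ and the legs $\epiOne,\epiTwo,\monoOne,\monoTwo$; the action of the functor $\objMin(\langLQT)$ on the generators of $\catIQT$ (i.e.\ the arrows $\symbLeftQMin,\symbAQTMin,\symbEpsilonMin,\symbRightTMin$) is then the one forced by the diagonal fill-in of $(\Epi,\Mono)$ on the relevant naturality squares, and the same fill-in argument simultaneously yields commutativity of the outer triangles and naturality of $\epiOne,\epiTwo,\monoOne,\monoTwo$, i.e.\ that these legs really do belong to $\Epi_{\catAuto(\langLQT)}$ and $\Mono_{\catAuto(\langLQT)}$.

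I do not expect a serious obstacle: the statement is essentially a repackaging of \cref{cor:min} together with \cref{lem:fact-cat-auto}, \cref{lem:initial-QT-biautomaton} and \cref{lem:final-QT-biautomaton}. The one point deserving a little care is the last unfolding, namely that the componentwise factorizations really assemble into a functor $\catIQT\to\catC$ with natural quotient/subobject legs so that the displayed diagram is exactly the span $\autoInit(\langLQT)\twoheadrightarrow\objMin(\langLQT)\rightarrowtail\autoFinal(\langLQT)$ in $\catAuto(\langLQT)$. But this is precisely the content (and standard fill-in proof) of \cref{lem:fact-cat-auto}, so once that is invoked the remainder is bookkeeping against the explicit formulas for $\autoInit(\langLQT)$ and $\autoFinal(\langLQT)$.
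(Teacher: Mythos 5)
Your proposal is correct and follows exactly the route the paper takes: the paper justifies this theorem by the single sentence ``Combining Corollary~\ref{cor:min} with Lemmas~\ref{lem:fact-cat-auto},~\ref{lem:initial-QT-biautomaton} and~\ref{lem:final-QT-biautomaton}'' and then remarks, just as you do, that $\symbLeftQMin$, $\symbAQTMin$, $\symbEpsilonMin$ and $\symbRightTMin$ arise from the diagonal fill-in of the factorization system. Your additional unfolding of the componentwise factorization (identities at $\objIn$ and $\objOut$, factorizations of the two canonical ``table'' maps at $\objBiStates1$ and $\objBiStates2$) is consistent with the explicit analysis the paper carries out later in the appendix.
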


Notice that the arrows $\symbLeftQMin$, $\symbAQTMin$,
$\symbEpsilonMin$ and $\symbRightTMin$ are obtained using
the diagonal fill-in property of the factorization system.  Let us now
see how this theorem instantiates in the case of deterministic
automata.
\begin{example}
  Assume the "target language" $\langL$ is a
  "$(\catSet,1,2)$-language", so the learner wants to learn the
  minimal deterministic automaton accepting $\langL$. For a given
  couple $(\setQ,\setT)$, the "minimal biautomaton" is obtained as the
  following factorization.
  \begin{center}
    \begin{tikzcd}[column sep=3em]
      & \setQ \arrow[twoheadrightarrow]{d}{\epiOne}
      \arrow[rightarrow, shift left]{r}[yshift=0.2ex]{\symbAQTInit}
      \arrow[rightarrow, shift right]{r}[swap,
      yshift=-0.4ex]{\symbEpsilonInit} &
      \setQ\cup \setQ\alphA \arrow[twoheadrightarrow]{d}{\epiTwo} \arrow[rightarrow, bend left]{rd}{\symbRightTInit} & \\
      1 \arrow[rightarrow]{r}{\symbLeftQMin} \arrow[rightarrow,
      bend left]{ru}{\symbLeftQInit} \arrow[rightarrow, bend
      right]{rd}[swap]{\symbLeftQFinal}& \setQ/{\sim_{\setT \cup
          \alphA
          \setT}} \arrow[rightarrowtail]{d}{\monoOne} \arrow[rightarrow, shift left]{r}[yshift=0.2ex]{\symbAQTMin} \arrow[rightarrow, shift right]{r}[swap,yshift=-0.4ex]{\symbEpsilonMin} &(\setQ\cup \setQ\alphA)/{\sim_{\setT}}\arrow[rightarrow]{r}{\symbRightTMin} \arrow[rightarrowtail]{d}{\monoTwo} & 2\\
      & 2^{\setT \cup \alphA \setT} \arrow[rightarrow, shift
      left]{r}[yshift=0.2ex]{\symbAQTFinal} \arrow[rightarrow, shift
      right]{r}[swap, yshift=-0.4ex]{\symbEpsilonFinal} & 2^{\setT}
      \arrow[rightarrow, bend right]{ru}[swap]{\symbRightTFinal} &
    \end{tikzcd}
  \end{center}
  Hence, the first set of states of the "minimal biautomaton" is the set
  $\setQ$ quotiented by the $\setT \cup \alphA \setT$-approximation
  $\sim_{\setT \cup \alphA \setT}$ of the Myhill-Nerode
  equivalence. The second set of states is the quotient of
  $\setQ\cup \setQ\alphA$ by $\sim_{\setT}$. These kinds of quotients are
  also needed in the classical Angluin's algorithm, when building the
  table corresponding to the couple $(\setQ,\setT)$.

  Let us now understand when the map $\symbEpsilonMin$ is an
  isomorphism, that is, in this case, a bijection. We can verify that
  $\symbEpsilonMin$ being a surjection is equivalent to the table in
  "$\Lstar$-algorithm" being closed, that is, for all $q\in\setQ$ and
  $a\in\alphA$ there exists $q'\in \setQ$ such that $q'\sim_{\setT} qa$. On
  the other hand, $\symbEpsilonMin$ being an injection is equivalent
  to the consistency of the table from Angluin's "$\Lstar$-algorithm". It
  means that if $q$ and $q'$ are such that $q\sim_{\setT} q'$ then
  $q\sim_{\setT \cup \alphA \setT} q'$. 
\end{example}

If the table from Angluin's algorithm is generalized via the "minimal
biautomaton" $\objMin(\langLQT)$, the above example suggests that the
conditions that make possible the generation of a hypothesis
automaton from a table can be stated at this abstract level by
requiring the morphism $\symbEpsilonMin$ be an isomorphism.
 In this
way, we can identify $\objMin(\langL)(\objBiStates1)$ and
$\objMin(\langL)(\objBiStates2)$ to obtain the state space of the
"hypothesis automaton".

\begin{definition}
  \label{def:hypothesis-automaton}
  \AP If the map $\symbEpsilonMin$ is an isomorphism, we say that
  $(\setQ,\setT)$ is ""$\langL$-automatable"".  The ""hypothesis
  automaton"" $\intro*\hypQT$ associated to a "$\langL$-automatable"
  couple $(\setQ,\setT)$ is the "$\catC$-automaton" with state space
  $\objMin(\langLQT)(\objBiStates1)$ described on the generator arrows
  of $\catI$ by 
  \begin{center}
    \begin{tikzcd}[column sep=4em]
      \langL(\objIn) \arrow[rightarrow]{r}{\symbLeftQMin[\varepsilon]}
      & \objMin(\langLQT)(\objBiStates1)
      \arrow[loop]{u}[swap]{\symbEpsilonMin^{-1} \circ \symbAQTMin}
      \arrow[rightarrow]{r}{\symbRightTMin[\varepsilon]\circ\symbEpsilonMin}
      & \langL(\objOut).
    \end{tikzcd}
  \end{center}
\end{definition}
  
The uniqueness up to isomorphism of the "hypothesis automaton"
$\hypQT$ is an easy consequence of the uniqueness up to isomorphism
of the "minimal biautomaton" in $\catAuto(\langLQT)$.
  
It is important to remark that, when passing from a "biautomaton" to
an "automaton", the "consistency" with the "language" is preserved, in
the sense of the lemma below.
\begin{lemma}
  \label{consistencypreserved} Let $(\setQ,\setT)$ be an
  "$\langL$-automatable" couple and let $\hypQT$ be its associated
  "hypothesis automaton". Then the next diagram commutes:
  \begin{center}
        \begin{tikzcd}
          \catOQT \arrow[r, hook] \arrow[rr, bend left, "\langLQT"] & \catI \arrow[r, "\hypQT"'] &\catC. 
        \end{tikzcd}

  \end{center}
\end{lemma}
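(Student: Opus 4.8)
The plan is to prove the stronger statement that the minimal biautomaton $\objMin(\langLQT)\colon\catIQT\to\catC$ is naturally isomorphic to the composite $\hypQT\circ\iStar\colon\catIQT\to\catC$ via a natural isomorphism whose components at $\objIn$ and $\objOut$ are identities. Granting this, the lemma follows immediately: restricting the isomorphism along $\catOQT\hookrightarrow\catIQT$ yields an equality $\objMin(\langLQT)|_{\catOQT}=(\hypQT\circ\iStar)|_{\catOQT}$ of functors $\catOQT\to\catC$ (two functors linked by a natural transformation all of whose components are identities must coincide, and $\catOQT$ has only the objects $\objIn,\objOut$); the left-hand side is $\langLQT$ because $\objMin(\langLQT)$ is an object of $\catAuto(\langLQT)$ by \cref{thm:minimal-QT-biautomaton}; and by the commuting square of \cref{lem:input-cats} the right-hand side is $\hypQT$ precomposed with $\catOQT\hookrightarrow\catO\hookrightarrow\catI$, which is exactly the triangle that \cref{consistencypreserved} asks us to commute.

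To construct $\eta\colon\objMin(\langLQT)\Rightarrow\hypQT\circ\iStar$, recall that $\iStar$ sends both $\objBiStates1$ and $\objBiStates2$ to $\objStates$, so by \cref{def:hypothesis-automaton} we have $(\hypQT\circ\iStar)(\objBiStates1)=(\hypQT\circ\iStar)(\objBiStates2)=\objMin(\langLQT)(\objBiStates1)$, while on $\objIn$ and $\objOut$ all functors in sight take the values $\langL(\objIn)$ and $\langL(\objOut)$. Accordingly I set $\eta_{\objIn}$, $\eta_{\objBiStates1}$ and $\eta_{\objOut}$ to be the evident identities, and $\eta_{\objBiStates2}:=\symbEpsilonMin^{-1}$, which is well defined and invertible precisely because $(\setQ,\setT)$ is $\langL$-automatable. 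As $\catIQT$ is freely generated subject to the coherence diagrams of \cref{def:catIQT}, naturality of $\eta$ need only be checked on the generators $\symbLeftQ$, $\symbAQT$, $\symbEpsilon$, $\symbRightT$. The squares for $\symbAQT$ and $\symbEpsilon$ are immediate, unwinding to $\symbEpsilonMin^{-1}\circ\symbAQTMin=\hypQT(a)$ and $\symbEpsilonMin^{-1}\circ\symbEpsilonMin=\hypQT(\id_{\objStates})$, both true by the very definition of $\hypQT$.

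The two remaining naturality squares are where the work lies, and each needs a short induction on word length. For the generator $\symbLeftQ$ indexed by a state word $q\in\setQ$, the square amounts to $\symbLeftQMin[q]=\hypQT(q)\circ\hypQT(\symbLeft)$, where $\hypQT(q)$ denotes the evident composite of the transition maps $\hypQT(a_i)=\symbEpsilonMin^{-1}\circ\symbAQTMin$ over the letters of $q$. I would induct on $\lvert q\rvert$: the case $q=\varepsilon$ holds since $\hypQT(\symbLeft)=\symbLeftQMin[\varepsilon]$; for the step write $q=pa$ with $p\in\setQ$ (prefix-closure of $\setQ$), apply the induction hypothesis, and reduce to the identity $\symbEpsilonMin\circ\symbLeftQMin[q]=\symbAQTMin\circ\symbLeftQMin[p]$, which is exactly the image under $\objMin(\langLQT)$ of the left-hand coherence diagram for the pair $(p,a)$. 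Dually, for $\symbRightT$ indexed by a test word $t\in\setT$, the square amounts to $\symbRightTMin[t]\circ\symbEpsilonMin=\hypQT(\symbRight)\circ\hypQT(t)$; inducting on $\lvert t\rvert$ — the base case $t=\varepsilon$ being the definition $\hypQT(\symbRight)=\symbRightTMin[\varepsilon]\circ\symbEpsilonMin$ — and writing $t=as$ with $s\in\setT$ (suffix-closure of $\setT$), the step reduces to $\symbRightTMin[s]\circ\symbAQTMin=\symbRightTMin[t]\circ\symbEpsilonMin$, which is the image under $\objMin(\langLQT)$ of the right-hand coherence diagram for $(s,a)$.

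Once $\eta$ is shown natural it is automatically a natural isomorphism, every component being an identity or $\symbEpsilonMin^{-1}$, and the proof is complete as explained in the first paragraph. I expect the only delicate points to be keeping careful track of the direction of $\symbEpsilonMin$ (versus $\symbEpsilonMin^{-1}$) throughout the two inductions and matching each inductive step to the correct coherence diagram; the rest is formal bookkeeping.
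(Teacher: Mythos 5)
Your proof is correct and rests on exactly the same two key identities as the paper's own proof --- namely $\hypQT(q)\circ\symbLeftQMin[\varepsilon]=\symbLeftQMin[q]$ for $q\in\setQ$ and $\symbRightTMin[\varepsilon]\circ\symbEpsilonMin\circ\hypQT(t)=\symbRightTMin[t]\circ\symbEpsilonMin$ for $t\in\setT$, each obtained by induction on word length from the "coherence diagrams" together with the prefix-closure of $\setQ$ and the suffix-closure of $\setT$. The only difference is in the packaging: you realize these identities as the naturality squares (on the generators $\symbLeftQ$ and $\symbRightT$) of an isomorphism $\objMin(\langLQT)\cong\hypQT\circ\iStar$ and then restrict along $\catOQT\hookrightarrow\catIQT$, whereas the paper assembles the same identities into a direct computation of $\hypQT(\symbLeft qat\symbRight)$; both assemblies are sound, and yours has the minor bonus of recording the natural isomorphism explicitly.
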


\subsection{The learning algorithm}
\label{subsection:learning-algorithm}
We now have all the necessary ingredients to state the
\intro*\FunL{-}algorithm.  Our algorithm takes as input a "target language"
$\langL$ and outputs its "minimal automaton", provided some mild
assumptions listed in Theorem~\ref{thm:algo-corr-term} are satisfied.
We start by instantianting the couple $(\setQ,\setT)$ by
$(\varepsilon,\varepsilon)$. As long as this couple is not
"$\langL$-automatable", further words are added to the subsets $\setQ$
and $\setT$ to force $\symbEpsilonMin$ to become an isomorphism. Once
this is achieved, we obtain a "hypothesis automaton". If this
automaton does not recognize the "target language", then the provided
"counterexample" and its prefixes are added to $\setQ$, in order to
let the learner progress in learning.

While the role played by "equivalence queries" is self-evident, notice
that "evaluation queries" are necessary in order to build up the
category $\catAuto(\langLQT)$ and analyse its "minimal automaton".

  \begin{algorithm}
  \caption{The basic \FunL learning algorithm}
  \label{algorithm:main}
  \SetKwInOut{Input}{input}\SetKwInOut{Output}{output}
  \Input{minimally adequate teacher of the "target language" $\langL$}
  \Output{$\objMin(\langL)$}
  $\setQ:=\setT:=\left \{ {\varepsilon} \right \}$
	
  \Repeat{the answer is yes}{ \While{$(\setQ,\setT)$ is not
      "$\langL$-automatable" \label{line3}}{ \If{$\symbEpsilonMin \notin
        \Epi$ \label{line4}}{add $\setQ\alphA$ to $\setQ$ \label{line5}}
      \If{$\symbEpsilonMin \notin \Mono$ \label{line7}}{add
        $\alphA \setT$ to $\setT$ \label{line8}}} ask an "equivalence query" for the
    "hypothesis automaton" $\hypQT$
		
    \If{the answer is no \label{line12}}{add the provided
      "counterexample" and all its prefixes to $\setQ$ \label{line13}}
  }
  \Return{$\hypQT$}
\end{algorithm}

In order for this generic algorithm to work, we need several mild
assumptions on the "output category" $\catC$ and on the "target
language". First, in order to compute the "hypothesis automaton" we
need the existence of the "minimal automaton" in the category
$\catAuto(\langLQT)$. For this reason, we will assume the hypothesis
of Theorem~\ref{thm:minimal-QT-biautomaton}, pertaining to the
existence of certain powers, certain copowers and a factorization
system. Furthermore, in order to ensure the termination of our
algorithm, a "noetherianity" condition is required on the language
$\langL$, akin to the regularity of the language in the
$\Lstar$-algorithm.  This notion, also used
in~\cite{urbat2019automata}, can be understood as a finiteness
assumption as shown in \cref{example:finite-is-noetherian}.

\begin{definition}\label{definition:EM-noetherian}
  \AP An object $X$ of $\catC$ is called ""$(\Epi,\Mono)$-noetherian""
  when the following conditions hold.
  \begin{itemize}
  \item There does not exist an infinite co-chain of "$\Epi$-quotients"
    of $X$ as in the left commutative diagram below and such that the
    arrows $e_1,e_2\ldots \in \Epi$ are not
    isomorphisms.
  \item There does not exist an infinite chain of "$\Mono$-subobjects"
    of $X$ as in the right commutative diagram below and such that the
    arrows $m_1,m_2\ldots \in \Mono$ are not isomorphisms.
    \begin{center}
      \begin{tikzcd}
      &                                   & X \arrow[lld, two heads, bend right] \arrow[ld, two heads, bend right] \arrow[two heads, bend right,color=white]{d}[color=black]{\ldots} &  &                                                           &                                                          & X                      \\
\cdot & \cdot \arrow[l, "e_1", two heads] & \ldots \arrow[l, "e_2", two heads]                                                                      &  & \cdot \arrow[r, "m_1", tail] \arrow[rru, tail, bend left] & \cdot \arrow[r, "m_2", tail] \arrow[ru, tail, bend left] & \ldots \arrow[bend right,color=white, tail]{u}[color=black]{\ldots}
\end{tikzcd}
    \end{center}
  \end{itemize}
\end{definition}

\begin{example}\label{example:finite-is-noetherian}
   Let us see now what "noetherianity@EMnoetherianity" means for the factorization systems of our "running instantiations" (\cref{example:KlT-factorization}).
   It is easy to see that in $\catSet$, an object~$X$ is "$(\mathrm{Surjections},\mathrm{Injections})$-noetherian" if and only if it is finite in the usual sense. Similarly, an object of~$\catVec$ is "$(\mathrm{Surjective\ linear\ maps},\mathrm{Injective\ linear\ maps})$-noetherian" if and only if it is a finite dimension vector space.
   With a bit more thoughts, one can establish that an object~$X$ of~$\KlT$ is "$(\EpiKlT,\MonoKlT)$-noetherian" if and only if it is finite.
\end{example}

In order to guarantee the termination of our algorithm, we require the
"$(\Epi,\Mono)$-noetherianity" of the state space of the "minimal
automaton" of the target language. This is a natural condition,
generalizing the regularity of the target language in the
"$\Lstar$-algorithm". If "$(\Epi,\Mono)$-noetherian" objects are
closed under "$\Epi$-quotients" and "$\Mono$-subobjects" -- as it is the
case in all our examples -- we could also replace this hypothesis by
assuming the existence of an automaton with
"$(\Epi,\Mono)$-noetherian" state space which accepts the target
language.

\begin{theorem}
  \label{thm:algo-corr-term}
  We consider a target language $\langL\colon\catO\to\catC$ such
  that:
  \begin{itemize}
  \item the "output category" $\catC$ is endowed with a
    factorization system $(\Epi, \Mono)$;
  \item $\catC$ has all copowers of $\langL(\objIn)$ and all
    powers of $\langL(\objOut)$;
  \item the state space $\objMin(\langL)(\objStates)$ of the "minimal
    automaton" for $\langL$ is "$(\Epi,\Mono)$-noetherian".
  \end{itemize}
  Then the \FunL{-}algorithm terminates, eventually producing
  the "minimal automaton" $\objMin(\langL)$ "accepting" the "target
  language".
\end{theorem}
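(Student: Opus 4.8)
The plan is to split the proof into two parts: correctness (if the algorithm halts, it outputs $\objMin(\langL)$) and termination. Correctness is the easier half. The inner \textbf{while}-loop exits precisely when $(\setQ,\setT)$ is $\langL$-automatable, at which point the "hypothesis automaton" $\hypQT$ is well defined. The algorithm returns only when an "equivalence query" for $\hypQT$ is answered affirmatively, i.e. when $\hypQT \circ i = \langL$, so $\hypQT$ lies in $\catAuto(\langL)$. One must then argue that $\hypQT$ is in fact the \emph{minimal} automaton. The key point is that $\hypQT$ is obtained by collapsing the two state objects of $\objMin(\langLQT)$ along the isomorphism $\symbEpsilonMin$, and $\objMin(\langLQT)$ is the $(\Epi,\Mono)$-factorization of $\autoInit(\langLQT)\to\autoFinal(\langLQT)$; using that the factorization system on $\catC$ lifts (Lemma~\ref{lem:fact-cat-auto}) and is computed componentwise, the arrows $\symbLeftQMin[\varepsilon]$ and $\symbRightTMin[\varepsilon]\circ\symbEpsilonMin$ in Definition~\ref{def:hypothesis-automaton} sit inside an $(\Epi,\Mono)$-factorization of $\autoInit(\langL)\to\autoFinal(\langL)$ — a factorization of $\epiMin$ followed by $\monoMin$ — so by Theorem~\ref{minimalwordauto} and uniqueness of factorizations, $\hypQT \cong \objMin(\langL)$.

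For termination I would track three monotone quantities. First, each iteration of the inner loop strictly enlarges $\setQ$ (by $\setQ\alphA$) or $\setT$ (by $\alphA\setT$), and correspondingly refines the $T$-equivalence: concretely, there are canonical morphisms relating $\objMin(\langLQT)$ for successive pairs, exhibiting the state objects as $\Epi$-quotients of $\Mono$-subobjects of the state object of $\objMin(\langL)$ (this is the heart of the matter and uses the universal properties in Lemmas~\ref{lem:initial-QT-biautomaton} and~\ref{lem:final-QT-biautomaton} together with the functor $\iStar$). By $(\Epi,\Mono)$-noetherianity of $\objMin(\langL)(\objStates)$ — together with the fact, to be noted, that $(\Epi,\Mono)$-noetherian objects in a lifted factorization system are closed under the relevant quotients/subobjects — this chain of refinements cannot strictly increase forever, so the inner \textbf{while}-loop terminates. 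Second, one must check that when $(\setQ,\setT)$ is not yet automatable, at least one of the two \textbf{if}-branches on lines~\ref{line4} and~\ref{line7} fires, so the loop body does make progress; this is immediate since $\symbEpsilonMin$ failing to be an isomorphism means it is not in $\Epi$ or not in $\Mono$.

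Third, for the outer \textbf{repeat}-loop: after an "equivalence query" fails, the returned "counterexample" $w$ and its prefixes are added to $\setQ$. I would argue this strictly advances the state: since $\hypQT\circ i(\symbLeft w\symbRight)\neq\langL(\symbLeft w\symbRight)$ but the new biautomaton is forced (via Lemma~\ref{consistencypreserved}) to agree with $\langL$ on $\symbLeft w\symbRight$, the old hypothesis automaton is no longer compatible, so the state object of the new $\objMin(\langLQT)$ properly refines the old one — again a strict step up the $(\Epi,\Mono)$ chain bounded by $\objMin(\langL)(\objStates)$. Combining, the total number of strict refinements across \emph{both} loops is bounded, so the algorithm halts; and when it halts it has returned $\objMin(\langL)$ by the correctness argument. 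The main obstacle I anticipate is the second part of the termination argument: making precise the canonical comparison morphisms between the minimal biautomata for successive $(\setQ,\setT)$ and for $(\langL,\setQ,\setT)$ versus $\langL$ itself, and verifying that each algorithmic step genuinely produces a \emph{non-isomorphism} in the $\Epi$- or $\Mono$-chain (rather than stalling) — this is where the prefix-closure of $\setQ$, suffix-closure of $\setT$, the "coherence diagrams", and the precise behaviour of the "counterexample" handling all have to be used carefully.
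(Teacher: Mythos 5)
Your overall architecture (correctness plus two termination arguments, one per loop) matches the paper's, and you correctly identify where the difficulty lies — but the central technical step is missing, and it is precisely the step you defer. Your termination argument reduces to: the state objects form "a chain of refinements bounded by $\objMin(\langL)(\objStates)$, which cannot strictly increase forever by noetherianity." However, \cref{definition:EM-noetherian} only forbids an infinite chain of non-isomorphic $\Epi$-quotients of a fixed object and, separately, an infinite chain of non-isomorphic $\Mono$-subobjects. What the algorithm actually produces is a \emph{mixed} zigzag: enlarging $\setQ$ yields a $\Mono$-morphism $\factQT\rightarrowtail\fact{\setQ',\setT}$, enlarging $\setT$ yields an $\Epi$-morphism $\fact{\setQ,\setT'}\twoheadrightarrow\factQT$, and these alternate in an order dictated by the run. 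Noetherianity says nothing directly about such mixed chains of divisions. This is exactly where the paper does the real work: \cref{lem:finitechain} lifts each step to the pure $\Mono$-chain on the objects $\fact{\setQ_i,\alphA^*}$ (on which the $\setT$-enlarging steps become identities) and to the pure $\Epi$-chain on the objects $\fact{\alphA^*,\setT_i}$ (on which the $\setQ$-enlarging steps become identities), and uses the iso-reflection properties of the canonical morphisms (items~\ref{it:can-mono-iso} and~\ref{it:can-epi-iso} of \cref{lem:factQT-properties}) to guarantee that a non-isomorphism downstairs forces a non-isomorphism upstairs; only then does \cref{lem:no-inf-seq}, i.e.\ noetherianity of $\fact{\alphA^*,\alphA^*}$ applied to \emph{pure} chains, finish the job. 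Without this (or an equivalent device) your conclusion does not follow. Two further points: you invoke closure of noetherian objects under quotients and subobjects, which is \emph{not} a hypothesis of the theorem (the paper explicitly treats it as an optional alternative assumption) and is not needed once the chains are anchored at $\fact{\alphA^*,\alphA^*}$ itself; and you still owe the verification that each iteration produces a genuine non-isomorphism in the chain — in the paper this is \cref{lem:epi-mono-fact-of-epsilon}, which factors $\symbEpsilonMin$ as the canonical $\Epi$- followed by the canonical $\Mono$-morphism through $\factQT$ — together with, for the outer loop, the argument of \cref{algotermination} that an all-isomorphism comparison would force the old and new hypothesis automata to accept the same language, contradicting the counterexample.

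On correctness your route genuinely differs from the paper's. The paper shows that $\objMin(\langL)\circ\iStar$ is noetherian and divides the minimal biautomaton $\hypQT\circ\iStar$, and concludes by \cref{dividemin}; you instead propose to exhibit the reachability and observability maps of $\hypQT$ as an $(\Epi,\Mono)$-factorization of $\autoInit(\langL)\to\autoFinal(\langL)$ and invoke uniqueness of factorizations. This can be made to work — the reachability map out of $\coprod_{\alphA^*}\langL(\objIn)$ restricts on the $\setQ$-indexed components to $\epiOne$, hence lies in $\Epi$ by the cancellation property, and dually the observability map lies in $\Mono$ — and it has the mild advantage of not needing noetherianity for correctness. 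But as written, the claim that the structure maps of \cref{def:hypothesis-automaton} "sit inside" such a factorization is asserted rather than proved, and it does require the equivalence query to have succeeded so that these maps exist as morphisms in $\catAuto(\langL)$.
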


The proof of this theorem relies on a careful analysis of the
factorizations
\begin{center}
\begin{tikzcd}
  \coprod_{\setQ}\langL(\objIn) \arrow[r, two heads] & \factQT
  \arrow[r, tail] & \prod_{\setT}\langL(\objOut)
\end{tikzcd}
\end{center}
of the canonical maps
$\coprod_{\setQ}\langL(\objIn)\to\prod_{\setT}\langL(\objOut)$
obtained by taking the coproduct over $q\in Q$ of the product over
$t\in T$ of $\langL(\symbLeft q t\symbRight)$. We can prove that the
state spaces of the "biautomata" featured while running the algorithm
are precisely of the form $\factQT$, while the state space of the
"minimal automaton" accepting $\langL$ is $\fact{\alphA^*,\alphA^*}$. 

We prove that the \textbf{while} loop terminates in
Proposition~\ref{prop:term-while-loop}. In Lemma~\ref{algotermination}
we show that only finitely many counterexamples can be added, hence
the algorithm terminates. Finally, the fact that the outcome automaton
is minimal is shown in Lemma~\ref{lem:correctness-algo}.

Next, we see how the \FunL-algorithm instantiates to the case of
"subsequential transducers". We need to understand what it means for
$\symbEpsilonMin$ to be an isomorphism.

\begin{example}[Learning algorithm for subsequential transducers]
  Assume the "target language" $\langL$ is a "$(\KlT,1,1)$-language",
  so the learner wants to learn the minimal "subsequential transducer"
  accepting $\langL$. \AP We need to extend the notions of $\lcp$ and
  $\red$ to a generic partial map $g$ whose domain is
  $\setT\subseteq\alphA^*$ and whose codomain is $\alphB^*$ as
  follows: $\intro*\lcpbis(g)$ is undefined if $g$ is nowhere defined,
  and denotes the longest common prefix of the words in
  $\{g(u)\mid u\in\setT\}$ otherwise; analogously,
  $\intro*\redbis(g)\colon\setT\rightarrow\alphB^* \cup \{ \bot \}$ is
  nowhere defined if $g$ is nowhere defined, and is the only partial
  map such that~$g(u)=\lcpbis(g)\redbis(g)(u)$ otherwise. Thinking of
  the "language" to learn as a "transduction"
  $f\colon \alphA^*\to \alphB^*+\{\bot\}$, let's define the following
  equivalence relation for all $q_1,q_2 \in \setQ$:
  $ q_1 \sim_{\setT} q_2$ if and only if
  $\redbis(f(q_1-)\arrowvert_{\setT})(t)=\redbis(f(q_2-)\arrowvert_{\setT})(t)$
  for all $t \in \setT$, $f(q-)\arrowvert_{\setT}$ being the
  restriction of $f(q-)$ to $\setT$. For a couple $(\setQ,\setT)$,
  $\symbEpsilonMin$ in $\catAuto(\langLQT)$ turns out to be the map
  $ \setQ/{\sim_{\setT \cup\alphA\setT}} \nrightarrow (\setQ\cup
  \setQ\alphA)/{\sim_{\setT}},
  [q]\mapsto(\lcpbis(f(q-)\arrowvert_{\setT \cup
    \alphA\setT})^{-1}\lcpbis(f(q-)\arrowvert_{\setT}),[q])$, the
  first set of states being $\setQ$ quotiented by
  $\sim_{\setT \cup \alphA \setT}$, the second set of states being the
  quotient of $\setQ\cup \setQ\alphA$ by $\sim_{\setT}$. Let's
  understand the word a class $[q]$ is mapped to: with
  $\lcpbis(f(q-)\arrowvert_{\setT})$, we mean the $\lcpbis$ of the
  function $f(q-)$ restricted to the domain $\setT$, that is, the
  longest common prefix of the subset
  $\left \{f(qt)\arrowvert t \in \setT \right \}$; with
  $\lcpbis(f(q-)\arrowvert_{\setT \cup
    \alphA\setT})^{-1}\lcpbis(f(q-)\arrowvert_{\setT})$, we mean the
  word $\lcpbis(f(q-)\arrowvert_{\setT})$ from which
  $\lcpbis(f(q-)\arrowvert_{\setT \cup \alphA\setT})$ (one of its
  prefixes, as it is the longest common prefix of a bigger set of
  words) has been stripped away; when one of the $\lcpbis$s is
  undefined, $[q]$ is mapped to undefined.
	
	$\symbEpsilonMin$ is an isomorphism if and only if
$\proj2(\symbEpsilonMin)$ is a bijection and $\proj1(\symbEpsilonMin)$
is the constant function mapping every $x\in X$ to $\varepsilon$. We
can verify that $\proj2(\symbEpsilonMin)$ being a surjection is
equivalent to the condition that for all $q \in \setQ$ and $a \in
\alphA$ there exists $q'\in \setQ$ such that $q'\sim_{\setT} qa$,
whereas $\proj2(\symbEpsilonMin)$ being an injection is equivalent to
the condition that if $q$ and $q'$ are such that $q\sim_{\setT} q'$,
then $q\sim_{\setT \cup \alphA \setT} q'$. Finally, the condition on
$\proj1(\symbEpsilonMin)$ is true if and only if
$\lcpbis(f(q-)\arrowvert_{\setT \cup
\alphA\setT})=\lcpbis(f(q-)\arrowvert_{\setT})$. Remarkably, these
three naturally arising conditions turn out to be equivalent to the
ones required in Vilar's learning algorithm for "subsequential
transducers" (see \cite{Vilar96}).
\end{example}

Every time the while cycle runs, our algorithm adds either all words
$\setQ\alphA$ to $\setQ$ or all words $\alphA\setT$ to $\setT$: this
is not strictly necessary. We show next that it is sufficient to add
just one properly chosen single word  $qa\in \setQ\alphA$ or
$at\in \alphA\setT$, preserving the correctness of the algorithm.  The
canonical inclusion
$\coprod_{\setQ}\langL(\objIn)\to\coprod_{\setQ\cup\{qa\}}\langL(\objIn)$
induces a canonical morphism between the factorizations
$\factQT \rightarrowtail \factQT[\setQ \cup \left \{qa \right
\},\setT]$. Similarly, the canonical restriction
$\prod_{\setT\cup\{at\}}\langL(\objOut)\to\prod_{\setT}\langL(\objOut)$
induces a canonical morphism between the factorizations
$\factQT \twoheadleftarrow \factQT[\setQ,\setT \cup \left \{at \right
\}]$, which will be featured in the optimized algorithm.
    
\begin{theorem}
  \label{thm:optimized-algorithm}
  Algorithm~\ref{algorithm:main} can be optimized by replacing lines~\ref{line5} and~\ref{line8} respectively by:
  \begin{itemize}
  \item \emph{add to $\setQ$ a $qa \in \setQ\alphA$
        s.t.
        $\factQT \rightarrowtail \factQT[\setQ \cup \left \{qa
        \right \},\setT]$ is not an isomorphism}; 
    \item \emph{add to $\setT$ an $at \in \alphA \setT$ s.t.
        $\factQT \twoheadleftarrow \factQT[\setQ,\setT \cup
        \left \{at \right \}]$ is not an isomorphism}.
    \end{itemize}
\end{theorem}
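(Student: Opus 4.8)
The plan is to show that the optimized algorithm is just a ``lazy'' version of Algorithm~\ref{algorithm:main}: instead of adding the entire layer $\setQ\alphA$ (resp.\ $\alphA\setT$) in one step, we add a single element at a time, but the resulting run is cofinal with a run of the unoptimized algorithm, so correctness and termination transfer. First I would make precise the role of the factorizations $\factQT$ highlighted in the discussion following Theorem~\ref{thm:algo-corr-term}: the state space $\objMin(\langLQT)(\objBiStates1)$ is $\fact{\setQ,\setT\cup\alphA\setT}$ and $\objMin(\langLQT)(\objBiStates2)$ is $\fact{\setQ\cup\setQ\alphA,\setT}$, and $\symbEpsilonMin$ fits into the commuting square relating these two factorizations via the canonical inclusion $\coprod_{\setQ}\langL(\objIn)\hookrightarrow\coprod_{\setQ\cup\setQ\alphA}\langL(\objIn)$ and the canonical restriction $\prod_{\setT\cup\alphA\setT}\langL(\objOut)\to\prod_{\setT}\langL(\objOut)$. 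I would record the two monotonicity facts: enlarging $\setQ$ by one word $qa$ induces a mono $\factQT\rightarrowtail\factQT[\setQ\cup\{qa\},\setT]$ (these are $\Mono$-subobjects of the fixed object $\prod_{\setT}\langL(\objOut)$, so there is a canonical comparison arrow, which lies in $\Mono$), and dually enlarging $\setT$ by $at$ induces an epi $\factQT[\setQ,\setT\cup\{at\}]\twoheadrightarrow\factQT$.

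The key combinatorial step is the following claim: \emph{if $(\setQ,\setT)$ is not $\langL$-automatable because $\symbEpsilonMin\notin\Epi$, then there exists $qa\in\setQ\alphA$ such that $\factQT\rightarrowtail\factQT[\setQ\cup\{qa\},\setT]$ is not an isomorphism}; and dually for the $\symbEpsilonMin\notin\Mono$ case. For the $\Epi$ direction, I would argue contrapositively: if for every $qa\in\setQ\alphA$ the comparison arrow $\factQT\rightarrowtail\factQT[\setQ\cup\{qa\},\setT]$ is an isomorphism, then by a colimit/cofinality argument (writing $\setQ\cup\setQ\alphA$ as the directed union of $\setQ\cup\{qa_1,\dots,qa_k\}$ and using that each one-step comparison is iso, hence so is the composite) the full comparison $\factQT\rightarrowtail\fact{\setQ\cup\setQ\alphA,\setT}$ is an isomorphism; but the latter factors through $\objMin(\langLQT)(\objBiStates2)$ and unravels exactly to $\symbEpsilonMin$ being in $\Epi$ — contradiction. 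The dual argument handles $\setT$. One must be slightly careful that ``each one-step iso implies the union iso'' really does hold: this is where the factorization of a directed colimit of the source cone into the fixed target needs the comparison arrows to stabilize, which follows since an iso composed with a $\Mono$ is again a $\Mono$ and we are comparing subobjects of the same object $\prod_{\setT}\langL(\objOut)$.

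Granting the claim, correctness and termination of the optimized algorithm follow by replaying the proof of Theorem~\ref{thm:algo-corr-term}. The \textbf{while} loop still terminates: each execution of line~\ref{line5} strictly enlarges a $\Mono$-subobject $\factQT$ of $\prod_{\setT}\langL(\objOut)$ — here I'd invoke that $\factQT$ is an $\Epi$-quotient of $\coprod_{\setQ}\langL(\objIn)$ which $(\Epi,\Mono)$-divides $\objMin(\langL)(\objStates)$, hence is $(\Epi,\Mono)$-noetherian, so only finitely many strict $\Mono$-enlargements are possible; dually for line~\ref{line8} and strict $\Epi$-quotients. (This is exactly the argument of Proposition~\ref{prop:term-while-loop}, now applied one word at a time rather than one layer at a time.) The outer \textbf{repeat} loop and the bound on the number of counterexamples are unchanged, since the counterexample-handling line is untouched and the invariant that $\setQ$ is prefix-closed and $\setT$ suffix-closed is preserved (adding $qa$ with $q\in\setQ$ keeps prefix-closure; adding $at$ with $t\in\setT$ keeps suffix-closure). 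Finally, when the loop exits, $(\setQ,\setT)$ is $\langL$-automatable and the argument of Lemma~\ref{lem:correctness-algo} shows the returned $\hypQT$ is the minimal automaton. The main obstacle is the claim in the previous paragraph — specifically, verifying cleanly that ``no single $qa$ breaks the isomorphism'' forces the whole-layer comparison to be an isomorphism, and translating that back into $\symbEpsilonMin\in\Epi$; everything else is a routine adaptation of the already-proved unoptimized case.
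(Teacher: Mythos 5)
Your overall architecture matches the paper's: termination is inherited from Lemma~\ref{lem:finitechain} (the chain of non-isomorphic canonical morphisms between the $\factQT$'s is finite whether one adds a whole layer or a single word), and the only genuinely new content is the existence claim — if $\symbEpsilonMin\notin\Epi$ then \emph{some} single $qa$ already makes $\factQT\rightarrowtail\factQT[\setQ \cup \left \{qa \right \},\setT]$ non-invertible, and dually for $\Mono$. You correctly reduce the theorem to this claim and correctly observe that the whole-layer comparison being an isomorphism forces $\symbEpsilonMin\in\Epi$ via Lemma~\ref{lem:epi-mono-fact-of-epsilon}.

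The gap is in the step you yourself flag as the crux. The hypothesis for contradiction only says that for every $qa\in\setQ\alphA$ the comparison \emph{from $\factQT$} to $\factQT[\setQ \cup \left \{qa \right \},\setT]$ is an isomorphism; it says nothing about the arrows $\factQT[\setQ\cup\{qa_1,\dots,qa_k\},\setT]\rightarrowtail\factQT[\setQ\cup\{qa_1,\dots,qa_{k+1}\},\setT]$ that make up your ``directed union'' chain, so ``each one-step comparison is iso, hence so is the composite'' does not follow — the whole-layer comparison is not a composite of arrows covered by the hypothesis. To repair this you would need a lattice-theoretic argument (the image of $\coprod_{\setQ\cup\setQ\alphA}\langL(\objIn)$ is the join of the images of the $\coprod_{\setQ\cup\{qa\}}\langL(\objIn)$ in the $\Mono$-subobject lattice of $\prod_{\setT}\langL(\objOut)$, and a join of subobjects each equal to $\factQT$ is $\factQT$), which is more than the ``iso composed with a $\Mono$'' remark you offer. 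The paper avoids this entirely: assuming every one-step comparison is invertible, it composes each inverse with the epi $\coprod_{\setQ\cup\{qa\}}\langL(\objIn)\epirightarrow\factQT[\setQ \cup \left \{qa \right \},\setT]$ and glues the resulting cocone by the universal property of the coproduct $\coprod_{\setQ\cup\setQ\alphA}\langL(\objIn)$ (dually, of the product $\prod_{\setT\cup\alphA\setT}\langL(\objOut)$), then factors the glued map and uses uniqueness of diagonal fill-ins to exhibit an explicit inverse of the canonical morphism $\factQT\rightarrowtail\fact{\setQ\cup\setQ\alphA,\setT}$, contradicting Lemma~\ref{lem:epi-mono-fact-of-epsilon}. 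A minor further imprecision: you justify termination by saying $\factQT$ divides $\objMin(\langL)(\objStates)$ and is ``hence noetherian'' — closure of noetherianity under division is not assumed in general; the paper instead lifts the chain to $\Mono$-subobjects (resp.\ $\Epi$-quotients) of $\fact{\alphA^*,\alphA^*}$ and applies noetherianity of that single object, which is what Lemma~\ref{lem:finitechain} actually does.
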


\section{Conclusion and future work}
\label{sec:conclusion}

In this paper, we described the abstract algorithm $\FunL$, a
categorical version of Angluin's "$\Lstar$-algorithm" for learning
word automata. The focus was on providing a minimalistic category
theoretic framework for learning, with as few assumptions as possible,
emphasizing along the way the deep connection between learning and
minimization.

So far, $\FunL$ does not cover instances of the "$\Lstar$-like
algorithms" such as nominal automata, or automata/transducers over
trees. A natural continuation is to develop these
generalizations. Another aspect to understand abstractly is the
complexity of this algorithm in terms of number of
"evaluation@evaluation queries" and "equivalence queries".


\newpage
\appendix

\section{Missing proofs from Section~\ref{sec:core-alg}}

\textbf{Notation.} Given a coproduct $\coprod_{i\in I} X_i$ and
arrows $f_i\colon X_i\to Y$, we denote by $\coprod_{i\in I}f_i$ the
mediating arrow $\coprod_{i\in I} X_i\to Y$ obtained from the
universal property. The notation
$\prod_{i\in I} g_i\colon X\to\prod_{i\in I}Y_i$ is defined dually for
a family of morphisms $g_i\colon X\to Y_i$.

\begin{proof}[Proof of Lemma~\ref{lem:input-cats}]
  We will show that $\catOQT$ is a subcategory of $\catO$, the fact
  that $\iStar$ is well defined is proved analogously.
  To this end, we have to check that, for every word $w$ resulting
  from the sequential concatenation of words in $\setQ,\alphA$ and
  $\setT$, there is only an arrow $\symbLeft w \symbRight$ in
  $\catOQT$.  We will use the "coherence diagrams" in the definition
  of the category $\catIQT$, see Definition~\ref{def:catIQT}.

  Suppose we have two ways of getting $w$ as above, $qat$ and
  $q'a't'$: let's check that
  $\symbLeftQ\symbAQT\symbRightT=\symbLeftQ[q']\symbAQT[a']\symbRightT[t']$.

  Assume $q=q_1\ldots q_n$, $q'=q'_1\ldots q'_{n'}$,
  $t=t_1\ldots t_m$, $t'=t'_1\ldots t'_{m'}$, where
  $q_i,q'_i,t_i,t'_i\in\alphA$ are letters.

  Without loss of generality, suppose that $q$ is a prefix of $q'$; by
  induction, we get
  $(\symbLeftQ[q_1\ldots q_n])\symbAQT (\symbRightT[t_1\ldots t_m]) =
  (\symbLeftQ[q_1\ldots
  q_na])\symbAQT[\varepsilon](\symbRightT[t_1\ldots t_m]) =
  (\symbLeftQ[q_1\ldots q_na])\symbAQT[t_1](\symbRightT[t_2\ldots
  t_m]) =\newline (\symbLeftQ[q_1\ldots
  q_nat_1])\symbAQT[\varepsilon](\symbRightT[t_2\ldots t_m]) = \ldots
  = (\symbLeftQ[q'_1\ldots
  q'_{n'}])\symbAQT[a'](\symbRightT[t'_1\ldots t'_{m'}])$, as desired.

  Notice that prefix-closure of $\setQ$ and suffix-closure of
  $\setT$ have been essential in order to properly use the "coherence
  diagrams".
\end{proof}

\begin{proof}[Proof of Lemma~\ref{lem:initial-QT-biautomaton}]
  It is easy to check directly the universal properties of the two
  objects.

  Suppose we have a "$\catC_{\setQ,\setT}$-biautomaton" $\autB$:
  \begin{center}
    \begin{tikzcd}
      \langL(\objIn) \arrow[rightarrow]{r}{\symbLeftQ} & B_1
      \arrow[rightarrow, shift left]{r}{\symbAQT}
      \arrow[rightarrow, shift right]{r}[swap]{\symbEpsilon} & B_2
      \arrow[rightarrow]{r}{\symbRightT} & \langL(\objOut).
    \end{tikzcd}
  \end{center}

  We define a morphism from $\autoInit(\langL)$ to $\autB$ as
  follows.
\begin{center}
\begin{tikzcd}[column sep=2em, row sep=1.5em]
  & \displaystyle\coprod_{\mathclap{\setQ}}\langL(\objIn)
  \arrow[rightarrow,dotted]{dd}{f_1} \arrow[rightarrow, shift
  left]{r}{\symbAQTInit} \arrow[rightarrow, shift
  right]{r}[swap]{\symbEpsilonInit} &
  \displaystyle\coprod_{\mathclap{\setQ\cup \setQ\alphA}}\langL(\objIn) \arrow[rightarrow, dotted]{dd}{f_2} \arrow[rightarrow, bend left]{rd}{\symbRightTInit} & \\
  \langL(\objIn) \arrow[rightarrow, bend left]{ru}{\symbLeftQInit=j_q} \arrow[rightarrow, bend right]{rd}[ swap]{\symbLeftQ}& & &\langL(\objOut)\\
  & B_1 \arrow[rightarrow, shift left]{r}{\symbAQT}
  \arrow[rightarrow, shift right]{r}[swap]{\symbEpsilon} & B_2
  \arrow[rightarrow, bend right]{ru}[swap]{\symbRightT}
\end{tikzcd}
\end{center}

The morphism $f_1$ is defined using the universal property of
$\coprod_{\setQ}\langL(\objIn)$ as the map
$\coprod_{q\in\setQ}\autB(\symbLeftQ)$. Similarly, $f_2$ is
defined using the universal property of
$\coprod_{\setQ\cup\setQ\alphA}\langL(\objIn)$ and the morphisms
$\autB(\symbEpsilon\circ\symbLeftQ)$ and $\autB(\symbAQT\circ\symbLeftQ)$.

Since $f_1$ and $f_2$ are defined using the universal property of the
two coproducts, they must be unique.

In addition, this is a natural transformation: the triangle on the
left commutes because of the definition of $f_1$; if we denote by
$j_q$ the coproduct injections, the square in the middle commutes
because
$f_2 \circ \symbAQTInit=\symbAQT \circ f_1 \Leftrightarrow f_2 \circ
\symbAQTInit \circ j_q=\symbAQT \circ f_1\circ j_q\;\;\forall q \in
\setQ$ and
$f_2 \circ \symbEpsilonInit=\symbEpsilon \circ f_1 \Leftrightarrow f_2
\circ \symbEpsilonInit \circ j_q=\symbEpsilon \circ f_1 \circ
j_q\;\;\forall q \in \setQ$; the triangle on the right commutes
because
$\symbRightT \circ f_2=\symbRightTInit \Leftrightarrow \symbRightT
\circ f_2 \circ j_{\widetilde{q}}=\symbRightTInit \circ
j_{\widetilde{q}}\;\;\forall \widetilde{q} \in \setQ \cup
\setQ\alphA$,
$\langL(\symbLeftQ \symbAQT \symbRightT)=\langL(\symbLeftQInit
\symbAQTInit \symbRightTInit)$ and
$\langL(\symbLeftQ \symbRightT)=\langL(\symbLeftQInit
\symbRightTInit)$.

The reasoning for the final object is perfectly dual.
\end{proof}

\begin{proof}[Proof of Lemma~\ref{consistencypreserved}]
  We have to prove that for all $q\in\setQ$, $a\in\alphA$ and
  $t\in\setT$, we have
  \begin{itemize}
  \item $\hypQT(\symbLeft q a t\symbRight)=\langL(\symbLeft q a t\symbRight)$, and
  \item $\hypQT(\symbLeft q  t\symbRight)=\langL(\symbLeft q  t\symbRight)$.
  \end{itemize}
  We prove the first item, as the second is similar.  Consider a word
  $w\in\alphA^*$, with $w=a^1\ldots a^n$ where $a^i\in\alphA$. Upon
  seeing $w$ as a morphism $w\colon\objStates\to\objStates$ in $\catI$, we know that
   $\hypQT(w)$ is by definition the following composite.
  \begin{center}
    \begin{tikzcd}
      \objMin(\langLQT)(\objBiStates1)\ar[r,"a^1_{\mathit{min}}"] &
      \objMin(\langLQT)(\objBiStates2)\ar[r,"\symbEpsilonMin^{-1}"] &
      \ldots\ar[r,"a^n_{\mathit{min}}"] &
      \objMin(\langLQT)(\objBiStates2)\ar[r,"\symbEpsilonMin^{-1}"] &
      \objMin(\langLQT)(\objBiStates1)
    \end{tikzcd}
  \end{center}
  We can prove the following.
  \begin{itemize}
  \item If $w\in\setQ$, then $\hypQT(w)\circ\symbLeftQMin[\varepsilon]=\symbLeftQMin[w]$.
  \item If $w\in\setT$, then
    $\symbRightTMin[\varepsilon]\circ\symbEpsilonMin\circ
    \hypQT(w)=\symbRightTMin[w]\circ\symbEpsilonMin$.
  \end{itemize}
  These statements follow easily by applying repeatedly the "coherence
  diagrams" from the definition of $\catIQT$ and using the
  prefix-closure of $\setQ$ and the suffix-closure of $\setT$.  To
  improve readability, we simply write $\mathcal{H}$ for $\hypQT$ in
  the next equalities, obtained by applying the previous equations:
  \begin{align*}
    \hypQT(\symbLeft qat\symbRight)  & = (\symbRightTMin[\varepsilon]\circ\symbEpsilonMin\circ \mathcal{H}(t))\circ \mathcal{H}(a)\circ \mathcal{H}(q) \circ\symbLeftQMin[\varepsilon]\\
                                     & =(\symbRightTMin\circ\symbEpsilonMin)\circ \mathcal{H}(a) \circ \mathcal{H}(q) \circ\symbLeftQMin[\varepsilon]\\
                                     & =\symbRightTMin\circ\symbEpsilonMin\circ(\symbEpsilonMin^{-1}\circ\symbAQTMin) \circ \mathcal{H}(q) \circ\symbLeftQMin[\varepsilon]\\
                                             & =\symbRightTMin\circ\symbAQTMin \circ (\mathcal{H}(q) \circ\symbLeftQMin[\varepsilon])\\
                                     & =\symbRightTMin\circ\symbAQTMin  \circ\symbLeftQMin[q]\\
                                     &=\objMin(\langLQT)(\symbLeftQ a \symbRightT)\\
    &=\langL(\symbLeft qat\symbRight).
  \end{align*}
  Proving
  $\hypQT(\symbLeft q t\symbRight)=\langL(\symbLeft q t\symbRight)$
  follows the same lines as above.
\end{proof}

\begin{proof}[Proof of Example~\ref{example:finite-is-noetherian}]
  Only the case of "$(\EpiKlT,\MonoKlT)$-noetherian" has to be proved.

  Let us recall first that an isomorphism in~$\KlT$ is an
  arrow~$f\colon X\nrightarrow Y$ such that~$\proj1(f)$ is the constant function equal to~$\varepsilon$ and~$\proj2(f)$ is a bijection.

  Let us assume that~$X$ is finite and show that it
  is~"$(\EpiKlT,\MonoKlT)$-noetherian".  For this, consider two
  commutative diagrams as follows:
  \begin{center}
    \begin{tikzcd}
      &                                   & X \arrow[lld, "f_1",two heads, bend right] \arrow[ld, "f_2", two heads, bend right] \arrow[two heads, bend right,color=white]{d}[color=black]{\ldots} &  &                                                           &                                                          & X                      \\
      Y_1 & Y_2 \arrow[l, "e_1", two heads] & \ldots \arrow[l, "e_2",
      two heads] & & Z_1 \arrow[r, "m_1", tail] \arrow[rru, "g_1",
      tail, bend left] & Z_2 \arrow[r, "m_2", tail] \arrow[ru, "g_2",
      tail, bend left] & \ldots \arrow[bend right,color=white,
      tail]{u}[color=black]{\ldots}
    \end{tikzcd}
  \end{center}
  Concerning the left diagram, recall first that for all
  $e\colon Y\epirightarrow Z$ from~$\EpiKlT$, $\proj2(e)$ is
  surjective. Hence, if~$Y$ is finite, then so is~$Z$ and
  furthermore~$|Y|\geqslant|Z|$.  Applied to our diagram, we obtain
  that $(|Y_i|)_i$ is a non-decreasing sequence of non-negative
  integers bounded by~$|X|$.  Hence, for all~$i$ sufficiently large
  (say, larger than~$i_0$), $|Y_i|=|Y_{i+1}|$ and as a consequence
  $\proj2(e_i)$ is a totally defined bijection.

  Given an arrow~$f\colon Y\nrightarrow Z$, denote by~$||f||$ the sum
  of the~$|\proj1(f)(y)|$ for~$y\in Y$ (the sum of the sizes of all output
  words).  Consider now some~$g\colon Z\nrightarrow T$.  It is easy to
  see that if~$\proj2(f)$ is surjective and $\proj2(g)$ is totally
  defined, then $||g\circ f||\geqslant||g||+||f||$. It follows that
  $||f_{i}||\geqslant ||f_{i+1}||+||e_i||$ for all~$i\geqslant i_0$.
  Since~$||f_{i_0}||$ is finite, it follows that $||e_i||=0$ for
  all~$i$ sufficiently large.  Hence, $e_i$ is an isomorphism: we have
  established the first item of \cref{definition:EM-noetherian}.

  Concerning the right diagram, note that for all
  $m\colon Y\monorightarrow Z$ from~$\MonoKlT$, $\proj2(m)$ is totally
  defined and injective. Hence, if~$Z$ is finite, $Y$ also is and
  $|Y|\leqslant|Z|$. It follows in our case that $(|Z_i|)_i$ is a non
  decreasing sequence, bounded by~$|X|$.  Hence, for all $i$
  sufficiently large, $|Z_i|=|Z_{i+1}|$. It follows that~$\proj2(m_i)$
  is an injection from~$Z_i$ to~$Z_{i+1}$ with $|Z_i|=|Z_{i+1}|$
  finite. Thus~$\proj2(m_i)$ is a bijection. Since
  furthermore~$\proj1(m_i)=\varepsilon$ by assumption, we obtain
  that~$m_i$ is an isomorphism: we have established the second item of
  \cref{definition:EM-noetherian}.

  Overall~$X$ is "$(\EpiKlT,\MonoKlT)$-noetherian".

  For the converse direction, let us consider a
  "$(\EpiKlT,\MonoKlT)$-noetherian" set~$X$ to show that it is finite. For
  the sake of contradiction, assume it would be infinite, and let
  $Z_0\subsetneq Z_1\subsetneq Z_2 \subsetneq \cdots$ be a sequence of
  subsets of~$Z$.  Let now~$m_i\colon Z_i\monorightarrow Z_{i+1}$ be
  the inclusion from~$Z_i$ into~$Z_{i+1}$, and
  $f_i\colon Z_i \monorightarrow X$ be the inclusion of~$Z_i$
  into~$X$. These arrows show that the second item of
  \cref{definition:EM-noetherian} does not hold.
\end{proof}

\section{Termination and correctness of the \FunL{-}algorithm}

\begin{definition}
  \label{def:factQT}
  For all couples $(\setQ,\setT)$, the language $\langL$ induces a
  canonical map

\[\displaystyle\coprod_{q\in \setQ}\prod_{t\in\setT}\langL(\symbLeft q
  t\symbRight)\colon\coprod_{\setQ}\langL(\objIn)\to\prod_{\setT}\langL(\objOut)\,,\]
obtained using the universal properties of the coproduct, respectively
of the product.  We define $\intro\factQT$ as the factorization of
this map
\begin{center}
  \begin{tikzcd}
    \coprod_{\setQ}\langL(\objIn) \arrow[r, two heads] & \factQT
    \arrow[r, tail] & \prod_{\setT}\langL(\objOut).
  \end{tikzcd}
\end{center}
\end{definition}

The objects $\factQT$ will play a crucial role in the proof of
termination of the \FunL{-}algorithm. First notice that, when
$\setQ=\setT=\alphA^*$, then $\fact{\alphA^*,\alphA^*}$ is nothing
else but the state space of the "minimal automaton" $\objMin(\langL)$,
as computed in Theorem~\ref{minimalwordauto}. 
More generally, for an
arbitrary $(\setQ,\setT)$, objects of the form $\factQT$ are featured
in the minimal "$\catC_{\setQ,\setT}$-biautomaton"
$\objMin(\langLQT)$, whose computation from
Theorem~\ref{thm:minimal-QT-biautomaton} is recalled below.
\begin{center}
  \begin{tikzcd}[column sep=3em]
    & \displaystyle\coprod_{\mathclap{\setQ}}\langL(\objIn)
    \arrow[twoheadrightarrow]{d}{\epiOne} \arrow[rightarrow,
    shift left]{r}[yshift=0.2ex]{\symbAQTInit} \arrow[rightarrow,
    shift right]{r}[swap, yshift=-0.4ex]{\symbEpsilonInit} &
    \displaystyle\coprod_{\mathclap{\setQ\cup \setQ\alphA}}\langL(\objIn) \arrow[twoheadrightarrow]{d}{\epiTwo} \arrow[rightarrow, bend left]{rd}{\symbRightTInit} & \\
    \langL(\objIn) \arrow[rightarrow]{r}{\symbLeftQMin} \arrow[rightarrow, bend left]{ru}{\symbLeftQInit} \arrow[rightarrow, bend right]{rd}[swap]{\symbLeftQFinal}& \objMin(\langLQT)(\objBiStates1) \arrow[rightarrowtail]{d}{\monoOne} \arrow[rightarrow, shift left]{r}[yshift=0.2ex]{\symbAQTMin} \arrow[rightarrow, shift right]{r}[swap,yshift=-0.4ex]{\symbEpsilonMin} &\objMin(\langLQT)(\objBiStates2)\arrow[rightarrow]{r}{\symbRightTMin} \arrow[rightarrowtail]{d}{\monoTwo} &\langL(\objOut)\\
    & \displaystyle\prod_{\mathclap{\setT \cup \alphA
        \setT}}\langL(\objOut) \arrow[rightarrow, shift
    left]{r}[yshift=0.2ex]{\symbAQTFinal} \arrow[rightarrow, shift
    right]{r}[swap, yshift=-0.4ex]{\symbEpsilonFinal} &
    \displaystyle\prod_{\mathclap{\setT}}\langL(\objOut)
    \arrow[rightarrow, bend right]{ru}[swap]{\symbRightTFinal} &
  \end{tikzcd}
\end{center}

Indeed, a careful analysis of the situation described in the above
diagram reveals that when computing the unique map from
$\autoInit(\langLQT)$ to $\autoFinal(\langLQT)$ we obtain that:
\begin{itemize}
\item the unique map
  $\autoInit(\langLQT)(\objBiStates1)\to\autoFinal(\langLQT)(\objBiStates1)$
  is the canonical map
  \[\displaystyle\coprod_{q\in
      \setQ}\prod_{t\in\setT\cup\alphA\setT}\langL(\symbLeft q
    t\symbRight)\colon\coprod_{\setQ}\langL(\objIn)\to\prod_{\setT\cup\alphA\setT}\langL(\objOut)\,;\]
\item the unique map
  $\autoInit(\langLQT)(\objBiStates2)\to\autoFinal(\langLQT)(\objBiStates2)$
  is the canonical map
  \[\displaystyle\coprod_{q\in
      \setQ\cup\setQ\alphA}\prod_{t\in\setT}\langL(\symbLeft q
    t\symbRight)\colon\coprod_{\setQ\cup\setQ\alphA}\langL(\objIn)\to\prod_{\setT}\langL(\objOut)\,;\]
\item the composite map $\monoTwo\circ\symbEpsilonMin\circ\epiOne$ is
  the canonical map
  \[\displaystyle\coprod_{q\in \setQ}\prod_{t\in\setT}\langL(\symbLeft
    q
    t\symbRight)\colon\coprod_{\setQ}\langL(\objIn)\to\prod_{\setT}\langL(\objOut)\,.\]
\end{itemize}

We therefore obtain the following lemma.

\begin{lemma}
  \label{lem:factQT-and-minimal-automaton} The states of the minimal
  "$\catC_{\setQ,\setT}$-biautomaton" $\objMin(\langLQT)$ are
  $\fact{\setQ,\setT\cup\alphA\setT}$ and, respectively,
  $\fact{\setQ\cup\setQ\alphA,\setT}$.
\end{lemma}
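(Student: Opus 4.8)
The plan is to combine \cref{thm:minimal-QT-biautomaton} with the fact that the factorization system on $\catAuto(\langLQT)$ coming from \cref{lem:fact-cat-auto} is computed pointwise. By \cref{thm:minimal-QT-biautomaton}, $\objMin(\langLQT)$ is the object appearing in the factorization of the unique morphism $\autoInit(\langLQT)\Rightarrow\autoFinal(\langLQT)$ in $\catAuto(\langLQT)$; since the epis (resp.\ monos) of this category are exactly the natural transformations with components in $\Epi$ (resp.\ $\Mono$), the value $\objMin(\langLQT)(\objBiStates1)$, resp.\ $\objMin(\langLQT)(\objBiStates2)$, is precisely the object produced by the $\catC$-factorization of the $\objBiStates1$-component, resp.\ $\objBiStates2$-component, of that natural transformation. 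So the proof reduces to identifying these two components.

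By \cref{lem:initial-QT-biautomaton,lem:final-QT-biautomaton} the relevant source and target objects are $\autoInit(\langLQT)(\objBiStates1)=\coprod_{\setQ}\langL(\objIn)$, $\autoFinal(\langLQT)(\objBiStates1)=\prod_{\setT\cup\alphA\setT}\langL(\objOut)$, $\autoInit(\langLQT)(\objBiStates2)=\coprod_{\setQ\cup\setQ\alphA}\langL(\objIn)$ and $\autoFinal(\langLQT)(\objBiStates2)=\prod_{\setT}\langL(\objOut)$. Since $\autoInit(\langLQT)$ is initial in $\catAuto(\langLQT)$, there is at most one morphism from it to $\autoFinal(\langLQT)$, so it suffices to exhibit one natural transformation and read off its components at $\objBiStates1$ and $\objBiStates2$. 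The candidate is the transformation which is the identity at $\objIn$ and $\objOut$ (as forced by membership in $\catAuto(\langLQT)$), is $\coprod_{q\in\setQ}\prod_{t\in\setT\cup\alphA\setT}\langL(\symbLeft q t\symbRight)$ at $\objBiStates1$, and is $\coprod_{q\in\setQ\cup\setQ\alphA}\prod_{t\in\setT}\langL(\symbLeft q t\symbRight)$ at $\objBiStates2$. Checking that this is a natural transformation amounts to verifying the naturality square for each generator of $\catIQT$, namely $\symbLeftQ$, $\symbAQT$, $\symbEpsilon$ and $\symbRightT$; after pre-composing with a coproduct injection and post-composing with a product projection, each square collapses to an equality $\langL(\symbLeft w \symbRight)=\langL(\symbLeft w' \symbRight)$ in which the words $w$ and $w'$ turn out to coincide. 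The required identifications (such as matching $qat$ written as $q\cdot a\cdot t$ against $qa\cdot t$ across the two test sets $\setT\cup\alphA\setT$ and $\setT$) are exactly those supplied by the coherence diagrams of $\catIQT$, the prefix-closure of $\setQ$ and the suffix-closure of $\setT$, precisely as in the proof of \cref{lem:input-cats}.

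Finally, the two components identified above are verbatim the canonical maps $\coprod_{\setQ}\langL(\objIn)\to\prod_{\setT\cup\alphA\setT}\langL(\objOut)$ and $\coprod_{\setQ\cup\setQ\alphA}\langL(\objIn)\to\prod_{\setT}\langL(\objOut)$ that appear in \cref{def:factQT}, whose $\catC$-factorizations are by definition $\fact{\setQ,\setT\cup\alphA\setT}$ and $\fact{\setQ\cup\setQ\alphA,\setT}$. Combined with the first paragraph, this gives $\objMin(\langLQT)(\objBiStates1)=\fact{\setQ,\setT\cup\alphA\setT}$ and $\objMin(\langLQT)(\objBiStates2)=\fact{\setQ\cup\setQ\alphA,\setT}$, as claimed.

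The only genuine work lies in the naturality verification of the second paragraph, for the $\symbAQT$- and $\symbEpsilon$-squares, where one must line up index words against two different sets of test words. This is, however, exactly the combinatorial bookkeeping already carried out for \cref{lem:input-cats}, so no new idea is needed and I anticipate no real obstacle beyond care with indices.
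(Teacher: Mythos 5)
Your proposal is correct and follows essentially the same route as the paper: the paper likewise obtains the lemma by identifying the $\objBiStates1$- and $\objBiStates2$-components of the unique morphism $\autoInit(\langLQT)\to\autoFinal(\langLQT)$ with the canonical maps $\coprod_{q}\prod_{t}\langL(\symbLeft q t\symbRight)$ for the index sets $(\setQ,\setT\cup\alphA\setT)$ and $(\setQ\cup\setQ\alphA,\setT)$, and then factorizes componentwise via \cref{lem:fact-cat-auto}. Your explicit uniqueness-from-initiality argument and generator-by-generator naturality check simply spell out what the paper compresses into ``a careful analysis of the situation described in the above diagram.''
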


The next lemma lists some simple, yet very useful properties of the
objects $\factQT$.

\begin{lemma}
  \label{lem:factQT-properties}
  We assume $\setQ,\setQ',\setT,\setT'$ are subsets of $\alphA^*$ such
  that $\setQ\subseteq\setQ'$ and $\setT\subseteq\setT'$. Then the
  following properties hold.
  \begin{enumerate}
  \item\label{it:can-mono} There exists a unique canonical
    $\Mono$-morphism
    \begin{tikzcd}
      \factQT\ar[r,tail] &\fact{\setQ',\setT}
    \end{tikzcd} such that the following diagram commutes, in which
    $\intro*\can$ denotes the canonical morphism between the two
    coproducts.
    \begin{center}
      \begin{tikzcd}
        \coprod_{\setQ}\langL(\objIn) \arrow[r, two heads] \arrow[d, "\can"] & \factQT \arrow[r, tail] \arrow[d, dotted, tail] & \prod_{\setT}\langL(\objOut) \arrow[transform canvas={xshift=0.3ex},-]{d} \arrow[transform canvas={xshift=-0.3ex},-]{d} \\
        \coprod_{\setQ'}\langL(\objIn) \arrow[r, two heads] &
        {\fact{\setQ',\setT}} \arrow[r, tail] &
        \prod_{\setT}\langL(\objOut)
      \end{tikzcd}
    \end{center}
  \item\label{it:can-epi} There exists a unique canonical
    $\Epi$-morphism
    \begin{tikzcd}
      \fact{\setQ,\setT'}\ar[r,two heads] &\factQT
    \end{tikzcd} such that the following diagram commutes, in which
    $\reintro*\can$ denotes the canonical morphism between the two
    products.
    \begin{center}
      \begin{tikzcd}
        \coprod_{\setQ}\langL(\objIn) \arrow[r, two heads] \arrow[transform canvas={xshift=0.3ex},-]{d} \arrow[transform canvas={xshift=-0.3ex},-]{d} & \fact{\setQ,\setT'} \arrow[r, tail] \arrow[d, dotted, two heads] & \prod_{\setT'}\langL(\objOut) \arrow[d, "\can"] \\
        \coprod_{\setQ}\langL(\objIn) \arrow[r, two heads] &
        {\fact{\setQ,\setT}} \arrow[r, tail] &
        \prod_{\setT}\langL(\objOut)
      \end{tikzcd}
    \end{center}
  \item\label{it:can-comp} For these canonical morphisms, the
    following diagram commutes:
    \begin{center}
      \begin{tikzcd}
        {\fact{\setQ,\setT'}} \arrow[d, two heads] \arrow[r, tail] & {\fact{\setQ',\setT'}} \arrow[d, two heads] \\
        {\fact{\setQ,\setT}} \arrow[r, tail] & {\fact{\setQ',\setT}}.
      \end{tikzcd}
    \end{center}
  \item\label{it:can-mono-iso} Furthermore, if
    \begin{tikzcd}
      {\fact{\setQ,\setT'}} \arrow[r, tail] & {\fact{\setQ',\setT'}}
    \end{tikzcd}
    is an iso, then so is
    \begin{tikzcd}
      {\fact{\setQ,\setT}} \arrow[r, tail] & {\fact{\setQ',\setT}}
    \end{tikzcd}.
  \item\label{it:can-epi-iso} And dually, if
    \begin{tikzcd}
      {\fact{\setQ',\setT'}} \arrow[r, two heads] &
      {\fact{\setQ',\setT}}
    \end{tikzcd}
    is an iso, then so is
    \begin{tikzcd}
      {\fact{\setQ,\setT'}} \arrow[r, two heads] &
      {\fact{\setQ,\setT}}.
    \end{tikzcd}
  \item\label{it:can-comp-well} The canonical morphisms compose well,
    that is, for $\setQ'\subseteq\setQ''$ and
    $\setT'\subseteq\setT''$, we have
    \begin{center}
      \begin{tikzcd}
        {\fact{\setQ,\setT''}} \arrow[r, two heads] \arrow[rr,two
        heads,bend left] & {\fact{\setQ,\setT'}} \arrow[r, two heads]
        & {\fact{\setQ,\setT}} & \text{and} & {\fact{\setQ,\setT}}
        \arrow[r, tail] \arrow[rr, tail, bend left] &
        {\fact{\setQ',\setT}} \arrow[r, tail] &
        {\fact{\setQ'',\setT}}.
      \end{tikzcd}
    \end{center}
  \end{enumerate}
\end{lemma}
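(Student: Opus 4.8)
Throughout, write $c_{\setQ,\setT}\colon\coprod_{\setQ}\langL(\objIn)\to\prod_{\setT}\langL(\objOut)$ for the canonical map of Definition~\ref{def:factQT} and $\coprod_{\setQ}\langL(\objIn)\twoheadrightarrow\factQT\rightarrowtail\prod_{\setT}\langL(\objOut)$ for its $(\Epi,\Mono)$-factorization. The whole proof rests on two elementary observations plus the diagonal fill-in. First: composing $c_{\setQ',\setT}$ on the right with the canonical coproduct map $\can\colon\coprod_{\setQ}\langL(\objIn)\to\coprod_{\setQ'}\langL(\objIn)$ yields $c_{\setQ,\setT}$, and dually composing $c_{\setQ,\setT'}$ on the left with the canonical product map $\can\colon\prod_{\setT'}\langL(\objOut)\to\prod_{\setT}\langL(\objOut)$ yields $c_{\setQ,\setT}$; both are immediate from the universal properties of (co)products. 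Second, I will use freely the cancellation properties of a factorization system (if $g\circ f\in\Mono$ and $g\in\Mono$ then $f\in\Mono$; dually, if $g\circ f\in\Epi$ and $f\in\Epi$ then $g\in\Epi$) and the essential uniqueness of $(\Epi,\Mono)$-factorizations.

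For item~\ref{it:can-mono}, I would apply the diagonal fill-in to the commuting square whose top edge is the $\Epi$-part $\coprod_{\setQ}\langL(\objIn)\twoheadrightarrow\factQT$, whose right edge is the $\Mono$-part $\factQT\rightarrowtail\prod_{\setT}\langL(\objOut)$, whose left edge is $\can$ followed by the $\Epi$-part $\coprod_{\setQ'}\langL(\objIn)\twoheadrightarrow\fact{\setQ',\setT}$, and whose bottom edge is the $\Mono$-part $\fact{\setQ',\setT}\rightarrowtail\prod_{\setT}\langL(\objOut)$; the two composites agree, both being $c_{\setQ,\setT}$, by the first observation. The resulting diagonal is exactly the dotted arrow of the statement, its composite with $\fact{\setQ',\setT}\rightarrowtail\prod_{\setT}\langL(\objOut)$ is $\factQT\rightarrowtail\prod_{\setT}\langL(\objOut)\in\Mono$, so it lies in $\Mono$ by left cancellation, and it is unique by the uniqueness clause of the fill-in. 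Item~\ref{it:can-epi} is proved by the exact dual argument, using the product $\can$, a diagonal fill-in, and right cancellation.

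For item~\ref{it:can-comp}, since $\fact{\setQ',\setT}\rightarrowtail\prod_{\setT}\langL(\objOut)$ is monic it suffices to check that the two paths around the square agree after composing with it; using the two triangles that define the canonical morphisms of items~\ref{it:can-mono} and~\ref{it:can-epi}, both paths reduce to $\fact{\setQ,\setT'}\rightarrowtail\prod_{\setT'}\langL(\objOut)\xrightarrow{\can}\prod_{\setT}\langL(\objOut)$. For items~\ref{it:can-mono-iso} and~\ref{it:can-epi-iso} (dual to each other; I treat the first), let $i$ be the assumed iso $\fact{\setQ,\setT'}\xrightarrow{\sim}\fact{\setQ',\setT'}$, let $p,p'$ be the canonical epis $\fact{\setQ,\setT'}\twoheadrightarrow\fact{\setQ,\setT}$ and $\fact{\setQ',\setT'}\twoheadrightarrow\fact{\setQ',\setT}$, and let $j$ be the canonical mono $\fact{\setQ,\setT}\rightarrowtail\fact{\setQ',\setT}$; item~\ref{it:can-comp} gives $j\circ p=p'\circ i$, hence $p'=j\circ(p\circ i^{-1})$ with $j\in\Mono$ and $p\circ i^{-1}\in\Epi$. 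Thus $p'$ has the two $(\Epi,\Mono)$-factorizations $j\circ(p\circ i^{-1})$ and $\id\circ p'$; essential uniqueness produces an iso $\phi$ with $\id\circ\phi=j$, i.e. $j=\phi$ is invertible. Finally item~\ref{it:can-comp-well} follows from the uniqueness clauses of items~\ref{it:can-mono} and~\ref{it:can-epi}: for $\setQ\subseteq\setQ'\subseteq\setQ''$ the composite $\fact{\setQ,\setT}\rightarrowtail\fact{\setQ',\setT}\rightarrowtail\fact{\setQ'',\setT}$ satisfies the characterizing diagram of item~\ref{it:can-mono} for the inclusion $\setQ\subseteq\setQ''$ (because the two coproduct $\can$'s compose to the one for $\setQ\subseteq\setQ''$), so it equals the direct canonical mono; the test-word statement is dual.

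The parts involving only diagram chasing (items~\ref{it:can-mono}--\ref{it:can-comp} and~\ref{it:can-comp-well}) are routine once the two elementary observations are in place. The delicate step is items~\ref{it:can-mono-iso} and~\ref{it:can-epi-iso}: the naive move of cancelling $p$ in $j\circ p=p'\circ i$ is illegitimate since $p$ is merely epi, so instead one must feed the identity $p'=j\circ(p\circ i^{-1})$ into the uniqueness of factorizations in order to upgrade the $\Mono$-morphism $j$ to an isomorphism.
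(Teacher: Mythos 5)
Your proof is correct and follows essentially the same route as the paper's: existence and uniqueness of the canonical morphisms via the diagonal fill-in, membership in $\Mono$ (resp.\ $\Epi$) via the cancellation properties of the factorization system, commutativity of the square in item~\ref{it:can-comp} by comparing both paths after composing into $\prod_{\setT}\langL(\objOut)$, and item~\ref{it:can-comp-well} by uniqueness. The only cosmetic difference is in items~\ref{it:can-mono-iso}--\ref{it:can-epi-iso}, where the paper deduces that the canonical $\Mono$-morphism lies in $\Epi$ by the cancellation property of $\Epi$ and then invokes $\Epi\cap\Mono=$ isomorphisms, while you compare two $(\Epi,\Mono)$-factorizations of the canonical quotient; the two arguments are interchangeable.
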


\begin{proof}
  
  \begin{enumerate}
  \item The existence and unicity of the morphism follows from the
    diagonal fill-in property of the factorization system
    $(\Epi,\Mono)$. The fact that it is in $\Mono$ follows from the
    cancellation property of $\Mono$, see, e.g.~\cite[Proposition
    14.9(2)]{joyofcats}.
  \item The proof is dual to that of the previous item.
  \item Notice that, by the diagonal fill-in property, we have a
    unique canonical morphism
    \begin{tikzcd}
      \fact{\setQ,\setT'}\ar[r] &\fact{\setQ',\setT}
    \end{tikzcd}
    such that the following diagram commutes, in which the vertical
    arrows denote the canonical morphisms between the two coproducts,
    respectively between the two products.
    \begin{center}
      \begin{tikzcd}
        \coprod_{\setQ}\langL(\objIn) \arrow[r, two heads] \arrow[d, "\can"] & \fact{\setQ,\setT'} \arrow[r, tail] \arrow[d, dotted,"\exists!"] & \prod_{\setT'}\langL(\objOut) \arrow[d, "\can"] \\
        \coprod_{\setQ'}\langL(\objIn) \arrow[r, two heads] &
        {\fact{\setQ',\setT}} \arrow[r, tail] &
        \prod_{\setT}\langL(\objOut)
      \end{tikzcd}
    \end{center}
    The commutativity of the diagram follows since both
    \begin{tikzcd}
      {\fact{\setQ,\setT'}} \arrow[r, tail] & {\fact{\setQ',\setT'}}
      \arrow[r, two heads] & {\fact{\setQ',\setT}}
    \end{tikzcd}
    and
    \begin{tikzcd}
      {\fact{\setQ,\setT'}} \arrow[r, two heads]&{\fact{\setQ,\setT}}
      \arrow[r, tail] & {\fact{\setQ',\setT}}
    \end{tikzcd}
    make the above diagram commute.
  \item Assume \begin{tikzcd} {\fact{\setQ,\setT'}} \arrow[r, tail] &
      {\fact{\setQ',\setT'}}
    \end{tikzcd} is an isomorphism. Then the canonical morphism
    \begin{center}
      \begin{tikzcd} {\fact{\setQ,\setT'}} \arrow[r] &
        {\fact{\setQ',\setT}}
      \end{tikzcd}
    \end{center}
    is in $\Epi$. Using the cancellative property of $\Epi$ (the dual
    of~\cite[ Proposition 14.9(2)]{joyofcats}), we obtain that the
    canonical morphism
    \begin{tikzcd} {\fact{\setQ,\setT}} \arrow[r, tail] &
      {\fact{\setQ',\setT}}
    \end{tikzcd} is in $\Epi$. Since it is already in $\Mono$, it must
    be in $\Epi\cap\Mono$, hence it is an isomorphism.
  \item The proof is dual to that of the previous item.
  \item The proof is very similar to that of item~\ref{it:can-comp}
    and relies on the uniqueness of the canonical morphisms.
  \end{enumerate}
\end{proof}

Using Lemma~\ref{lem:factQT-and-minimal-automaton} together with
items~\ref{it:can-mono} and~\ref{it:can-epi} of
Lemma~\ref{lem:factQT-properties}, we obtain that $\factQT$ is a
$(\Epi,\Mono)$-factorization of the morphism $\symbEpsilonMin$. \AP To
underline the dependence of $\symbEpsilonMin$ on $\setQ$ and $\setT$,
we will write hereafter $\intro*\symbEpsilonMinQT$ for this map.

\begin{lemma}
  \label{lem:epi-mono-fact-of-epsilon}
  The morphism $\symbEpsilonMinQT=\objMin(\langLQT)(\symbEpsilon)$ is
  the composite of the canonical morphims
  \begin{tikzcd}
    \fact{\setQ,\setT\cup\alphA\setT}\ar[r,two heads]&\factQT\ar[r,
    tail]& \fact{\setQ\cup\setQ\alphA,\setT}
  \end{tikzcd}. Consequently:
  \begin{itemize}
  \item if $\symbEpsilonMinQT\not\in\Epi$, then \begin{tikzcd}
      \factQT\ar[r, tail]& \fact{\setQ\cup\setQ\alphA,\setT}
    \end{tikzcd} is not an isomorphism;
  \item if $\symbEpsilonMinQT\not\in\Mono$ then \begin{tikzcd}
      \fact{\setQ,\setT\cup\alphA\setT}\ar[r,two heads]&\factQT
    \end{tikzcd} is not an isomorphism.
  \end{itemize}
\end{lemma}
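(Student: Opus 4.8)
The plan is to establish the displayed factorization first and then read off the two bullet points, which will be purely formal. Throughout write $\symbEpsilonMin$ for the map $\symbEpsilonMinQT=\objMin(\langLQT)(\symbEpsilon)$, and $e,m$ for the two arrows of the factorization $\coprod_{\setQ}\langL(\objIn)\twoheadrightarrow\factQT\rightarrowtail\prod_{\setT}\langL(\objOut)$ from Definition~\ref{def:factQT}. By Lemma~\ref{lem:factQT-and-minimal-automaton} the two state objects of $\objMin(\langLQT)$ are $\fact{\setQ,\setT\cup\alphA\setT}$ and $\fact{\setQ\cup\setQ\alphA,\setT}$, so $\symbEpsilonMin$ is a morphism between exactly these, and moreover (from the way that lemma is obtained) $\epiOne,\monoOne$ are the $\Epi$/$\Mono$-parts of the factorization producing the first object and $\epiTwo,\monoTwo$ those of the second. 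Applying item~\ref{it:can-epi} of Lemma~\ref{lem:factQT-properties} to $\setT\subseteq\setT\cup\alphA\setT$ yields the canonical $\Epi$-morphism $\phi\colon\fact{\setQ,\setT\cup\alphA\setT}\twoheadrightarrow\factQT$, characterised by $\phi\circ\epiOne=e$ and $m\circ\phi=\can\circ\monoOne$; applying item~\ref{it:can-mono} to $\setQ\subseteq\setQ\cup\setQ\alphA$ yields the canonical $\Mono$-morphism $\psi\colon\factQT\rightarrowtail\fact{\setQ\cup\setQ\alphA,\setT}$, characterised by $\psi\circ e=\epiTwo\circ\can$ and $\monoTwo\circ\psi=m$ (here the two $\can$'s are the canonical product restriction, respectively coproduct inclusion). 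This already provides a composite of the required shape; it remains to check $\psi\circ\phi=\symbEpsilonMin$.

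For that last point I would use uniqueness of the diagonal fill-in. Consider the square with top edge $\epiOne\in\Epi$, bottom edge $\monoTwo\in\Mono$, left edge $\epiTwo\circ\symbEpsilonInit$ and right edge $\symbEpsilonFinal\circ\monoOne$. It commutes: both composites unwind to the canonical map $\coprod_{\setQ}\langL(\objIn)\to\prod_{\setT}\langL(\objOut)$, using that $\symbEpsilonInit$ is the canonical coproduct inclusion, $\symbEpsilonFinal$ the canonical product restriction, and the explicit formulas for the maps $\autoInit(\langLQT)\to\autoFinal(\langLQT)$ recalled just before Lemma~\ref{lem:factQT-and-minimal-automaton}. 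Now $\symbEpsilonMin$ fills this square --- this is exactly the commutativity $\symbEpsilonMin\circ\epiOne=\epiTwo\circ\symbEpsilonInit$ and $\monoTwo\circ\symbEpsilonMin=\symbEpsilonFinal\circ\monoOne$ read off the minimal-biautomaton diagram of Theorem~\ref{thm:minimal-QT-biautomaton} --- and so does $\psi\circ\phi$, since $(\psi\circ\phi)\circ\epiOne=\psi\circ e=\epiTwo\circ\can=\epiTwo\circ\symbEpsilonInit$ and $\monoTwo\circ(\psi\circ\phi)=m\circ\phi=\can\circ\monoOne=\symbEpsilonFinal\circ\monoOne$. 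Since $\epiOne\in\Epi$ and $\monoTwo\in\Mono$, the diagonal fill-in property of the factorization system forces $\psi\circ\phi=\symbEpsilonMin$, which is the asserted factorization.

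The two bullet points then follow by contradiction. If $\factQT\rightarrowtail\fact{\setQ\cup\setQ\alphA,\setT}$ --- that is, $\psi$ --- were an isomorphism, then $\symbEpsilonMin=\psi\circ\phi$ would be the composite of a morphism in $\Epi$ with an isomorphism, hence in $\Epi$ (isomorphisms lie in $\Epi$ and $\Epi$ is closed under composition), contradicting $\symbEpsilonMin\notin\Epi$; dually, if $\fact{\setQ,\setT\cup\alphA\setT}\twoheadrightarrow\factQT$ --- that is, $\phi$ --- were an isomorphism, then $\symbEpsilonMin\in\Mono$, contradicting $\symbEpsilonMin\notin\Mono$. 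I expect the only mildly delicate part to be the bookkeeping of the canonical maps and verifying that the auxiliary square commutes; beyond that the argument is entirely formal, and indeed the factorization was already announced in the paragraph preceding the lemma.
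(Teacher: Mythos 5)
Your proof is correct and follows essentially the same route as the paper: the paper's proof consists of exactly the diagram you describe (the two factorization squares from Lemma~\ref{lem:factQT-properties} glued around the middle square of Theorem~\ref{thm:minimal-QT-biautomaton}), with the identification $\psi\circ\phi=\symbEpsilonMinQT$ left implicit via uniqueness of the diagonal fill-in, and the two bullets dispatched by the same closure-under-composition remark. You have simply written out the bookkeeping that the paper compresses into a single picture.
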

\begin{proof}
  In a diagram, we have the following situation:
  \begin{center}
    \begin{tikzcd}
      \coprod_{\setQ}\langL(\objIn) \arrow[rr,"\symbEpsilonInit"] \arrow[d, two heads] \arrow[rd, two heads, bend right=20, in=90] &                                                     & \coprod_{\setQ\cup\setQ\alphA}\langL(\objIn) \arrow[d, two heads] \\
      {\fact{\setQ,\setT\cup\alphA\setT}} \arrow[r, two heads] \arrow[d, tail]            & \factQT \arrow[r, tail] \arrow[rd, tail, bend left=20,out=270] & {\fact{\setQ\cup\setQ\alphA,\setT}} \arrow[d, tail]              \\
      \prod_{\setT\cup\alphA\setT}\langL(\objOut)
      \arrow[rr,swap,"\symbEpsilonFinal"] & &
      \prod_{\setT}\langL(\objOut).
    \end{tikzcd}
  \end{center}
  The second part of the lemma is immediate, since $\Epi$ and $\Mono$
  contain the isomorphisms and are closed under composition.
\end{proof}

Before proving the main result of this section, we require two more
results which use the "$(\Epi,\Mono)$-noetherianity" of the state space
of the "minimal automaton" $\objMin(\langL)$.
\begin{lemma}
  \label{lem:no-inf-seq}
  The following hold.
  \begin{enumerate}
  \item There does not exist an infinite sequence of subsets
    $(\setT_n)_n$ with $\setT_n\subseteq \setT_{n+1}$ and such that
    the canonical morphisms below are not isomorphisms:
    \begin{center}
      \begin{tikzcd}
        \fact{\alphA^*,{\setT_1}} & \fact{\alphA^*,{\setT_2}}\ar[l,two
        heads] & \fact{\alphA^*,{\setT_3}} \ar[l,two heads]
        &\ldots\ar[l,two heads]
      \end{tikzcd}
    \end{center}
  \item Dually, there does not exist an infinite sequence of subsets
    $(\setQ_n)_n$ with $\setQ_n\subseteq \setQ_{n+1}$ and such that
    the canonical morphisms below are not isomorphisms:
    \begin{center}
      \begin{tikzcd}
        \fact{{\setQ_1},\alphA^*}\ar[r,tail] &
        \fact{{\setQ_2},\alphA^*}\ar[r,tail] &
        \fact{{\setQ_{3}},\alphA^*}\ar[r,tail] & \ldots
      \end{tikzcd}
    \end{center}
  \end{enumerate}
\end{lemma}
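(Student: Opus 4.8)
The plan is to derive both statements from the $(\Epi,\Mono)$-noetherianity of the single object $X := \fact{\alphA^*,\alphA^*}$, which, as observed right after Definition~\ref{def:factQT}, is precisely the state space $\objMin(\langL)(\objStates)$ of the minimal automaton, and hence is $(\Epi,\Mono)$-noetherian by hypothesis. Both items follow by contradiction, the second being the formal dual of the first.

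For item~1, suppose such an infinite sequence $(\setT_n)_n$ existed. Since $\setT_n\subseteq\alphA^*$, item~\ref{it:can-epi} of Lemma~\ref{lem:factQT-properties} provides for each $n$ a canonical $\Epi$-morphism $X=\fact{\alphA^*,\alphA^*}\twoheadrightarrow\fact{\alphA^*,\setT_n}$, so each $\fact{\alphA^*,\setT_n}$ is an $\Epi$-quotient of $X$. Writing $e_n\colon\fact{\alphA^*,\setT_{n+1}}\twoheadrightarrow\fact{\alphA^*,\setT_n}$ for the canonical $\Epi$-morphisms of the sequence, item~\ref{it:can-comp-well} of Lemma~\ref{lem:factQT-properties} (applied repeatedly, together with the uniqueness of the canonical morphisms) shows that all the resulting triangles commute, i.e.\ $X\twoheadrightarrow\fact{\alphA^*,\setT_{n+1}}\xrightarrow{e_n}\fact{\alphA^*,\setT_n}$ coincides with $X\twoheadrightarrow\fact{\alphA^*,\setT_n}$ for every $n$. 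This is exactly an infinite co-chain of $\Epi$-quotients of $X$ of the shape appearing in the left diagram of Definition~\ref{definition:EM-noetherian}, and by assumption none of the $e_n$ is an isomorphism, contradicting the $(\Epi,\Mono)$-noetherianity of $X$.

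For item~2, the dual argument applies: given an infinite sequence $(\setQ_n)_n$ with $\setQ_n\subseteq\setQ_{n+1}\subseteq\alphA^*$, item~\ref{it:can-mono} of Lemma~\ref{lem:factQT-properties} yields canonical $\Mono$-morphisms $\fact{\setQ_n,\alphA^*}\rightarrowtail\fact{\alphA^*,\alphA^*}=X$, exhibiting each $\fact{\setQ_n,\alphA^*}$ as an $\Mono$-subobject of $X$, while item~\ref{it:can-comp-well} makes the canonical $\Mono$-morphisms $m_n\colon\fact{\setQ_n,\alphA^*}\rightarrowtail\fact{\setQ_{n+1},\alphA^*}$ into a commutative chain of the shape forbidden by $(\Epi,\Mono)$-noetherianity; since no $m_n$ is an isomorphism, we again reach a contradiction. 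The only point requiring (routine) care is the verification that the canonical morphisms genuinely assemble into commuting (co-)chains, which is immediate from item~\ref{it:can-comp-well} and the uniqueness of the canonical morphisms, so there is no real obstacle here.
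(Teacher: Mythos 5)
Your proof is correct and follows essentially the same route as the paper: both exhibit the objects $\fact{\alphA^*,\setT_n}$ (resp.\ $\fact{\setQ_n,\alphA^*}$) as a commuting co-chain of $\Epi$-quotients (resp.\ chain of $\Mono$-subobjects) of $\fact{\alphA^*,\alphA^*}$ via the canonical morphisms of Lemma~\ref{lem:factQT-properties}, and then invoke the $(\Epi,\Mono)$-noetherianity of that object, which is the state space of the minimal automaton. The only difference is that you spell out the use of items~\ref{it:can-epi}, \ref{it:can-mono} and the uniqueness of the canonical morphisms slightly more explicitly than the paper does.
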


\begin{proof}
  Notice that, by item~\ref{it:can-comp-well} of
  Lemma~\ref{lem:factQT-properties}, we have the following commuting
  diagram of canonical $\Epi$-quotients:
  \begin{center}
    \begin{tikzcd}
      & & & \fact{\alphA^*,\alphA^*}\ar[dlll,bend right=25,two heads]\ar[dll,bend right=20,two heads]\ar[dl,bend right=10,two heads] \arrow[color=white]{d}[color=black, description]{\ldots}\\
      \fact{\alphA^*,{\setT_1}} & \fact{\alphA^*,{\setT_2}}\ar[l,two
      heads] & \fact{\alphA^*,{\setT_3}} \ar[l,two heads]
      &\ldots\ar[l,two heads]
    \end{tikzcd}
  \end{center}
  We conclude that the sequence cannot exist upon recalling that
  $\fact{\alphA^*,\alphA^*}$ is isomorphic to the state space of the
  "minimal automaton", and hence it is "$(\Epi,\Mono)$-noetherian". By
  the first property from Definition~\ref{definition:EM-noetherian},
  such a sequence cannot exist.
  
  The proof for the second part is dual, as now we have an infinite
  sequence of $\Mono$-subobjects of $\fact{\alphA^*,\alphA^*}$ and we
  can use the second property from
  Definition~\ref{definition:EM-noetherian}.
\end{proof}

In proving the termination of the while loop, we will consider
sequences of couples $(\setQ_i,\setT_i)_{i\ge 0}$ which can be added
either because $\symbEpsilonMinQT[\setQ_i,\setT_i]$ is not in $\Epi$
or because it is not in $\Mono$, obtaining an $\Epi$-quotient
\begin{tikzcd}\fact{\setQ_i,\setT_i} &
  \fact{\setQ_{i+1},\setT_{i+1}}\ar[l,two heads,"e_i"]
\end{tikzcd}
or a $\Mono$-subobject
\begin{tikzcd}
  \fact{\setQ_i,\setT_i}\ar[r,
  tail,"m_i"]&\fact{\setQ_{i+1},\setT_{i+1}}
\end{tikzcd}.

For this reason, the next lemma will prove helpful.

\begin{lemma}
  \label{lem:finitechain}
  Consider a possibly infinite sequence of couples
  $(\setQ_i,\setT_i)_{i\ge 1}$ related by morphisms $(f_i)_{i\ge 1}$
  such that for all $i$ either
  \begin{itemize}
  \item $\setQ_{i}\subseteq\setQ_{i+1}$, $\setT_i = \setT_{i+1}$ and
    $f_i$ is the canonical morphism
    \begin{tikzcd}\fact{\setQ_{i},\setT_{i}}
      \arrow[r,tail,"m_i"] & \fact{\setQ_{i+1},\setT_{i+1}}
    \end{tikzcd}, or
  \item $\setQ_{i+1}=\setQ_i$, $\setT_i\subseteq \setT_{i+1}$ and
    $f_i$ is the canonical morphism
    \begin{tikzcd}\fact{\setQ_{i},\setT_{i}} & \arrow[l,two
      heads,swap, "e_i"] \fact{\setQ_{i+1},\setT_{i+1}}
    \end{tikzcd}.
  \end{itemize}
  Then there can only be finitely many morphisms in $(f_i)_{i\ge 1}$
  not being isomorphisms.
\end{lemma}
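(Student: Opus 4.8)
Suppose, for contradiction, that infinitely many of the morphisms $f_i$ fail to be isomorphisms. Every index $i$ is of one of the two types described in the statement --- either a \emph{$\setQ$-step}, where $f_i$ is the canonical $\Mono$-morphism $\fact{\setQ_i,\setT_i}\rightarrowtail\fact{\setQ_{i+1},\setT_i}$ obtained by enlarging $\setQ$, or a \emph{$\setT$-step}, where $f_i$ is the canonical $\Epi$-morphism $\fact{\setQ_i,\setT_{i+1}}\twoheadrightarrow\fact{\setQ_i,\setT_i}$ obtained by enlarging $\setT$ --- so one of the two kinds of steps contributes infinitely many non-isomorphisms. By the evident symmetry between items~\ref{it:can-mono-iso} and~\ref{it:can-epi-iso} of Lemma~\ref{lem:factQT-properties} and between the two items of Lemma~\ref{lem:no-inf-seq}, it is enough to derive a contradiction in the case where infinitely many $\setQ$-steps are non-isomorphisms; let $i_1<i_2<\cdots$ enumerate their indices.

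The first step is a single-step observation. For a $\setQ$-step at index $i_k$ we have $\setT_{i_k}=\setT_{i_k+1}$, $\setQ_{i_k}\subseteq\setQ_{i_k+1}$, and $f_{i_k}$ is the canonical $\Mono$-morphism $\fact{\setQ_{i_k},\setT_{i_k}}\rightarrowtail\fact{\setQ_{i_k+1},\setT_{i_k}}$, which by assumption is not an isomorphism. Applying the contrapositive of item~\ref{it:can-mono-iso} of Lemma~\ref{lem:factQT-properties}, with the test set enlarged from $\setT_{i_k}$ to $\alphA^*$, yields that the canonical $\Mono$-morphism $\fact{\setQ_{i_k},\alphA^*}\rightarrowtail\fact{\setQ_{i_k+1},\alphA^*}$ is not an isomorphism either.

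The second step assembles these into a single ascending chain. Since the $\setQ_j$ are non-decreasing in $j$, the subsets $\setQ_{i_1}\subseteq\setQ_{i_2}\subseteq\cdots$ form an infinite ascending sequence, and by item~\ref{it:can-comp-well} of Lemma~\ref{lem:factQT-properties} its chain of canonical $\Mono$-morphisms $\fact{\setQ_{i_1},\alphA^*}\rightarrowtail\fact{\setQ_{i_2},\alphA^*}\rightarrowtail\cdots$ has the property that each $\fact{\setQ_{i_k},\alphA^*}\rightarrowtail\fact{\setQ_{i_{k+1}},\alphA^*}$ factors through $\fact{\setQ_{i_k+1},\alphA^*}$. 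Were such a composite an isomorphism, then (reasoning exactly as in the proof of Lemma~\ref{lem:factQT-properties}, item~\ref{it:can-mono}: an isomorphism lies in $\Epi$, the cancellation property forces the right-hand factor $\fact{\setQ_{i_k+1},\alphA^*}\rightarrowtail\fact{\setQ_{i_{k+1}},\alphA^*}$ into $\Epi\cap\Mono$ and hence makes it an isomorphism, so the left-hand factor is an isomorphism too) the morphism $\fact{\setQ_{i_k},\alphA^*}\rightarrowtail\fact{\setQ_{i_k+1},\alphA^*}$ would be an isomorphism, contradicting the previous paragraph. Hence none of the consecutive canonical morphisms of the chain $(\fact{\setQ_{i_k},\alphA^*})_k$ is an isomorphism, which contradicts item~2 of Lemma~\ref{lem:no-inf-seq} (note that $\fact{\alphA^*,\alphA^*}$ is the $(\Epi,\Mono)$-noetherian state space $\objMin(\langL)(\objStates)$, which is precisely where noetherianity enters). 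The $\setT$-step case is entirely dual: item~\ref{it:can-epi-iso} of Lemma~\ref{lem:factQT-properties} pushes $\setQ$ up to $\alphA^*$, and item~1 of Lemma~\ref{lem:no-inf-seq} delivers the contradiction.

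The only genuinely delicate point is the bookkeeping in the third paragraph: one must package the interleaved $\setQ$-steps and $\setT$-steps so that each kind reduces cleanly to one of the two halves of Lemma~\ref{lem:no-inf-seq}, which requires both enlarging the free coordinate to $\alphA^*$ (so that noetherianity of $\objMin(\langL)(\objStates)$ applies) and using the uniqueness of the canonical morphisms together with the cancellation properties of $(\Epi,\Mono)$ to rule out isomorphisms slipping into the extracted chain. Everything else is a routine invocation of the already established properties of the objects $\factQT$.
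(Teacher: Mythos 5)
Your proof is correct and follows essentially the same route as the paper: lift each non-isomorphism $m_i$ (resp.\ $e_i$) to a non-isomorphism between the objects $\fact{\setQ_i,\alphA^*}$ (resp.\ $\fact{\alphA^*,\setT_i}$) via item~\ref{it:can-mono-iso} (resp.\ \ref{it:can-epi-iso}) of Lemma~\ref{lem:factQT-properties}, and then invoke Lemma~\ref{lem:no-inf-seq}. Your explicit extraction of the subchain of non-isomorphic steps, together with the cancellation argument showing the composites remain non-isomorphisms, spells out a point the paper leaves implicit, but the underlying argument is the same.
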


  \begin{proof}
    Using items~\ref{it:can-comp} and~\ref{it:can-comp-well} of
    Lemma~\ref{lem:factQT-properties} for each index $i$, we have that
    the morphism $f_i$ is at the bottom of exactly one of the next
    diagrams. Notice, however, that for the diagram on the right,
    since we are in the case when $\setQ_{i}=\setQ_{i+1}$, the
    canonical morphism at the top of the diagram is actually the
    identity.
    \begin{center}
      \begin{tikzcd}
        \fact{\setQ_{i},\alphA^*}\ar[r,tail,"\overline{m_i}"]\ar[d,two heads] & \fact{\setQ_{i+1},\alphA^*}\ar[d,two heads] & & \fact{\setQ_{i},\alphA^*}\ar[d,two heads] & \fact{\setQ_{i+1},\alphA^*}\arrow[transform canvas={yshift=0.3ex},-]{l} \arrow[transform canvas={yshift=-0.3ex},-]{l}\ar[d,two heads]\\
        \fact{\setQ_{i},\setT_{i}}\ar[r,tail,swap,"m_i"] &
        \fact{\setQ_{i+1},\setT_{i+1}} & & \fact{\setQ_{i},\setT_{i}}
        & \fact{\setQ_{i+1},\setT_{i+1}}\ar[l,two heads,"e_i"]
      \end{tikzcd}
    \end{center}
    We therefore obtain a sequence
    \begin{center}
      \begin{tikzcd}
        \fact{\setQ_{1},\alphA^*} \ar[r,tail,"\overline{f_1}"] &
        \fact{\setQ_{2},\alphA^*} \ar[r,tail,"\overline{f_2}"] &
        \fact{\setQ_{3},\alphA^*} &\ldots
      \end{tikzcd}
    \end{center}
    where each $\overline{f_i}$ is either the morphism
    $\overline{m_i}$ from the left diagram above or the identity in
    the second case. Notice additionally that, using
    item~\ref{it:can-mono-iso} of Lemma~\ref{lem:factQT-properties},
    if $m_i$ is not an isomorphism, then $\overline{m_i}$ is not
    either: otherwise, we would get that $m_i$ is an isomorphism,
    contradicting the hypothesis.

    By Lemma~\ref{lem:no-inf-seq}, the sequence $(\overline{f_i})_i$
    can contain only finitely many morphisms of the form
    $\overline{m_i}$ not being isomorphisms.

    In a completely dual manner, we can prove that only for finitely
    many indexes $i$ the morphisms $f_i$ are of the form $e_i$ not
    being isomorphisms. Indeed, in this case we would exhibit a
    sequence of the form
    \begin{center}
      \begin{tikzcd}
        \fact{\alphA^*,\setT_{1}} & \fact{\alphA^*,\setT_{2}}
        \ar[l,two heads,"\widetilde{f_1}"] & \fact{\alphA^*,\setT_{3}}
        \ar[l,two heads,"\widetilde{f_2}"] & &\ldots
      \end{tikzcd}
    \end{center}
    where each $\widetilde{f_i}$ is either the identity or a canonical
    morphism $\widetilde{e_i}$, i.e. a lifting of $e_i$, which by
    virtue of item~\ref{it:can-epi-iso} of
    Lemma~\ref{lem:factQT-properties} cannot be an isomorphism if
    $e_i$ is not an isomorphism. We conclude just as above using
    Lemma~\ref{lem:no-inf-seq}.

    Since only finitely many $f_i$s are of the form $m_i$ not being
    isomorphisms and only finitely many of them are of the form $e_i$
    not being isomorphisms, we conclude that there can only be
    finitely many morphisms not being isomorphisms in the whole
    sequence.
  \end{proof}

  We are now ready to prove the main result of this section.
  \begin{proposition}
    \label{prop:term-while-loop}
    The \textbf{while} loop on line~\ref{line3} of
    Algorithm~\ref{algorithm:main} terminates.
  \end{proposition}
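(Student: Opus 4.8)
The plan is to argue by contradiction: I would assume the \textbf{while} loop runs forever, and from the infinitely many iterations extract an infinite sequence of non-isomorphic canonical morphisms between objects of the form $\factQT$, which contradicts Lemma~\ref{lem:finitechain}.

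First I would record the sequence of couples $(\setQ_i,\setT_i)_{i\ge 1}$ visited by the loop, with $(\setQ_1,\setT_1)$ the couple upon entering the loop and each $(\setQ_{i+1},\setT_{i+1})$ obtained from $(\setQ_i,\setT_i)$ by executing line~\ref{line5} or line~\ref{line8}, splitting an iteration that performs both (line~\ref{line5} then line~\ref{line8}) into two consecutive steps. Since $\varepsilon\in\setQ_i\cap\setT_i$ and $\alphA$ is non-empty, each such step strictly enlarges $\setQ_i$ or $\setT_i$, so if the loop does not terminate this sequence is genuinely infinite. I would then connect consecutive couples by the canonical morphisms of Lemma~\ref{lem:factQT-properties}: a $\Mono$-morphism $\fact{\setQ_i,\setT_i}\rightarrowtail\fact{\setQ_{i+1},\setT_{i+1}}$ when the step adds $\setQ_i\alphA$ to $\setQ_i$ (item~\ref{it:can-mono}), and an $\Epi$-morphism $\fact{\setQ_{i+1},\setT_{i+1}}\twoheadrightarrow\fact{\setQ_i,\setT_i}$ when it adds $\alphA\setT_i$ to $\setT_i$ (item~\ref{it:can-epi}). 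This is precisely the shape of sequence treated in Lemma~\ref{lem:finitechain}, which guarantees that only finitely many of these connecting morphisms fail to be isomorphisms.

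The core of the argument is then to show that \emph{every} iteration of the \textbf{while} loop contributes at least one non-isomorphic connecting morphism. Fix an iteration starting from $(\setQ_i,\setT_i)$; by the loop guard it is not "$\langL$-automatable", so $\symbEpsilonMinQT[\setQ_i,\setT_i]$ is not an isomorphism, hence it fails to lie in $\Epi$ or it fails to lie in $\Mono$ (a morphism lying in $\Epi\cap\Mono$ being an isomorphism). If $\symbEpsilonMinQT[\setQ_i,\setT_i]\notin\Epi$, then line~\ref{line5} is executed and, by Lemma~\ref{lem:epi-mono-fact-of-epsilon}, the canonical $\Mono$-morphism $\fact{\setQ_i,\setT_i}\rightarrowtail\fact{\setQ_i\cup\setQ_i\alphA,\setT_i}$ is not an isomorphism. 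Otherwise $\symbEpsilonMinQT[\setQ_i,\setT_i]\in\Epi$ but $\notin\Mono$; then line~\ref{line5} is skipped, so $\setQ_i$ is untouched, line~\ref{line8} is executed on the couple $(\setQ_i,\setT_i)$, and again by Lemma~\ref{lem:epi-mono-fact-of-epsilon} the canonical $\Epi$-morphism $\fact{\setQ_i,\setT_i\cup\alphA\setT_i}\twoheadrightarrow\fact{\setQ_i,\setT_i}$ is not an isomorphism. In both cases the iteration produces a non-isomorphic connecting morphism, so a non-terminating loop would produce infinitely many of them, contradicting Lemma~\ref{lem:finitechain}. Hence the \textbf{while} loop terminates.

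I expect the only delicate point to be the bookkeeping when a single iteration triggers \emph{both} lines~\ref{line5} and~\ref{line8}: one must check that, whichever reading of the pseudocode is adopted for the second test (recomputing the $\varepsilon$-map on the enlarged couple, or keeping the original one), the case analysis above still yields at least one non-isomorphic step per iteration --- invoking, if necessary, item~\ref{it:can-epi-iso} of Lemma~\ref{lem:factQT-properties} to transfer non-invertibility of $\fact{\setQ_i,\setT_i\cup\alphA\setT_i}\twoheadrightarrow\fact{\setQ_i,\setT_i}$ to $\fact{\setQ_i\cup\setQ_i\alphA,\setT_i\cup\alphA\setT_i}\twoheadrightarrow\fact{\setQ_i\cup\setQ_i\alphA,\setT_i}$. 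Everything else is a direct combination of Lemmas~\ref{lem:factQT-properties}, \ref{lem:epi-mono-fact-of-epsilon} and~\ref{lem:finitechain}.
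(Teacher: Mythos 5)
Your proposal is correct and follows essentially the same route as the paper's own proof: build the chain of couples $(\setQ_i,\setT_i)$, connect successive ones by the canonical morphisms of Lemma~\ref{lem:factQT-properties}, use Lemma~\ref{lem:epi-mono-fact-of-epsilon} to see that each iteration of the loop contributes a connecting morphism that is not an isomorphism, and conclude by Lemma~\ref{lem:finitechain}. If anything, you are slightly more careful than the paper, which tacitly assumes each iteration performs exactly one of the two additions, whereas you explicitly split an iteration that fires both lines~\ref{line5} and~\ref{line8} into two steps and flag the resulting bookkeeping issue.
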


  \begin{proof}
    Consider a possibly infinite sequence $(\setQ_i,\setT_i)_{i\ge 1}$ obtained
    via a run of the \textbf{while} loop. Notice that for each $i$ the
    couple $(\setQ_{i+1},\setT_{i+1})$ was obtained from
    $(\setQ_i,\setT_i)$ either because
    \begin{itemize}
    \item[a)] $\symbEpsilonMinQT[\setQ_i,\setT_i]\not\in\Epi$, or
    \item[b)] $\symbEpsilonMinQT[\setQ_i,\setT_i]\not\in\Mono$.
    \end{itemize}
    Therefore, we know that, in the respective cases, we have either
    \begin{itemize}
    \item[a)] $\setQ_{i+1}=\setQ_{i}\cup\setQ_{i}\alphA$ and $\setT_{i+1}=\setT_{i}$, or
    \item[b)] $\setQ_{i+1}=\setQ_{i}$ and $\setT_{i+1}=\setT_{i}\cup\alphA\setT_{i}$.
    \end{itemize}
    Let $f_i$ denote exactly one the following canonical morphisms,
    depending on the respective case:
    \begin{itemize}
    \item[a)] \begin{tikzcd}
  \fact{\setQ_i,\setT_i}\ar[r,
  tail,"m_i"]&\fact{\setQ_{i+1},\setT_{i+1}}
\end{tikzcd}, or
    \item[b)]     \begin{tikzcd}\fact{\setQ_i,\setT_i} &
  \fact{\setQ_{i+1},\setT_{i+1}}\ar[l,two heads,"e_i"].
\end{tikzcd}
    \end{itemize}
    Notice that by Lemma~\ref{lem:epi-mono-fact-of-epsilon} none of
    the morphisms $f_i$ is an isomorphism.

    Therefore, we can apply Lemma~\ref{lem:finitechain} to conclude
    that the sequence $(\setQ_i,\setT_i)_{i\ge 1}$ is finite.
\end{proof}

\begin{lemma}
  \label{algotermination} Only a finite number of "counterexamples"
  with their prefixes can be added to $\setQ$.
\end{lemma}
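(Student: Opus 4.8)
The plan is to show that the sequence of hypothesis automata produced by the algorithm eventually becomes stationary up to isomorphism, which forces the number of counterexamples to be finite. First, by \cref{prop:term-while-loop} every execution of the \textbf{while} loop on line~\ref{line3} terminates, so a run of Algorithm~\ref{algorithm:main} produces a well-defined sequence of pairs $(\setQ_i,\setT_i)$ in which $\setQ_i$ and $\setT_i$ are non-decreasing, and at certain indices $n_1<n_2<\cdots$ the pair is $\langL$-automatable, a hypothesis automaton $\mathcal{H}_k:=\hypQT[\setQ_{n_k},\setT_{n_k}]$ is built, and an equivalence query is asked; if the answer is no for $\mathcal{H}_k$ with counterexample $w_k$, then $w_k$ and all its prefixes are added to $\setQ$, so $w_k\in\setQ_{n_{k+1}}$. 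Since $w_k\in\setQ_{n_{k+1}}$ and $\varepsilon\in\setT_{n_{k+1}}$, \cref{consistencypreserved} applied to $(\setQ_{n_{k+1}},\setT_{n_{k+1}})$ gives $\mathcal{H}_{k+1}(\symbLeft w_k\symbRight)=\langL(\symbLeft w_k\symbRight)$, whereas by definition of a counterexample $\mathcal{H}_k(\symbLeft w_k\symbRight)\neq\langL(\symbLeft w_k\symbRight)$; hence the languages accepted by $\mathcal{H}_k$ and $\mathcal{H}_{k+1}$ differ, and in particular $\mathcal{H}_k\not\cong\mathcal{H}_{k+1}$.

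Next I would show that the automata $\mathcal{H}_k$ cannot all be pairwise non-isomorphic. Because $\setQ$ only grows, the subobjects $\fact{\setQ_i,\alphA^*}$ and $\fact{\setQ_i\cup\setQ_i\alphA,\alphA^*}$ form ascending chains of $\Mono$-subobjects of $\objMin(\langL)(\objStates)=\fact{\alphA^*,\alphA^*}$ (items~\ref{it:can-mono} and~\ref{it:can-comp-well} of \cref{lem:factQT-properties}); since $\objMin(\langL)(\objStates)$ is $(\Epi,\Mono)$-noetherian by hypothesis of \cref{thm:algo-corr-term}, the second part of \cref{lem:no-inf-seq} shows that these chains are eventually stationary. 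Dually, because $\setT$ only grows, the quotients $\fact{\alphA^*,\setT_i}$ and $\fact{\alphA^*,\setT_i\cup\alphA\setT_i}$ form co-chains of $\Epi$-quotients of $\objMin(\langL)(\objStates)$, which are eventually stationary by the first part of \cref{lem:no-inf-seq}. Using the fact that $\fact{\setQ,\setT}$ is the middle object of the $(\Epi,\Mono)$-factorization of the composite $\fact{\setQ,\alphA^*}\rightarrowtail\objMin(\langL)(\objStates)\twoheadrightarrow\fact{\alphA^*,\setT}$ — and similarly for $\fact{\setQ,\setT\cup\alphA\setT}$ and $\fact{\setQ\cup\setQ\alphA,\setT}$ — together with the explicit computation of the minimal biautomaton (\cref{lem:factQT-and-minimal-automaton}, \cref{thm:minimal-QT-biautomaton}), one deduces that the minimal biautomata $\objMin(\langLQT[\setQ_i,\setT_i])$, hence their collapses, are eventually stationary up to isomorphism. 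In particular $\mathcal{H}_k\cong\mathcal{H}_{k+1}$ for all sufficiently large $k$.

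This contradicts the conclusion of the first paragraph, so only finitely many counterexamples $w_k$ are ever returned; since each contributes only finitely many words (itself and its prefixes) to $\setQ$, the statement follows. The step I expect to be the main obstacle is the passage, in the second paragraph, from ``the four families $\fact{\setQ_i,\alphA^*}$, $\fact{\setQ_i\cup\setQ_i\alphA,\alphA^*}$, $\fact{\alphA^*,\setT_i}$, $\fact{\alphA^*,\setT_i\cup\alphA\setT_i}$ of subobjects and quotients of $\objMin(\langL)(\objStates)$ stabilise'' to ``the minimal biautomaton $\objMin(\langLQT[\setQ_i,\setT_i])$ and the hypothesis automaton $\mathcal{H}_k$ stabilise up to isomorphism'': this requires checking that the object $\fact{\setQ,\setT}$, and more importantly the structure morphisms of $\objMin(\langLQT)$ obtained by diagonal fill-in in \cref{thm:minimal-QT-biautomaton}, depend — up to canonical isomorphism — only on these four families, which means carefully chasing through the factorizations of Lemmas~\ref{lem:initial-QT-biautomaton}--\ref{lem:final-QT-biautomaton} and the uniqueness properties collected in \cref{lem:factQT-properties}.
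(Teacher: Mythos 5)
Your first paragraph is exactly the paper's opening move: by \cref{consistencypreserved}, the hypothesis automaton built after the counterexample $w$ and its prefixes have been added to $\setQ$ agrees with $\langL$ on $w$ (take $q=w$, $t=\varepsilon$), while the previous one disagrees by definition of a counterexample, so consecutive hypothesis automata cannot be isomorphic. The problem is your second paragraph. The step you yourself flag as ``the main obstacle'' --- passing from stabilisation of the four families $\fact{\setQ_i,\alphA^*}$, $\fact{\setQ_i\cup\setQ_i\alphA,\alphA^*}$, $\fact{\alphA^*,\setT_i}$, $\fact{\alphA^*,\setT_i\cup\alphA\setT_i}$ to an isomorphism of the hypothesis automata \emph{as automata}, structure maps included --- is not a routine verification you can defer: it is essentially the entire technical content of the paper's proof, and leaving it as a sketch leaves the argument incomplete. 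The paper avoids having to prove your global claim (``the $\mathcal{H}_k$ eventually stabilise up to isomorphism'') by proving only the local contrapositive it actually needs: assuming the canonical arrows in the square relating $\factQT$ and $\factQT[\setQafter,\setTafter]$ are all isomorphisms, it restricts $\objMin(\langLQT[\setQafter,\setTafter])$ along the inclusion of generators to view it as an object of $\catAuto(\langLQT)$, and then exhibits explicit comparison maps $e_1,e_2$ from $\autoInit(\langLQT)$ and $m_1,m_2$ to $\autoFinal(\langLQT)$, checking via two diagram chases that $e_1,e_2\in\Epi$ and $m_1,m_2\in\Mono$. Uniqueness of $(\Epi,\Mono)$-factorizations in $\catAuto(\langLQT)$ then forces this restricted biautomaton to be isomorphic to $\objMin(\langLQT)$ \emph{as a biautomaton}, which is what transfers the isomorphism to the hypothesis automata and yields the contradiction with the counterexample. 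Without some version of this argument, your claim that the structure morphisms obtained by diagonal fill-in ``depend only on these four families'' is an assertion, not a proof.

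A secondary inaccuracy: you invoke \cref{lem:no-inf-seq} to conclude that the chains are ``eventually stationary'', but that lemma only forbids an infinite chain in which \emph{every} canonical morphism fails to be an isomorphism. To conclude that only finitely many of the morphisms in the actual run are non-isomorphisms (equivalently, eventual stationarity) you need the composition and iso-transfer bookkeeping of items~\ref{it:can-mono-iso}, \ref{it:can-epi-iso} and~\ref{it:can-comp-well} of \cref{lem:factQT-properties}, which the paper packages as \cref{lem:finitechain}; that is also the lemma the paper ultimately applies to the mixed chain $\factQT \rightarrowtail \factQT[\setQafter,\setT] \twoheadleftarrow \factQT[\setQafter,\setTafter]$ produced by successive counterexamples. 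So the skeleton of your argument is sound and close in spirit to the paper's, but the decisive step is missing rather than merely difficult.
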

\begin{proof}
  The situation is as follows:
  \begin{itemize}
  \item the learner asks an "equivalence query" for an
    "$\langL$-automatable" couple $(\setQ,\setT)$;
  \item the teacher answers the query negatively and provides a
    "counterexample" $w \in \alphA^*$;
  \item \AP the learner adds the counterexample and all its prefixes
    to $\setQ$, then he runs the while cycle and obtains a new
    "$\langL$-automatable" couple
    $(\intro*\setQafter,\intro*\setTafter)$, with $w$ and its prefixes
    still belonging to $\setQafter$.
  \end{itemize}

  We want to show that the complexity of the produced "hypothesis
  automata" strictly increases every time we add a "counterexample"
  and its prefixes to $\setQ$.

  \AP As $(\setQ,\setT)$ and $(\setQafter,\setTafter)$ are
  "automatable", we may suppose the fundamental commuting diagrams of
  the categories $\catAuto(\langLQT)$ and
  $\catAuto(\langLQT[\setQafter,\setTafter])$ to be as follows.

  \begin{center}
    \begin{tikzcd}[column sep=3em]
      & \displaystyle\coprod_{\mathclap{\setQ}}\langL(\objIn)
      \arrow[twoheadrightarrow]{d}{\epiOne} \arrow[rightarrow, shift
      left]{r}[yshift=0.2ex]{\symbAQTInit} \arrow[rightarrow, shift
      right]{r}[swap, yshift=-0.4ex]{\symbEpsilonInit} &
      \displaystyle\coprod_{\mathclap{\setQ\cup \setQ\alphA}}\langL(\objIn) \arrow[twoheadrightarrow]{d}{\epiTwo} \arrow[rightarrow, bend left]{rd}{\symbRightTInit} & \\
      \langL(\objIn) \arrow[rightarrow]{r}{\symbLeftQMin} \arrow[rightarrow, bend left]{ru}{\symbLeftQInit} \arrow[rightarrow, bend right]{rd}[swap]{\symbLeftQFinal}& \factQT \arrow[rightarrowtail]{d}{\monoOne} \arrow[rightarrow, shift left]{r}[yshift=0.2ex]{\symbAQTMin} \arrow[rightarrow, shift right]{r}[swap,yshift=-0.4ex]{\symbEpsilonMin=id} &\factQT \arrow[rightarrow]{r}{\symbRightTMin} \arrow[rightarrowtail]{d}{\monoTwo} &\langL(\objOut)\\
      & \displaystyle\prod_{\mathclap{\setT \cup \alphA
          \setT}}\langL(\objOut) \arrow[rightarrow, shift
      left]{r}[yshift=0.2ex]{\symbAQTFinal} \arrow[rightarrow, shift
      right]{r}[swap, yshift=-0.4ex]{\symbEpsilonFinal} &
      \displaystyle\prod_{\mathclap{\setT}}\langL(\objOut)
      \arrow[rightarrow, bend right]{ru}[swap]{\symbRightTFinal} &
    \end{tikzcd}
  \end{center}
  \begin{center}
    \begin{tikzcd}[column sep=3em]
      & \displaystyle\coprod_{\mathclap{\setQafter}}\langL(\objIn)
      \arrow[twoheadrightarrow]{d}{\intro*\epiOneTr}
      \arrow[rightarrow, shift
      left]{r}[yshift=0.2ex]{\intro*\symbAQTInitTr} \arrow[rightarrow,
      shift right]{r}[swap, yshift=-0.4ex]{\intro*\symbEpsilonInitTr}
      &
      \displaystyle\coprod_{\mathclap{\setQafter\cup \setQafter\alphA}}\langL(\objIn) \arrow[twoheadrightarrow]{d}{\intro*\epiTwoTr} \arrow[rightarrow, bend left]{rd}{\intro*\symbRightTInitTr} & \\
      \langL(\objIn) \arrow[rightarrow]{r}{\intro*\symbLeftQMinTr} \arrow[rightarrow, bend left]{ru}{\intro*\symbLeftQInitTr} \arrow[rightarrow, bend right]{rd}[swap]{\intro*\symbLeftQFinalTr}& \factQT[\setQafter,\setTafter] \arrow[rightarrowtail]{d}{\intro*\monoOneTr} \arrow[rightarrow, shift left]{r}[yshift=0.2ex]{\intro*\symbAQTMinTr} \arrow[rightarrow, shift right]{r}[swap,yshift=-0.4ex]{\intro*\symbEpsilonMinTr=id} &\factQT[\setQafter,\setTafter] \arrow[rightarrow]{r}{\intro*\symbRightTMinTr} \arrow[rightarrowtail]{d}{\intro*\monoTwoTr} &\langL(\objOut)\\
      & \displaystyle\prod_{\mathclap{\setTafter \cup \alphA
          \setTafter}}\langL(\objOut) \arrow[rightarrow, shift
      left]{r}[yshift=0.2ex]{\intro*\symbAQTFinalTr}
      \arrow[rightarrow, shift right]{r}[swap,
      yshift=-0.4ex]{\intro*\symbEpsilonFinalTr} &
      \displaystyle\prod_{\mathclap{\setTafter}}\langL(\objOut)
      \arrow[rightarrow, bend
      right]{ru}[swap]{\intro*\symbRightTFinalTr} &
    \end{tikzcd}
  \end{center}

  We want to show that our canonical arrows between the factorizations
  $\factQT$ and $\factQT[\setQafter,\setTafter]$ cannot be all
  isomorphisms: by contradiction, suppose they are.

  \begin{center}
    \begin{tikzcd}
      {\fact{\setQ,\setTafter}} \arrow[twoheadrightarrow]{d}[sloped, swap]{\cong} \arrow[rightarrowtail]{r}[]{\cong} & {\fact{\setQafter,\setTafter}} \arrow[twoheadrightarrow]{d}[sloped, swap]{\cong} \\
      {\fact{\setQ,\setT}} \arrow[rightarrowtail]{r}[]{\cong} &
      {\fact{\setQafter,\setT}}
    \end{tikzcd}
  \end{center}

  Let's consider the "minimal biautomaton" in
  $\catAuto(\langLQT[\setQafter,\setTafter])$ as a "biautomaton" in
  $\catAuto(\langLQT)$, that is, without the arrows $\symbLeftQMinTr$
  for $q \in \setQafter\setminus\setQ$ and the ones $\symbRightTMinTr$
  for $t \in \setTafter\setminus\setT$.

  \begin{center}
    \begin{tikzcd}[column sep=3em]
      & \displaystyle\coprod_{\mathclap{\setQ}}\langL(\objIn)
      \arrow[rightarrow]{d}{\exists! e_1} \arrow[rightarrow, shift
      left]{r}[yshift=0.2ex]{\symbAQTInit} \arrow[rightarrow, shift
      right]{r}[swap, yshift=-0.4ex]{\symbEpsilonInit} &
      \displaystyle\coprod_{\mathclap{\setQ\cup \setQ\alphA}}\langL(\objIn) \arrow[rightarrow]{d}{\exists! e_2} \arrow[rightarrow, bend left]{rd}{\symbRightTInit} & \\
      \langL(\objIn) \arrow[rightarrow]{r}{\symbLeftQMinTr} \arrow[rightarrow, bend left]{ru}{\symbLeftQInit} \arrow[rightarrow, bend right]{rd}[swap]{\symbLeftQFinal}& \factQT[\setQafter,\setTafter] \arrow[rightarrow]{d}{\exists! m_1} \arrow[rightarrow, shift left]{r}[yshift=0.2ex]{\symbAQTMinTr} \arrow[rightarrow, shift right]{r}[swap,yshift=-0.4ex]{\symbEpsilonMinTr=id} &\factQT[\setQafter,\setTafter] \arrow[rightarrow]{r}{\symbRightTMinTr} \arrow[rightarrow]{d}{\exists! m_2} &\langL(\objOut)\\
      & \displaystyle\prod_{\mathclap{\setT \cup \alphA
          \setT}}\langL(\objOut) \arrow[rightarrow, shift
      left]{r}[yshift=0.2ex]{\symbAQTFinal} \arrow[rightarrow, shift
      right]{r}[swap, yshift=-0.4ex]{\symbEpsilonFinal} &
      \displaystyle\prod_{\mathclap{\setT}}\langL(\objOut)
      \arrow[rightarrow, bend right]{ru}[swap]{\symbRightTFinal} &
    \end{tikzcd}
  \end{center}

  It turns out that $e_1,e_2 \in \Epi$ and $m_1,m_2 \in \Mono$: we
  prove just the first fact, the second one being analogous. \AP To
  underline the involved sets, we use $\intro*\res$ and $\intro*\inc$
  to denote respectively the canonical restriction between products
  and the canonical inclusion between coproducts.
  \begin{itemize}
  \item It is easy to check that $e_1$ must be equal to
    $\epiOneTr \circ \inc$ for the universal property of the
    coproduct; as a consequence of one of the diagrams related to our
    factorizations supposed to be isomorphic, we get that
    $e_1 \in \Epi$.
    \begin{center}
      \begin{tikzcd}[column sep=4em]
        \displaystyle\coprod_{\mathclap{\setQ}}\langL(\objIn) \arrow[rightarrow]{d}{\inc[\setQ,\setQafter]} \arrow[twoheadrightarrow]{r}{\epiOne} & \factQT \arrow[rightarrowtail]{r}{\monoTwo} \arrow[rightarrowtail]{d}[sloped,swap]{\cong}& \displaystyle\prod_{\mathclap{\setT}}\langL(\objOut)\\
        \displaystyle\coprod_{\mathclap{\setQafter}}\langL(\objIn) \arrow[transform canvas={xshift=0.3ex},-]{d} \arrow[transform canvas={xshift=-0.3ex},-]{d} \arrow[twoheadrightarrow]{r} & \factQT[\setQafter,\setT] \arrow[twoheadleftarrow]{d}[sloped,swap]{\cong} \arrow[rightarrowtail]{r} & \displaystyle\prod_{\mathclap{\setT}} \langL(\objOut) \arrow[transform canvas={xshift=0.3ex},-]{u} \arrow[transform canvas={xshift=-0.3ex},-]{u} \arrow[leftarrow]{d}{\res[\setTafter,\setT]}\\
        \displaystyle\coprod_{\mathclap{\setQafter}}\langL(\objIn)
        \arrow[twoheadrightarrow]{r}{\epiOneTr} &
        \factQT[\setQafter,\setTafter]\arrow[rightarrowtail]{r}{\monoTwoTr}
        & \displaystyle\prod_{\mathclap{\setTafter}} \langL(\objOut)
      \end{tikzcd}
    \end{center}
  \item It is easy to check that $e_2$ must be equal to
    $\epiTwoTr \circ \inc[\setQ \cup \setQ\alphA,\setQafter \cup
    \setQafter\alphA]$ for the universal property of the coproduct,
    too; the fact that it belongs to $\Epi$ comes from the following
    commuting diagram.
    \begin{center}
      \begin{tikzcd}[column sep=4em]
        \displaystyle\coprod_{\mathclap{\setQ}}\langL(\objIn) \arrow[twoheadrightarrow, bend left]{rr}{\epiOne} \arrow[rightarrow]{r}{\inc[\setQ,\setQ \cup\setQ\alphA]} \arrow[rightarrow]{d}{\inc} &\displaystyle\coprod_{\mathclap{\setQ \cup \setQ\alphA}}\langL(\objIn) \arrow[rightarrow]{d}{\inc[\setQ \cup \setQ\alphA, \setQafter \cup \setQafter\alphA]} \arrow[twoheadrightarrow]{r}{\epiTwo} & \factQT \arrow[rightarrowtail]{r}{\monoTwo} \arrow[rightarrowtail]{d}[swap,sloped]{\cong} & \displaystyle\prod_{\mathclap{\setT}}\langL(\objOut) \arrow[transform canvas={xshift=0.3ex},-]{d} \arrow[transform canvas={xshift=-0.3ex},-]{d}\\
        \displaystyle\coprod_{\mathclap{\setQafter}}\langL(\objIn) \arrow[rightarrow]{r}{\inc[\setQafter,\setQafter \cup\setQafter\alphA]} \arrow[transform canvas={xshift=0.3ex},-]{d} \arrow[transform canvas={xshift=-0.3ex},-]{d} &\displaystyle\coprod_{\mathclap{\setQafter \cup \setQafter\alphA}}\langL(\objIn) \arrow[transform canvas={xshift=0.3ex},-]{d} \arrow[transform canvas={xshift=-0.3ex},-]{d} \arrow[twoheadrightarrow]{r} & \factQT[\setQafter,\setT] \arrow[twoheadleftarrow]{d}[swap,sloped]{\cong} \arrow[rightarrowtail]{r} & \displaystyle\prod_{\mathclap{\setT}} \langL(\objOut) \arrow[leftarrow]{d}{\res} \\
        \displaystyle\coprod_{\mathclap{\setQafter}}\langL(\objIn)
        \arrow[twoheadrightarrow, bend right]{rr}[swap]{\epiOneTr}
        \arrow[rightarrow]{r}{\inc[\setQafter,\setQafter \cup
          \setQafter\alphA]}&\displaystyle\coprod_{\mathclap{\setQafter
            \cup \setQafter\alphA}}\langL(\objIn)
        \arrow[twoheadrightarrow]{r}{\epiTwoTr} &
        \factQT[\setQafter,\setTafter]\arrow[rightarrowtail]{r}{\monoTwoTr}
        & \displaystyle\prod_{\mathclap{\setTafter}} \langL(\objOut)
      \end{tikzcd}
    \end{center}
  \end{itemize}

  So, the two "minimal" objects are isomorphic as "biautomata" in
  $\catAuto(\langLQT)$; as a consequence, the associated "hypothesis
  automata" $\hypQT$ and $\hypQT[\setQafter,\setTafter]$ are
  isomorphic too: therefore, they accept the same language, leading to
  a contradiction.

  In fact, the two "hypothesis automata" associated to the couples
  $(\setQ,\setT)$ and $(\setQafter,\setTafter)$ differ on the
  "counterexample": $\hypQT$ is such that
  $\hypQT\circ i (\symbLeft w \symbRight) \neq \langL(\symbLeft w
  \symbRight)$, whereas $\hypQT[\setQafter,\setTafter]$ is such that
  $\hypQT[\setQafter,\setTafter] \circ i (\symbLeft w \symbRight) =
  \langL(\symbLeft w \symbRight)$, $w$ being the "counterexample"
  whose prefixes are included in $\setQafter$ (see Lemma
  \ref{consistencypreserved}).

  This means that every time a "counterexample" and its prefixes are
  added to $\setQ$, we get a couple of arrows going from the
  factorization related to the previous "hypothesis automaton" to the
  factorization of the following one such that at least one of them is
  not an isomorphism.

\begin{tikzcd}
  \factQT \arrow[rightarrowtail]{r} & \factQT[\setQafter,\setT]
  \arrow[twoheadleftarrow]{r} & \factQT[\setQafter,\setTafter]
\end{tikzcd}

Lemma~\ref{lem:finitechain} states that such a sequence can't be
infinite, so the number of counterexamples with their prefixes that
can be added is finite.
\end{proof}

Proposition~\ref{prop:term-while-loop} together with
Lemma~\ref{algotermination} guarantee that the algorithm terminates
and therefore produces an "automaton" "accepting" the "target
language".

It remains to prove that the produced "automaton" is "minimal": we
make use of the following fact.

\begin{lemma}
  \label{dividemin} Let $\catC$ be a category endowed with an initial
  object $I$, a final object $F$ and a factorization system
  $(\Epi,\Mono)$.
	
  Let $K$ be an "$(\Epi,\Mono)$-noetherian" object dividing the
  "$(\Epi,\Mono)$-minimal" object $\objMinStart$.
	
  $K$ is isomorphic to $\objMinStart$.
\end{lemma}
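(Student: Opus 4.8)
The statement to prove is Lemma~\ref{dividemin}: if $K$ is $(\Epi,\Mono)$-noetherian and $(\Epi,\Mono)$-divides the $(\Epi,\Mono)$-minimal object $\objMinStart$, then $K \cong \objMinStart$. The plan is to exploit minimality of $\objMinStart$ in the reverse direction: since $\objMinStart$ is $(\Epi,\Mono)$-minimal, it divides \emph{every} object of $\catC$, in particular it divides $K$. So we have two spans
\begin{center}
\begin{tikzcd}
  K & \cdot \arrow[l, two heads] \arrow[r, tail] & \objMinStart & \text{and} & \objMinStart & \cdot \arrow[l, two heads] \arrow[r, tail] & K\,.
\end{tikzcd}
\end{center}
Composing the divisions should, after suitable factorizations, yield an infinite descending co-chain of $\Epi$-quotients of $K$ together with an infinite ascending chain of $\Mono$-subobjects, and $(\Epi,\Mono)$-noetherianity of $K$ forces these to stabilize, which will pin down $K \cong \objMinStart$.

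**Key steps.**
First I would record that $\objMinStart$ divides $K$, obtaining $K \twoheadleftarrow P \rightarrowtail \objMinStart$ and $\objMinStart \twoheadleftarrow P' \rightarrowtail K$. Second, I would iterate: starting from $\mathrm{id}_K$, alternately apply ``$\objMinStart$ divides $-$'' and ``$K$ divides $-$'' — more precisely, each time we have an object $X$ with $X \twoheadleftarrow \cdot \rightarrowtail \objMinStart$ (so $X$ is an $\Epi$-quotient of an $\Mono$-subobject of $\objMinStart$, i.e. $X$ divides $\objMinStart$), combine with $\objMinStart \twoheadleftarrow \cdot \rightarrowtail K$ to get $X$ dividing $K$, and dually. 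The cleaner route is probably this: from the two spans, form the composite span $K \twoheadleftarrow P \rightarrowtail \objMinStart \twoheadleftarrow P' \rightarrowtail K$; taking the $(\Epi,\Mono)$-factorization of the zig-zag middle portion and using the diagonal fill-in, one gets a factorization-style comparison. Iterating $n$ times produces a chain of $\Epi$-quotients $K \twoheadleftarrow K_1 \twoheadleftarrow K_2 \twoheadleftarrow \cdots$ and of $\Mono$-subobjects $K_1 \rightarrowtail K_2 \rightarrowtail \cdots \rightarrowtail K$, where each $K_i$ simultaneously divides $K$ and divides $\objMinStart$; noetherianity of $K$ makes all but finitely many of the connecting maps isomorphisms. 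Third, once stabilized, the residual span $K \twoheadleftarrow P \rightarrowtail \objMinStart$ with both components ``saturated'' forces the $\Epi$ part to be iso (because any further $\Epi$-quotient would be non-trivial, contradicting stabilization) and similarly the $\Mono$ part, giving $K \cong \objMinStart$.

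**A cleaner variant.**
Actually the slickest argument uses a classical cardinality-style fact relativized to the factorization system: if $X$ divides $Y$ and $Y$ divides $X$ and $X$ is $(\Epi,\Mono)$-noetherian, then $X \cong Y$. From $\objMinStart$ minimal we get $\objMinStart$ divides $K$; from the hypothesis $K$ divides $\objMinStart$. Consider the span $K \xleftarrow{e} P \xrightarrow{m} \objMinStart$ and $\objMinStart \xleftarrow{e'} P' \xrightarrow{m'} K$. Pull back $m'$ along $m$ (or take the factorization of the composite $P' \to \objMinStart$ followed by... ) — the core observation is that $e\colon P \twoheadrightarrow K$ being split-like: composing, $K$ receives a non-trivial $\Epi$-quotient unless $e$ is iso. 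Iterating $K \twoheadleftarrow P \to \objMinStart \twoheadleftarrow P' \to K \twoheadleftarrow \cdots$, the $\Epi$-legs compose (by closure of $\Epi$ under composition) into an infinite co-chain of $\Epi$-quotients of $K$; since $K$ is $(\Epi,\Mono)$-noetherian, these are eventually isos, whence the $\Mono$-legs are eventually isos too (an $\Mono$-subobject of $\objMinStart$ that is an $\Epi$-quotient of $K$ with $K \cong$ that subobject must be all of $\objMinStart$, using diagonal fill-in), and in particular $K \cong \objMinStart$.

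**Main obstacle.**
The delicate point is making the ``iterate and compose'' bookkeeping precise: one must verify that the composite of the successive $\Epi$-legs really is a genuine \emph{co-chain of $\Epi$-quotients of $K$} in the sense of Definition~\ref{definition:EM-noetherian} (i.e. each stage is an $\Epi$-quotient of the previous one, and the whole thing sits over $K$ compatibly), and symmetrically for the $\Mono$-legs, and then to extract from ``all but finitely many connecting maps are isos'' the conclusion that the \emph{first} span already has both legs iso. The latter is where the diagonal fill-in property of $(\Epi,\Mono)$ and the cancellation properties of $\Epi$ and $\Mono$ (as in Lemma~\ref{lem:factQT-properties}) will do the work: if $K \cong K_n$ for large $n$ via the composed $\Epi$-quotient, then the very first $e\colon P \twoheadrightarrow K$ must be split mono, hence (being in $\Epi$) an isomorphism, and dually for $m$. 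I expect the clean write-up to mirror the style of the proof of Example~\ref{example:finite-is-noetherian}, tracking ``sizes'' abstractly via the noetherian chains rather than via cardinalities.
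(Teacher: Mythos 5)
Your proposal diverges from the paper's proof and, as written, has a genuine gap at its core step. You want to derive the result from the two facts ``$K$ divides $\objMinStart$'' and ``$\objMinStart$ divides $K$'' by composing the corresponding spans into infinite chains of $\Epi$-quotients and $\Mono$-subobjects of $K$. But the zig-zag $K \twoheadleftarrow P \rightarrowtail \objMinStart \twoheadleftarrow P' \rightarrowtail K$ does not compose: to turn it into a single span you would have to pull the $\Epi$-leg $P'\twoheadrightarrow\objMinStart$ back along the $\Mono$-leg $P\rightarrowtail\objMinStart$ (or otherwise prove transitivity of division), and the lemma assumes no pullbacks --- only an initial object, a final object and a factorization system. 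You flag this yourself as ``the delicate point'' but do not resolve it, and the tools you invoke (diagonal fill-in, cancellation) do not apply here, since there is no commutative square with an $\Epi$ on one side and an $\Mono$ on the other in sight. The closing step is also unsound as stated: a split mono lying in $\Epi$ need not be an isomorphism for a general orthogonal factorization system (consider $(\mathrm{Mor},\mathrm{Iso})$, where $\Epi$ is the class of all morphisms); only $\Epi\cap\Mono$ is guaranteed to consist of isomorphisms.

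The paper's proof never uses that $\objMinStart$ divides $K$; it uses the concrete presentation $I\twoheadrightarrow\objMinStart\rightarrowtail F$ of the minimal object. Given the span $K \xleftarrow{f} \widetilde{K} \xrightarrow{g} \objMinStart$ with $f\in\Epi$ and $g\in\Mono$, factor the unique arrow $K\to F$ as $K\xrightarrow{f'}K'\rightarrowtail F$. By finality of $F$ the two arrows $\widetilde{K}\to F$, namely $(\widetilde{K}\rightarrowtail\objMinStart\rightarrowtail F)$ and $(K'\rightarrowtail F)\circ f'\circ f$, coincide; the first is in $\Mono$, so by $\Mono$-cancellation $f'\circ f\in\Mono$, and it lies in $\Epi$ as a composite of two $\Epi$'s --- hence it is an isomorphism, and one may arrange $f'\circ f=\mathit{id}$. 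Now $f\circ f'$ is an idempotent endomorphism of $K$ lying in $\Epi$, which yields a constant co-chain of $\Epi$-quotients of $K$; noetherianity of $K$ forces it to be an isomorphism, hence the identity, so $f$ is an isomorphism. The dual argument using $I$ shows that $g$ is an isomorphism. This is where the initial and final objects do the real work: your proposal uses them only implicitly (through the existence of $\objMinStart$) and therefore misses the mechanism that makes the retraction of $f$ available for free.
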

\begin{proof}
  We know that the "minimal object" $\objMinStart$ in such a category
  is the factorization of the only arrow from the initial to the final
  object and, as a consequence, is defined up to isomorphism (see
  Lemma \ref{minobj}).
	
  Let's suppose that an "$(\Epi,\Mono)$-noetherian" object $K$
  "divides" $\objMinStart$, so there exists an object $\widetilde{K}$
  such that $K$ is a quotient of $\widetilde{K}$, which is a subobject
  of $\objMinStart$.
	
	\begin{tikzcd}
          &&I \arrow[twoheadrightarrow]{d}\\
          K \arrow[twoheadleftarrow]{r}{f} &\widetilde{K} \arrow[rightarrowtail]{r}{g} &\objMinStart\\
          &&F \arrow[leftarrowtail]{u}
	\end{tikzcd}

	Consider the only arrow from $K$ to $F$ together with its
        factorization $K'$, $f'$ being the morphism in $\Epi$ of the
        $(\Epi,\Mono)$-factorization of such an arrow.
	
	\begin{tikzcd}
          &&I \arrow[twoheadrightarrow]{d}\\
          K \arrow[twoheadleftarrow]{r}{f} \arrow[twoheadrightarrow]{rd}[swap]{f'} &\widetilde{K} \arrow[rightarrowtail]{r}{g} \arrow[dotted]{d}[swap, near start]{f' \circ f} &\objMinStart\\
          &K' \arrow[rightarrowtail]{r}&F \arrow[leftarrowtail]{u}
	\end{tikzcd}

	$f' \circ f$ clearly belongs to $\Epi$, but also belongs to
        $\Mono$, as the square on the right commutes ($F$ is a final
        object): as a consequence, $f' \circ f$ is an isomorphism,
        hence we may suppose that $K'=\widetilde{K}$ and
        $f' \circ f=id_{\widetilde{K}}$.
	
	\begin{tikzcd}
          &&I \arrow[twoheadrightarrow]{d}\\
          K \arrow[twoheadleftarrow]{r}{f} \arrow[twoheadrightarrow]{rd}[swap]{f'} &\widetilde{K} \arrow[transform canvas={xshift=0.3ex},-]{d} \arrow[transform canvas={xshift=-0.3ex},-]{d} \arrow[rightarrowtail]{r}{g} &\objMinStart\\
          &\widetilde{K} \arrow[rightarrowtail]{r}&F
          \arrow[leftarrowtail]{u}
	\end{tikzcd}
	
	Now observe that the following diagram commutes.
	
	\begin{tikzcd}[row sep=large]
          K \arrow[twoheadrightarrow]{d}[swap]{f \circ f'}
          \arrow[twoheadrightarrow,sloped,near end]{dr}{f \circ f'}
          \arrow[twoheadrightarrow,sloped,near end]{drr}{f \circ f'}\\
          K \arrow[twoheadleftarrow]{r}[swap]{f \circ f'} & K
          \arrow[twoheadleftarrow]{r}[swap]{f \circ f'} & K \ldots
	\end{tikzcd}
	
	By virtue of the "$(\Epi,\Mono)$-noetherianity" of $K$, $f
        \circ f'$ must be an isomorphism; in these conditions, it is
        easy to check that $f \circ f'=id_{K}$, so $f$ is an
        isomorphism.
	
	An analogous reasoning with arrows in $\Mono$ shows that $g$
        is an isomorphism too, concluding the proof.
      \end{proof}

\begin{lemma}
  \label{lem:correctness-algo}
  The "automaton" produced by Algorithm~\ref{algorithm:main} is the
  "minimal" one $\objMin(\langL)$.
\end{lemma}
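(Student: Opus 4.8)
By Proposition~\ref{prop:term-while-loop} and Lemma~\ref{algotermination} the algorithm halts, so it returns an object: the hypothesis automaton $\hypQT$ attached to some $\langL$-automatable couple $(\setQ,\setT)$ for which the last equivalence query was answered positively, i.e.\ $\hypQT\circ i=\langL$; hence $\hypQT$ is an object of $\catAuto(\langL)$. Under the hypotheses of Theorem~\ref{thm:algo-corr-term} the assumptions of Corollary~\ref{cor:min} hold for $\catAuto(\langL)$: the initial and final automata $\autoInit(\langL)$ and $\autoFinal(\langL)$ exist by Lemmas~\ref{lem:init-aut} and~\ref{lem:final-aut}, and $\catAuto(\langL)$ carries the componentwise factorization system of Lemma~\ref{lem:fact-cat-auto}; thus there is a unique arrow $\autoInit(\langL)\to\autoFinal(\langL)$ and $\objMin(\langL)$ is, by definition, the middle object of its $(\Epi,\Mono)$-factorization. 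The plan is to prove that $\hypQT$ is simultaneously \emph{reachable} and \emph{observable}, i.e.\ that the unique morphisms $r\colon\autoInit(\langL)\Rightarrow\hypQT$ and $o\colon\hypQT\Rightarrow\autoFinal(\langL)$ lie in $\Epi_{\catAuto(\langL)}$ and $\Mono_{\catAuto(\langL)}$ respectively; then $\autoInit(\langL)\xrightarrow{r}\hypQT\xrightarrow{o}\autoFinal(\langL)$ is an $(\Epi,\Mono)$-factorization of that unique arrow, and by essential uniqueness of factorizations $\hypQT\cong\objMin(\langL)$.

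For reachability it suffices to check that the component $r_{\objStates}$ lies in $\Epi$, since $r$, being a morphism of automata, is the identity on $\objIn$ and $\objOut$. Using the description of $\autoInit(\langL)$ from Lemma~\ref{lem:init-aut} and naturality of $r$, one computes that precomposing $r_{\objStates}\colon\coprod_{\alphA^*}\langL(\objIn)\to\hypQT(\objStates)$ with the $w$-th coproduct injection yields $\hypQT(\symbLeft w)$. For $w=q\in\setQ$, the identity $\hypQT(q)\circ\symbLeftQMin[\varepsilon]=\symbLeftQMin[q]$ established inside the proof of Lemma~\ref{consistencypreserved}, together with $\symbLeftQMin[q]=\epiOne\circ j_q$ (where $j_q$ is the injection into $\coprod_{\setQ}\langL(\objIn)$, read off the factorization of Theorem~\ref{thm:minimal-QT-biautomaton} and Lemma~\ref{lem:initial-QT-biautomaton}), shows that $r_{\objStates}$ composed with the canonical comparison $\coprod_{\setQ}\langL(\objIn)\to\coprod_{\alphA^*}\langL(\objIn)$ equals $\epiOne$. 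Since $\epiOne\in\Epi$ and $\Epi$ is right-cancellable (the dual of~\cite[Proposition~14.9(2)]{joyofcats}, already used in the proof of Lemma~\ref{lem:factQT-properties}), it follows that $r_{\objStates}\in\Epi$.

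For observability one argues dually. By Lemma~\ref{lem:factQT-and-minimal-automaton}, $\hypQT(\objStates)=\fact{\setQ,\setT\cup\alphA\setT}$, and precomposing the $w$-th projection ($w\in\alphA^*$) with $o_{\objStates}$ yields $\hypQT(w\symbRight)$. Using the coherence diagrams of $\catIQT$ and the fact that $\symbEpsilonMin$ is an isomorphism --- which is where the $\langL$-automatability of $(\setQ,\setT)$ enters --- one extends the identities from the proof of Lemma~\ref{consistencypreserved} to tests in $\setT\cup\alphA\setT$ and obtains $\hypQT(w\symbRight)=\pi_w\circ\monoOne$ for all $w\in\setT\cup\alphA\setT$; equivalently, $o_{\objStates}$ followed by the canonical restriction $\prod_{\alphA^*}\langL(\objOut)\to\prod_{\setT\cup\alphA\setT}\langL(\objOut)$ equals $\monoOne$. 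As $\monoOne\in\Mono$, the cancellation property of $\Mono$ (\cite[Proposition~14.9(2)]{joyofcats}) gives $o_{\objStates}\in\Mono$. Hence $\autoInit(\langL)\xrightarrow{r}\hypQT\xrightarrow{o}\autoFinal(\langL)$ is an $(\Epi,\Mono)$-factorization of the unique arrow $\autoInit(\langL)\to\autoFinal(\langL)$, so by Corollary~\ref{cor:min} and uniqueness of factorizations $\hypQT\cong\objMin(\langL)$ in $\catAuto(\langL)$, as claimed.

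I expect the only genuinely non-formal part to be the diagram bookkeeping of the middle two paragraphs: pinning down the universal maps $r_{\objStates}$ and $o_{\objStates}$ through the coherence diagrams of $\catIQT$ and the factorization of the minimal biautomaton, so that they collapse respectively to $\epiOne$ and, after the canonical restriction, to $\monoOne$. Everything else is formal; in particular the converse divisibility, $\objMin(\langL)$ dividing $\hypQT$, is automatic from minimality and plays no role, and the argument uses no noetherianity assumption --- that hypothesis is needed only for termination. An alternative route would invoke Lemma~\ref{dividemin}: one checks that $\hypQT(\objStates)=\fact{\setQ,\setT\cup\alphA\setT}$ $(\Epi,\Mono)$-divides $\fact{\alphA^*,\alphA^*}=\objMin(\langL)(\objStates)$ via the span $\fact{\setQ,\setT\cup\alphA\setT}\twoheadleftarrow\fact{\setQ,\alphA^*}\rightarrowtail\fact{\alphA^*,\alphA^*}$ from Lemma~\ref{lem:factQT-properties}(1,2), lifts this to $\catAuto(\langL)$, and concludes once $\hypQT$ is known to be $(\Epi,\Mono)$-noetherian; this route, however, additionally requires noetherian objects to be closed under $\Epi$-quotients and $\Mono$-subobjects, so the direct argument above is preferable.
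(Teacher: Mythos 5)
Your proof is correct, but it takes a genuinely different route from the paper's. You show directly that the returned hypothesis automaton is reachable and observable: the unique morphisms $r\colon\autoInit(\langL)\Rightarrow\hypQT$ and $o\colon\hypQT\Rightarrow\autoFinal(\langL)$ have their $\objStates$-components in $\Epi$ and $\Mono$ respectively, because precomposing $r_{\objStates}$ with the canonical inclusion $\coprod_{\setQ}\langL(\objIn)\to\coprod_{\alphA^*}\langL(\objIn)$ yields $\epiOne$ and postcomposing $o_{\objStates}$ with the canonical restriction onto $\prod_{\setT\cup\alphA\setT}\langL(\objOut)$ yields $\monoOne$, after which the cancellation properties of $\Epi$ and $\Mono$ conclude. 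Hence $\hypQT$ is itself an $(\Epi,\Mono)$-factorization of the unique arrow $\autoInit(\langL)\to\autoFinal(\langL)$ and so is $\objMin(\langL)$ up to isomorphism. The paper argues instead by divisibility: $\objMin(\langL)$ "divides" $\hypQT$ in $\catAuto(\langL)$; precomposition with $\iStar$ turns this into a division of the minimal biautomaton $\hypQT\circ\iStar$ by the "$(\Epi,\Mono)$-noetherian" object $\objMin(\langL)\circ\iStar$, and Lemma~\ref{dividemin} then forces the two to be isomorphic. The trade-off is clear: the paper's route consumes the noetherianity hypothesis a second time (beyond termination), while yours establishes correctness without it; in exchange, yours needs the explicit diagram chase identifying the two mediating components, which you carry out correctly from the identities in the proof of Lemma~\ref{consistencypreserved} and the naturality squares of Theorem~\ref{thm:minimal-QT-biautomaton}. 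One small remark: your closing sketch of the "alternative route" is close to, but not quite, what the paper actually does --- the paper divides in the opposite direction, with the noetherian $\objMin(\langL)\circ\iStar$ dividing the minimal biautomaton $\hypQT\circ\iStar$, so no closure of noetherian objects under "$\Epi$-quotients" and "$\Mono$-subobjects" is required there.
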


\begin{proof}
  Let $(\setQ,\setT)$ be the last couple produced by the algorithm.
	
  In $\catAuto(\langL)$, the "minimal" $\objMin(\langL)$ divides any
  other object, the "hypothesis automaton" $\hypQT$ too; so,
  $\objMin(\langL)$ is a quotient of a certain "automaton" $\autA$
  being subobject of $\hypQT$.
	
  Now, we can consider all these automata as particular "biautomata",
  taking their precomposition with the functor $\iStar$ (see
  Lemma~\ref{lem:input-cats}).
	
  In $\catAuto(\langLQT)$, it is easy to check that
  $\objMin(\langL) \circ \iStar$ is "$(\Epi,\Mono)$-noetherian" and
  divides $\hypQT \circ \iStar$ by means of $\autA \circ \iStar$
  itself.
	
  In addition, $\hypQT \circ \iStar$ is clearly the "minimal
  biautomaton".
	
  Lemma~\ref{dividemin} guarantees that $\hypQT \circ \iStar$ and
  $\objMin(\langL) \circ \iStar$ are isomorphic, so they are also
  isomorphic as "hypothesis automata", that is, as objects in
  $\catAuto(\langL)$.
\end{proof}

\section{The optimized algorithm}
\label{sec:opt-algorithm}

The aim of this section is to prove
Theorem~\ref{thm:optimized-algorithm}, that is, the termination and
correctness of the following optimized algorithm. Recall that the
canonical morphisms between the factorizations appearing on
lines~\ref{line5opt} and~\ref{line8opt} of
Algorithm~\ref{algorithm:optimized} are as defined in
Lemma~\ref{lem:factQT-properties}.

\begin{algorithm}
  \caption{The optimized \FunL learning algorithm}
  \label{algorithm:optimized}
  \SetKwInOut{Input}{input}\SetKwInOut{Output}{output}
  \Input{minimally adequate teacher of the "target language" $\langL$}
  \Output{$\objMin(\langL)$}
  $\setQ:=\setT:=\left \{ {\varepsilon} \right \}$
	
  \Repeat{the answer is yes}{ \While{$(\setQ,\setT)$ is not
      "$\langL$-automatable"}{ \If{$\symbEpsilonMin \notin \Epi$}{add
        $qa \in \setQ\alphA$ s.t.
        $\factQT(\langL) \rightarrowtail \factQT[\setQ \cup \left \{qa
        \right \},\setT](\langL)$ is not an isomorphism to
        $\setQ$ \label{line5opt}}
      \If{$\symbEpsilonMin \notin \Mono$}{add $at \in \alphA \setT$
        s.t.
        $\factQT(\langL) \twoheadleftarrow \factQT[\setQ,\setT \cup
        \left \{at \right \}](\langL)$ is not an isomorphism to
        $\setT$ \label{line8opt}}} ask an "equivalence query" for the
    "hypothesis automaton" $\hypQT$
		
    \If{the answer is no}{add the provided "counterexample" and all
      its prefixes to $\setQ$} }
	
  \Return{$\hypQT$}
\end{algorithm}

In Proposition~\ref{prop:term-while-loop}, we have proved the
termination of the while cycle as a consequence of the finiteness of
the chain of arrows not being isomorphisms among the factorizations
$\factQT$ of the different couples.
	
Therefore, as regards the correctness of the optimized algorithm, the
only fact that we have to check is that such a $qa \in \setQ\alphA$ if
$\symbEpsilonMin \notin \Epi$ and such an $at \in \alphA\setT$ if
$\symbEpsilonMin \notin \Mono $ exist.
	
We prove it in the following two lemmas, one being the dual of the
other.
	
\begin{lemma}
  Let $\symbEpsilonMin \notin \Mono$.  There exists a word
  $at \in \alphA\setT$ s.t.
  $\factQT \twoheadleftarrow \factQT[\setQ, \setT \cup \left \{at
  \right \}]$ is not an isomorphism.
\end{lemma}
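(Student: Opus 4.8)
The plan is to argue by contraposition, using Lemma~\ref{lem:epi-mono-fact-of-epsilon} to reformulate the hypothesis. Since $\symbEpsilonMinQT\notin\Mono$, that lemma tells us that the canonical $\Epi$-morphism $e\colon\factQT[\setQ,\setT\cup\alphA\setT]\twoheadrightarrow\factQT$ is not an isomorphism. Hence it suffices to prove the contrapositive of the statement: \emph{if for every $at\in\alphA\setT$ the canonical morphism $e_{at}\colon\factQT[\setQ,\setT\cup\{at\}]\twoheadrightarrow\factQT$ is an isomorphism, then so is $e$.} So I would assume all the $e_{at}$ are isomorphisms and work towards a contradiction with Lemma~\ref{lem:epi-mono-fact-of-epsilon}.

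The first step is a single-letter analysis. Fix $at\in\alphA\setT$ and write $p\colon\coprod_{\setQ}\langL(\objIn)\twoheadrightarrow\factQT$ for the canonical epi. Spelling out the commuting diagram that defines $e_{at}$ (item~\ref{it:can-epi} of Lemma~\ref{lem:factQT-properties}, with $\setT':=\setT\cup\{at\}$) together with Definition~\ref{def:factQT}, the key observation is that, \emph{because $e_{at}$ is invertible}, the ``value on $at$'' morphism $\coprod_{q\in\setQ}\langL(\symbLeft q\,at\,\symbRight)\colon\coprod_{\setQ}\langL(\objIn)\to\langL(\objOut)$ factors through $p$: it equals $\psi_{at}\circ p$, where $\psi_{at}$ is $e_{at}^{-1}$ postcomposed with the $at$-component of the mono $\factQT[\setQ,\setT\cup\{at\}]\rightarrowtail\prod_{\setT\cup\{at\}}\langL(\objOut)$. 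For $w\in\setT$ the analogous factorization $\coprod_{q\in\setQ}\langL(\symbLeft q w\symbRight)=\psi_w\circ p$ is immediate, $\psi_w$ being the $w$-component of the mono $\factQT\rightarrowtail\prod_{\setT}\langL(\objOut)$.

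The second step assembles these. Tupling the $\psi_w$ for $w\in\setT\cup\alphA\setT$ gives a morphism $\Psi\colon\factQT\to\prod_{\setT\cup\alphA\setT}\langL(\objOut)$ with $\Psi\circ p$ equal to the canonical map of Definition~\ref{def:factQT} for the pair $(\setQ,\setT\cup\alphA\setT)$. Postcomposing $\Psi$ with the projection $\prod_{\setT\cup\alphA\setT}\langL(\objOut)\to\prod_{\setT}\langL(\objOut)$ recovers the mono $\factQT\rightarrowtail\prod_{\setT}\langL(\objOut)$, so $\Psi\in\Mono$. Therefore $\coprod_{\setQ}\langL(\objIn)\xrightarrow{p}\factQT\xrightarrow{\Psi}\prod_{\setT\cup\alphA\setT}\langL(\objOut)$ is an $(\Epi,\Mono)$-factorization of the canonical map; by essential uniqueness of factorizations, and of the canonical morphisms of Lemma~\ref{lem:factQT-properties}, the morphism $e$ must be the induced isomorphism, contradicting Lemma~\ref{lem:epi-mono-fact-of-epsilon}. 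Note that this route sidesteps any induction over the (possibly large) set $\alphA\setT$.

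The main obstacle I expect is the single-letter analysis: one has to make genuine use of $e_{at}$ being an isomorphism rather than merely an epimorphism, and keep careful track of the several canonical (co)projections and of the commuting square from item~\ref{it:can-epi} of Lemma~\ref{lem:factQT-properties}. Once the factorization of each ``value on $w$'' morphism through $\factQT$ is in place, the assembly and the appeal to uniqueness of factorizations are routine.
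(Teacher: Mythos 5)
Your overall strategy is the paper's: assume every canonical epi $\factQT[\setQ,\setT\cup\{at\}]\twoheadrightarrow\factQT$ is an isomorphism, use the inverses to assemble a morphism $\Psi\colon\factQT\to\prod_{\setT\cup\alphA\setT}\langL(\objOut)$ whose precomposition with the canonical epi $p\colon\coprod_{\setQ}\langL(\objIn)\twoheadrightarrow\factQT$ is the canonical map defining $\fact{\setQ,\setT\cup\alphA\setT}$, and deduce that the canonical epi $\fact{\setQ,\setT\cup\alphA\setT}\twoheadrightarrow\factQT$ is invertible, contradicting Lemma~\ref{lem:epi-mono-fact-of-epsilon}. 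The single-letter analysis (using $e_{at}^{-1}$ to factor the ``value on $at$'' morphism through $p$) and the assembly of $\Psi$ are correct, and your reduction of the statement to Lemma~\ref{lem:epi-mono-fact-of-epsilon} is exactly the paper's.

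The gap is the assertion that $\Psi\in\Mono$. You justify it by observing that $\can\circ\Psi$ equals the mono $\factQT\rightarrowtail\prod_{\setT}\langL(\objOut)$, where $\can$ is the restriction $\prod_{\setT\cup\alphA\setT}\langL(\objOut)\to\prod_{\setT}\langL(\objOut)$, and then cancelling. But the cancellation property of the right class of a factorization system (the one the paper invokes via \cite[Proposition~14.9(2)]{joyofcats}) requires the \emph{outer} morphism to lie in $\Mono$ (or at least to be a monomorphism): from $g\circ f\in\Mono$ alone one cannot conclude $f\in\Mono$ for an arbitrary factorization system --- for the system $(\mathrm{All},\mathrm{Iso})$ this is outright false, since a split mono composed with its retraction is an isomorphism without the split mono being one. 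The restriction $\can$ between products is in general neither in $\Mono$ nor monic, so your inference does not go through at the level of generality the lemma is stated at (it does happen to hold in the three running instantiations). The repair is what the paper does: take the $(\Epi,\Mono)$-factorization $\Psi=\monoGenPr\circ\epiGenPr$; then $\monoGenPr\circ(\epiGenPr\circ p)$ is an $(\Epi,\Mono)$-factorization of the canonical map for $(\setQ,\setT\cup\alphA\setT)$, so its middle object is $\fact{\setQ,\setT\cup\alphA\setT}$, and the uniqueness of diagonal fill-ins forces $\epiGenPr$ and the canonical epi $\eWhile$ to be mutually inverse. With that single step replaced, your argument coincides with the paper's.
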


\begin{proof}
  We have proved in Lemma~\ref{lem:epi-mono-fact-of-epsilon} that
  $\eWhile \in \Epi\setminus\Mono$, \AP $\intro*\eWhile$ being
  obtained as follows.
  \begin{center}
    \begin{tikzcd}
      \displaystyle\coprod_{\mathclap{\setQ}}\langL(\objIn) \arrow[transform canvas={xshift=0.3ex},-]{d} \arrow[transform canvas={xshift=-0.3ex},-]{d} \arrow[twoheadrightarrow]{r} & \factQT \arrow[rightarrowtail]{r} \arrow[twoheadleftarrow]{d}{\exists!\eWhile} & \displaystyle\prod_{\mathclap{\setT}}\langL(\objOut) \arrow[leftarrow]{d}{\res[\setT \cup \alphA\setT,\setT]}\\
      \displaystyle\coprod_{\mathclap{\setQ}}\langL(\objIn)
      \arrow[twoheadrightarrow]{r} & \factQT[\setQ,\setT \cup \alphA
      \setT] \arrow[rightarrowtail]{r} &
      \displaystyle\prod_{\mathclap{\setT \cup \alphA \setT}}
      \langL(\objOut)
    \end{tikzcd}
  \end{center}
  By contradiction, let's suppose that such a word $at$ does not
  exist.  As a consequence, we have the following situation for all
  $at \in \alphA\setT$.
  \begin{center}
    \begin{tikzcd}
      \displaystyle\coprod_{\mathclap{\setQ}}\langL(\objIn) \arrow[twoheadrightarrow]{r} & \factQT \arrow[rightarrowtail]{r} \arrow[tail]{rd} & \displaystyle\prod_{\mathclap{\setT}}\langL(\objOut) \arrow[leftarrow]{d}{\res[\setT \cup \left \{at \right \},\setT]}\\
      & \factQT[\setQ,\setT \cup \left\{at\right\}] \ar[r,
      tail]\ar[u,dotted,two heads,tail,"\cong" description] &
      \displaystyle\prod_{\mathclap{\setT \cup \left \{at \right \}}}
      \langL(\objOut)
    \end{tikzcd}
  \end{center}
  Using the diagonal arrows in the previous diagram, which are all
  equal when restricted to $\prod_{\setT}\langL(\objOut)$, we obtain
  an arrow from $\factQT$ to the bigger product that makes the
  following diagram commute.
  \begin{center}
    \begin{tikzcd}
      \displaystyle\coprod_{\mathclap{\setQ}}\langL(\objIn) \arrow[twoheadrightarrow]{r} & \factQT \arrow[twoheadrightarrow,swap]{dd}{e'} \arrow[rightarrow,bend right]{rdd} \arrow[rightarrowtail]{r} \arrow[rightarrowtail]{rd} & \displaystyle\prod_{\mathclap{\setT}}\langL(\objOut) \arrow[leftarrow]{d}{\res[\setT \cup \left \{at \right \},\setT]}\\
      & & \displaystyle\prod_{\mathclap{\setT \cup \left \{at \right \}}} \langL(\objOut)\\
      & \factQT[\setQ,\setT \cup \alphA \setT]
      \arrow[rightarrowtail]{r} & \displaystyle\prod_{\mathclap{\setT
          \cup \alphA\setT}}\langL(\objOut)
      \arrow[rightarrow]{u}[swap]{\res[\setT \cup \alphA\setT, \setT
        \cup \left \{at \right \}]}
    \end{tikzcd}
  \end{center}
  Consider the $(\Epi,\Mono)$-factorization of this map, with $e'$
  denoting the arrow in $\Epi$.  It is easy to see that $e'$ is an
  inverse of $\eWhile$, as their composition is uniquely determined by
  the universal property of the factorization system.

  This is a contradiction, as $\eWhile$ is not an isomorphism.
\end{proof}

\begin{lemma}
  Let $\symbEpsilonMin \notin \Epi$.  There exists a word
  $qa \in \setQ\alphA$ s.t.
  $\factQT \rightarrowtail \factQT[\setQ \cup \left \{qa \right
  \},\setT]$ is not an isomorphism.
\end{lemma}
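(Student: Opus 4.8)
The plan is to carry out the formal dual of the proof of the preceding lemma, interchanging $\Epi$ with $\Mono$ and products with coproducts. First I would invoke Lemma~\ref{lem:epi-mono-fact-of-epsilon}: since $\symbEpsilonMin\notin\Epi$, the canonical $\Mono$-morphism $\factQT\rightarrowtail\fact{\setQ\cup\setQ\alphA,\setT}$ lies in $\Mono$ but is not an isomorphism; \AP write $\intro*\mWhile$ for it. By items~\ref{it:can-mono} and~\ref{it:can-epi} of Lemma~\ref{lem:factQT-properties}, applied to the diagram recalled in Lemma~\ref{lem:factQT-and-minimal-automaton}, $\mWhile$ is the bottom edge of the commuting square whose remaining three sides are the canonical epis $\coprod_{\setQ}\langL(\objIn)\twoheadrightarrow\factQT$ and $\coprod_{\setQ\cup\setQ\alphA}\langL(\objIn)\twoheadrightarrow\fact{\setQ\cup\setQ\alphA,\setT}$ together with the canonical coproduct inclusion $\coprod_{\setQ}\langL(\objIn)\to\coprod_{\setQ\cup\setQ\alphA}\langL(\objIn)$.

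Next, arguing by contradiction, I would assume that for every $qa\in\setQ\alphA$ the canonical $\Mono$-morphism $\factQT\rightarrowtail\factQT[\setQ\cup\{qa\},\setT]$ is an isomorphism. Precomposing its inverse with the canonical epi $\coprod_{\setQ\cup\{qa\}}\langL(\objIn)\twoheadrightarrow\factQT[\setQ\cup\{qa\},\setT]$ yields, for each such $qa$, an epi $\coprod_{\setQ\cup\{qa\}}\langL(\objIn)\twoheadrightarrow\factQT$. All of these coincide, along the canonical inclusion of $\coprod_{\setQ}\langL(\objIn)$, with the fixed epi $\coprod_{\setQ}\langL(\objIn)\twoheadrightarrow\factQT$, and the $qa$-component $\langL(\objIn)\to\factQT$ of each is independent of the pair $qa$ used to name a given word of $\setQ\alphA$, since $\factQT\rightarrowtail\prod_{\setT}\langL(\objOut)$ is monic and pins that component down as the unique factorisation of $\prod_{t\in\setT}\langL(\symbLeft qat\symbRight)$ through it. Hence the universal property of $\coprod_{\setQ\cup\setQ\alphA}\langL(\objIn)$ produces a single morphism $g\colon\coprod_{\setQ\cup\setQ\alphA}\langL(\objIn)\to\factQT$ that restricts to the canonical epi on $\coprod_{\setQ}\langL(\objIn)$ and whose composite with $\factQT\rightarrowtail\prod_{\setT}\langL(\objOut)$ is the canonical map $\coprod_{q\in\setQ\cup\setQ\alphA}\prod_{t\in\setT}\langL(\symbLeft qt\symbRight)$ of Definition~\ref{def:factQT}.

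Finally I would take the $(\Epi,\Mono)$-factorisation $\coprod_{\setQ\cup\setQ\alphA}\langL(\objIn)\twoheadrightarrow Z\rightarrowtail\factQT$ of $g$ and postcompose the mono $Z\rightarrowtail\factQT$ with $\factQT\rightarrowtail\prod_{\setT}\langL(\objOut)$; this exhibits an $(\Epi,\Mono)$-factorisation of the defining map of $\fact{\setQ\cup\setQ\alphA,\setT}$, so by uniqueness of factorisations $Z$ is canonically isomorphic to $\fact{\setQ\cup\setQ\alphA,\setT}$, and the mono then supplies a morphism $m'\colon\fact{\setQ\cup\setQ\alphA,\setT}\to\factQT$. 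One checks, exactly as $e'$ was checked to invert $\eWhile$ in the preceding lemma (both composites being uniquely determined by the universal property of the factorisation system), that $m'$ is a two-sided inverse of $\mWhile$; hence $\mWhile$ is an isomorphism, contradicting the first step. The only delicate point I anticipate is the gluing in the middle paragraph --- verifying that the per-summand description of $g$ is well posed when a word of $\setQ\alphA$ is reached through several pairs $qa$ --- which is precisely where the monicity of $\factQT\rightarrowtail\prod_{\setT}\langL(\objOut)$ does the work, dually to the use of the universal property of the product in the proof just given.
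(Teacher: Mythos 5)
Your proof is correct and follows essentially the same route as the paper's: argue by contradiction, use Lemma~\ref{lem:epi-mono-fact-of-epsilon} to get the non-isomorphism $\mWhile$, glue the inverses of the assumed isomorphisms over all $qa\in\setQ\alphA$ into a single map out of $\coprod_{\setQ\cup\setQ\alphA}\langL(\objIn)$, factorize it, and show the resulting mono inverts $\mWhile$ by uniqueness in the factorization system. Your explicit check that the gluing is well posed when a word of $\setQ\alphA$ admits several decompositions $qa$ (via the monicity of $\factQT\rightarrowtail\prod_{\setT}\langL(\objOut)$) is a detail the paper leaves implicit, and it is handled correctly.
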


\begin{proof}
  We have proved in Lemma~\ref{lem:epi-mono-fact-of-epsilon} that
  $\mWhile \in \Mono\setminus\Epi$, \AP $\intro*\mWhile$ being
  obtained as follows.
  \begin{center}
    \begin{tikzcd}
      \displaystyle\coprod_{\mathclap{\setQ}}\langL(\objIn) \arrow[twoheadrightarrow]{r} & \factQT \arrow[rightarrowtail]{r} \arrow[rightarrowtail]{d}{\exists!\mWhile} & \displaystyle\prod_{\mathclap{\setT}}\langL(\objOut) \arrow[transform canvas={xshift=0.3ex},-]{d} \arrow[transform canvas={xshift=-0.3ex},-]{d}\\
      \displaystyle\coprod_{\mathclap{\setQ \cup
          \setQ\alphA}}\langL(\objIn)
      \arrow[leftarrow]{u}{\inc[\setQ,\setQ \cup \setQ\alphA]}
      \arrow[twoheadrightarrow]{r} & \factQT[\setQ \cup
      \setQ\alphA,\setT] \arrow[rightarrowtail]{r} &
      \displaystyle\prod_{\mathclap{\setT}} \langL(\objOut)
    \end{tikzcd}
  \end{center}

  By contradiction, let's suppose that such a word does not exist.  As
  a consequence, we have the following situation for all
  $qa \in \setQ\alphA$.
  \begin{center}
    \begin{tikzcd}
      \displaystyle\coprod_{\mathclap{\setQ}}\langL(\objIn) \arrow[twoheadrightarrow]{r} & \factQT \arrow[rightarrowtail]{r} & \displaystyle\prod_{\mathclap{\setT}}\langL(\objOut)\\
      \displaystyle\coprod_{\mathclap{\setQ \cup \left \{ qa
          \right\}}}\langL(\objIn)
      \arrow[leftarrow]{u}{\inc[\setQ,\setQ \cup \left \{ qa
        \right\}]} \arrow[twoheadrightarrow]{ru}
    \end{tikzcd}
  \end{center}
  We have all we need to build up another arrow starting from the
  bigger coproduct and making the following commute, with $m'$
  denoting the morphism in $\Mono$ of the $(\Epi,\Mono)$-factorization
  of such an arrow.
  \begin{center}
    \begin{tikzcd}
      \displaystyle\coprod_{\mathclap{\setQ}}\langL(\objIn) \arrow[twoheadrightarrow]{r} & \factQT \arrow[rightarrowtail]{r} & \displaystyle\prod_{\mathclap{\setT}}\langL(\objOut)\\
      \displaystyle\coprod_{\mathclap{\setQ \cup \left \{ qa \right\}}}\langL(\objIn) \arrow[leftarrow]{u}{\inc[\setQ,\setQ \cup \left \{ qa \right\}]} \arrow[twoheadrightarrow]{ru}\\
      \displaystyle\coprod_{\mathclap{\setQ \cup
          \setQ\alphA}}\langL(\objIn) \arrow[leftarrow]{u}{\inc[\setQ
        \cup \left \{qa \right \},\setQ \cup \setQ\alphA]}
      \arrow[rightarrow, bend right, xshift=-1ex]{ruu}
      \arrow[twoheadrightarrow]{r} & \factQT[\setQ \cup
      \setQ\alphA,\setT] \arrow[rightarrowtail]{uu}[swap]{m'}
    \end{tikzcd}
  \end{center}
  It is easy to see that $m'$ is an inverse of $\mWhile$, as their
  composition is uniquely determined by the universal property of the
  factorization system.
	
  This is a contradiction, as $\mWhile$ is not an isomorphism.
\end{proof}
\end{document}